%
%
%
%
%
%
%
\documentclass[%
reprint,
amsmath,amssymb,
aps,
]{revtex4-1}

\usepackage{graphicx}
\usepackage{dcolumn}
\usepackage{bm}


\usepackage{subfigure} 

\usepackage{stmaryrd} 
\usepackage{amsthm}  
\usepackage{IEEEtrantools} 
\usepackage{dsfont} 

\usepackage{mathtools} 

\newtheorem{lem}{Lemma}
\newtheorem{theorem}{Theorem}

\newcommand{\Id}[1]{\boldsymbol{\mathbb{I #1}}}

\begin{document}

\preprint{APS/123-QED}

\title{Scattering matrix pole expansions for complex wavenumbers in $R$-matrix theory}

\author{Pablo Ducru}
\email{p\_ducru@mit.edu ; pablo.ducru@polytechnique.org}
\altaffiliation[]{Also from \'Ecole Polytechnique, France. \& Schwarzman Scholars, Tsinghua University, China.}
\affiliation{%
Massachusetts Institute of Technology\\
Department of Nuclear Science \& Engineering\\
77 Massachusetts Avenue, Cambridge, MA, 02139 U.S.A.\\
}%

\author{Vladimir Sobes}%
\email{sobesv@utk.edu}
\affiliation{%
University of Tennessee\\
Department of Nuclear Engineering\\
1412 Circle Drive, Knoxville, TN, 37996, U.S.A.
}%

\author{Gerald Hale}
\email{ghale@lanl.gov}
\author{Mark Paris}%
\email{mparis@lanl.gov}
\affiliation{%
Los Alamos National Laboratory\\
Theoretical Division (T-2)\\
MS B283, Los Alamos, NM 87454 U.S.A.\\
}%

\author{Benoit Forget}
\email{bforget@mit.edu}
\affiliation{%
Massachusetts Institute of Technology\\
Department of Nuclear Science \& Engineering\\
77 Massachusetts Avenue, Cambridge, MA, 02139 U.S.A.\\
}%


%


\date{\today}

\begin{abstract}
In this follow-up article to \cite{Ducru_shadow_Brune_Poles_2020}, we establish new results on scattering matrix pole expansions for complex wavenumbers in R-matrix theory.
In the past, two branches of theoretical formalisms emerged to describe the scattering matrix in nuclear physics: R-matrix theory, and pole expansions. The two have been quite isolated from one another.
Recently, our study of Brune's alternative parametrization of R-matrix theory has shown the need to extend the scattering matrix (and the underlying R-matrix operators) to complex wavenumbers \cite{Ducru_shadow_Brune_Poles_2020}.
Two competing ways of doing so have emerged from a historical ambiguity in the definitions of the shift $\boldsymbol{S}$ and penetration $\boldsymbol{P}$ functions: the legacy Lane \& Thomas ``force closure'' approach, versus analytic continuation (which is the standard in mathematical physics).
The R-matrix community has not yet come to a consensus as to which to adopt for evaluations in standard nuclear data libraries, such as ENDF \cite{ENDFBVIII8th2018brown}. 

In this article, we argue in favor of analytic continuation of R-matrix operators.
We bridge R-matrix theory with the Humblet-Rosenfeld pole expansions, and unveil new properties of the Siegert-Humblet radioactive poles and widths, including their invariance properties to changes in channel radii $a_c$.
We then show that analytic continuation of R-matrix operators preserves important physical and mathematical properties of the scattering matrix -- cancelling spurious poles and guaranteeing generalized unitarity -- while still being able to close channels below thresholds.

\end{abstract}

\maketitle


\section{\label{sec:Introduction}Introduction}

Myriad nuclear interactions have been modeled with R-matrix theory, with applications to many branches of nuclear physics, from nuclear simulation, radiation transport, astrophysics and cosmology, and extending to particle physics or atomistic and molecular simulation \cite{Blatt_and_Weisskopf_Theoretical_Nuclear_Physics_1952}\cite{Lane_and_Thomas_1958}\cite{atomistic_R_matrix_2011}\cite{ Computational_methods_in_R-matrix-2007, polyatomic-molecules-using-R-matrix-Royal-Society,  Burke-R-matrix-theory-of-atomic-processes, electron-molecule-scattering-R-matrix-Burke-1979, QB-method_R-matrix_1996}.
Our current nuclear data libraries are based on R-matrix evaluations (ENDF\cite{ENDFBVIII8th2018brown}, JEFF\cite{JEFF_2020_plompenJointEvaluatedFission2020}, BROND\cite{BROND_2016}, JENDL\cite{JENDL_shibataJENDL4NewLibrary2011}, CENDL\cite{CENDLProjectChinese2017}, TENDL\cite{TENDLkoningModernNuclearData2012, koningTENDLCompleteNuclear2019}). The R-matrix scattering model takes different incoming particle-waves and lets them interact through a given Hamiltonian to produce different possible outcomes.
R-matrix theory studies the particular two-body-in/two-body-out model of this scattering event, with the fundamental assumption that the total Hamiltonian is the superposition of a short-range, interior Hamiltonian, which is null after a given channel radius $a_c$, and a long-range, exterior Hamiltonian which is well known (free particles or Coulomb potential, for instance)\cite{Kapur_and_Peierls_1938, Wigner_and_Eisenbud_1947, Bloch_1957, Lane_and_Thomas_1958}.
R-matrix theory can parametrize the energy dependence of the scattering matrix in different ways.
The Wigner-Eisenbud parametrization is the historical standard, because its parameters are real and well defined, though some are arbitrary (the channel radius $a_c$ and the boundary condition $B_c$).
To remove this dependence on an arbitrary boundary condition $B_c$, the nuclear community has recently been considering shifting nuclear data libraries to an alternative parametrization of R-matrix theory \cite{Boundary_condition_Barker_1972, Brune_2002, anguloMatrixAnalysisInterference2000, Ducru_shadow_Brune_Poles_2020}.

In parallel, a vast literature in mathematical physics and nuclear physics has studied pole expansions of the scattering matrix \cite{Dyatlov, Guillope_1989_ENS, S-matrix-complex_resonance_parameters_Csoto_1997, Favella_and_Reineri_1962, complex_energy_above_inoization_threshold_1969}, starting with the well-known developments by Humblet and Rosenfeld \cite{Theory_of_Nuclear_Reactions_I_resonances_Humblet_and_Rosenfeld_1961, Theory_of_Nuclear_Reactions_II_optical_model_Rosenfeld_1961, Theory_of_Nuclear_Reactions_III_Channel_radii_Humblet_1961_channel_Radii, Theory_of_Nuclear_Reactions_IV_Coulomb_Humblet_1964, Theory_of_Nuclear_Reactions_V_low_energy_penetrations_Jeukenne_1965, Theory_of_Nuclear_Reactions_VI_unitarity_Humblet_1964, Theory_of_Nuclear_Reactions_VII_Photons_Mahaux_1965, Theory_of_Nuclear_Reactions_VIII_evolutions_Rosenfeld_1965}.

In this article, we show in section \ref{sec:Siegert-Humblet pole expansion in radioactive states} how the Siegert-Humblet expansion into radioactive states is the link between R-matrix theory and the scattering matrix pole expansions of Humblet \& Rosenfeld. In the process, we unveil new relations between the radioactive poles and residues and the alternative parametrization of R-matrix theory, and establish for the first time the number of radioactive poles in wavenumber space (theorem \ref{theo::Siegert-Humblet Radioactive Poles}) along with their branch-structure.
Section \ref{sec:Radioactive parameters invariance to channel radii} investigates the invariance properties of Siegert-Humblet radioactive parameters to a change in channel radius $a_c$. We demonstrate in theorem \ref{theo:: Radioactive parameters transformation under change of channel radius} that invariance of the scattering matrix to $a_c$ sets a partial differential equation on the Kapur-Peierls operator $\boldsymbol{R}_{L}$, which in turn enables us to derive explicit transformations of the Siegert-Humblet radioactive widths $\left\{r_{j,c}\right\}$ under a change of channel radius $a_c$.
Section \ref{sec::Scattering matrix continuation to complex energies} considers the continuation of the scattering matrix to complex wavenumbers in R-matrix theory. We establish several new results. We show that the legacy of Lane \& Thomas to force-close channels below threshold not only breaks the analytic properties of the scattering matrix, but also introduces nonphysical spurious poles. Yet we prove that these spurious poles are cancelled-out if one performs analytic continuation of R-matrix operators instead (theorem \ref{theo: Scattering matrix poles are the Siegert-Humblet radioactive poles}). 
We also show that this analytic continuation of R-matrix operators enforces the generalized unitarity conditions described by Eden \& Taylor \cite{Eden_and_Taylor} (theorem \ref{theo::satisfied generalized unitarity condition}).
Finally, in the case of massive particles, we propose a solution to the conundrum of how to close the channels below thresholds, by invoking both a quantum tunneling argument, whereby the transmission matrix is evanescent below threshold (theorem \ref{theo::evanescence of sub-threshold transmission matrix}), and a physical argument based on the definition of the cross section as the ratio of probability currents (theorem \ref{theo::Analytic continuation annuls sub-threshold cross sections}). 
All these results make us argue that, contrary to what Lane \& Thomas prescribed \cite{Lane_and_Thomas_1958}, the R-matrix parametrization should be analytically continued to complex wave-numbers $k_c\in\mathbb{C}$.
These considerations have practical implications on R-matrix evaluation codes, such as EDA \cite{EDA_2008, EDA_2015}, SAMMY \cite{SAMMY_2008}, or AZURE \cite{azumaAZUREMatrixCode2010}, used to build our nuclear data libraries (ENDF\cite{ENDFBVIII8th2018brown}, JEFF\cite{JEFF_2020_plompenJointEvaluatedFission2020}, BROND\cite{BROND_2016}, JENDL\cite{JENDL_shibataJENDL4NewLibrary2011}, CENDL\cite{CENDLProjectChinese2017}, TENDL\cite{TENDLkoningModernNuclearData2012, koningTENDLCompleteNuclear2019}). 
We thus call for analytic continuation of R-matrix operators to become the new standard for nuclear cross section evaluations.

\section{\label{sec:Siegert-Humblet pole expansion in radioactive states}Siegert-Humblet pole expansion in radioactive states}

We here establish new R-matrix theory results concerning the Siegert-Humblet expansion into radioactive states (c.f. sections IX.2.c-d-e p.297-298 in \cite{Lane_and_Thomas_1958}).
These \textit{radioactive parameters} express the energy dependence of the scattering matrix into a simple sum of poles and residues. We show this constitutes the link between R-matrix theory and the scattering matrix pole expansions of Humblet and Rosenfeld \cite{Theory_of_Nuclear_Reactions_I_resonances_Humblet_and_Rosenfeld_1961, Theory_of_Nuclear_Reactions_II_optical_model_Rosenfeld_1961, Theory_of_Nuclear_Reactions_III_Channel_radii_Humblet_1961_channel_Radii, Theory_of_Nuclear_Reactions_IV_Coulomb_Humblet_1964, Theory_of_Nuclear_Reactions_V_low_energy_penetrations_Jeukenne_1965, Theory_of_Nuclear_Reactions_VI_unitarity_Humblet_1964, Theory_of_Nuclear_Reactions_VII_Photons_Mahaux_1965, Theory_of_Nuclear_Reactions_VIII_evolutions_Rosenfeld_1965} (section \ref{sec: Radioactive parameters link R-matrix theory to the scattering matrix pole expansions}). In the wake, we show how to obtain the radioactive parameters (section \ref{sec:R_L def}), link them to the Brune alternative parametrization (section \ref{sec: Level matrix approach to Siegert-Humblet expansion}), reveal their branch structure (theorem \ref{theo::Siegert-Humblet Radioactive Poles} section \ref{sec: Siegert-Humblet Radioactive Pole Expansion Branch Structure}), which emerges from the wavenumber-energy mapping (\ref{eq:rho_c(E) mapping}):
\begin{equation}
\rho_c(E)  \quad \longleftrightarrow \quad E
\label{eq:rho_c(E) mapping}
\end{equation}
where $\rho_c \triangleq a_c k_c$ is the dimensionless wavenumber variable $\boldsymbol{\rho} \triangleq \boldsymbol{\mathrm{diag}}\left(\rho_c\right) $, $a_c$ is the arbitrary channel radius, and $k_c(E)$ is the wavenumber of channel $c$, which is linked to the energy $E$ of the system (eigenvalue of the Hamiltonian of the Sch\"odinger equation) as explained in section II.A. of \cite{Ducru_shadow_Brune_Poles_2020}.


\subsection{\label{sec:R_L def}Definition of Siegert \& Humblet parametrization}

Following the notation of \cite{Ducru_shadow_Brune_Poles_2020} (to which we refer for further explanations), we here recall the essential relation expressing the scattering matrix $\boldsymbol{U}$ as a function of R-matrix operators:
\begin{equation}
\begin{IEEEeqnarraybox}[][c]{rcl}
\boldsymbol{U} & \ = \ & \boldsymbol{O}^{-1}\boldsymbol{I} + 2\mathrm{i} \boldsymbol{\rho}^{1/2} \boldsymbol{O}^{-1} \boldsymbol{R}_{L} \boldsymbol{O}^{-1} \boldsymbol{\rho}^{1/2}
\IEEEstrut\end{IEEEeqnarraybox}
\label{eq:U expression}
\end{equation}
$\boldsymbol{I}$ and $\boldsymbol{O}$ are the incoming and outgoing wavefunctions, which are subject to the following Wronksian condition: for all channel $ c$, $ w_c \triangleq O_c^{(1)}I_c - I_c^{(1)}O_c = 2\mathrm{i} $, or with identity matrix $\Id{}$ (expression (7) in \cite{Ducru_shadow_Brune_Poles_2020}) and denoting $[ \; \cdot \; ]^{(1)}$ the diagonal channel $c$ derivative with respect to $\rho_c$
\begin{equation}
\begin{IEEEeqnarraybox}[][c]{rcl}
\boldsymbol{w} & \ \triangleq \ & \boldsymbol{O}^{(1)}\boldsymbol{I} - \boldsymbol{I}^{(1)}\boldsymbol{O} = 2\mathrm{i} \Id{}
\IEEEstrut\end{IEEEeqnarraybox}
\label{eq:wronksian expression}
\end{equation}
and where $\boldsymbol{R}_{L}$ is the \textit{Kapur-Peierls operator}, defined as (see equation (20) section II.D of \cite{Ducru_shadow_Brune_Poles_2020}):
\begin{equation}
\begin{IEEEeqnarraybox}[][c]{rcl}
\boldsymbol{R}_{L}  & \ \triangleq  \ &  \left[ \Id{} -  \boldsymbol{R}\left(\boldsymbol{L} - \boldsymbol{B} \right) \right]^{-1} \boldsymbol{R}  =   \boldsymbol{\gamma}^\mathsf{T} \boldsymbol{A} \boldsymbol{\gamma}
\label{eq: def Kapur-Peierls operator}
\IEEEstrut\end{IEEEeqnarraybox}
\end{equation}
This Kapur-Peierls $\boldsymbol{R}_{L}$ operator is at the heart of the Siegert-Humblet parametrization, and its study composes a core part of this article. 
The Kapur-Peierls operator $\boldsymbol{R}_{L}$ is a function of the Wigner-Eisenbud R-matrix $\boldsymbol{R}$, parametrized by the \textit{resonance parameters} (energies $\boldsymbol{e} \triangleq \boldsymbol{\mathrm{diag}}\left(E_\lambda \right)$ and widths $\boldsymbol{\gamma} \triangleq \boldsymbol{\mathrm{mat}}\left(\gamma_{\lambda,c}\right)$)
\begin{equation}
\begin{IEEEeqnarraybox}[][c]{rcl}
 \boldsymbol{R}(E) & \; \triangleq \; & \boldsymbol{\gamma}^\mathsf{T} \left(\boldsymbol{e} - E\; \Id{}\right)^{-1}\boldsymbol{\gamma}
\IEEEstrut\end{IEEEeqnarraybox}
\label{eq:R expression}
\end{equation}
as well as the arbitrary boundary condition $\boldsymbol{B} \triangleq \boldsymbol{\mathrm{diag}}\left(B_c\right) $, and the reduced logarithmic derivative of the outgoing wavefunction $\boldsymbol{L} \triangleq \boldsymbol{\mathrm{diag}}\left(L_c\right) $, defined as (c.f. section II.B of \cite{Ducru_shadow_Brune_Poles_2020})
\begin{equation}
    L_c(\rho_c)   \triangleq  \frac{\rho_c}{O_c} \frac{\partial O_c}{\partial \rho_c}
    \label{eq: L operator}
\end{equation}
and which admits the following Mittag-Leffler pole expansion (theorem 1 section II.B of \cite{Ducru_shadow_Brune_Poles_2020}):
\begin{equation}
\begin{IEEEeqnarraybox}[][c]{rcl}
    \frac{L_c(\rho)}{\rho} & = & \frac{- \ell}{\rho} + \mathrm{i} + \sum_{n \geq 1}  \frac{1}{\rho-\omega_n}
\IEEEstrut\end{IEEEeqnarraybox}
\label{eq::Explicit Mittag-Leffler expansion of L_c}
\end{equation}
where $\left\{\omega_n \right\}$ are the roots of the $O_c(\rho)$ outgoing wavefunctions: $\forall n , \; O_c(\omega_n) = 0$ (reported in table II of \cite{Ducru_shadow_Brune_Poles_2020} for neutral particles). \\
Equivalently, the Kapur-Peierls operator $\boldsymbol{R}_{L}$ can be expressed with the level matrix $\boldsymbol{A}$  (see equations (17) and (18) of section II.C of \cite{Ducru_shadow_Brune_Poles_2020}):
\begin{equation}
\begin{IEEEeqnarraybox}[][c]{rcl}
\boldsymbol{A^{-1}} & \ \triangleq \ & \boldsymbol{e} - E\Id{} - \boldsymbol{\gamma}\left( \boldsymbol{L} - \boldsymbol{B} \right)\boldsymbol{\gamma}^\mathsf{T}
\IEEEstrut\end{IEEEeqnarraybox}
\label{eq:inv_A expression}
\end{equation}

All these R-matrix operators are functions of the wavenumbers $k_c(E)$ (or their corresponding dimensionless wavenumber variable $\boldsymbol{\rho} \triangleq \boldsymbol{\mathrm{diag}}\left(\rho_c\right) $).
The Siegert-Humblet pole expansion in radioactive states consists of analytically continuing the Kapur-Peierls $\boldsymbol{R}_{L}$ operator to complex wavenumbers $k_c \in \mathbb{C}$, thereby becoming a locally meromorphic operator.
The poles of this meromorphic operator can be assumed to have a Laurent expansion of order one (i.e. simple poles), as we will discuss in section \ref{subsubsec:: Semi-simple poles in R-matrix theory}.
Since the Kapur-Peierls $\boldsymbol{R}_{L}$ operator is complex-symmetric, its residues at any given pole value $\mathcal{E}_j \in \mathbb{C}$ are also complex-symmetric. For non-degenerate eigenvalues $\mathcal{E}_j \in \mathbb{C}$, the corresponding residues are rank-one and expressed as $\boldsymbol{r_j}\boldsymbol{r_j}^\mathsf{T}$, while for degenerate eigenvalues $\mathcal{E}_j \in \mathbb{C}$ of multiplicity $M_j$, the corresponding residues are rank-$M_j$ and expressed as $ \sum_{m=1}^{M_j}\boldsymbol{r_{j}^m}{\boldsymbol{r_{j}^m}}^\mathsf{T}$.
On a given domain, the Mittag-Leffler theorem \cite{Mittag-Leffler_1884, Pacific_Journal_Mittag_Leffler_and_spectral_theory_1960} then states that $\boldsymbol{R}_{L}$ locally takes the form, in the vicinity $\mathcal{W}(E)$ (neighborhood) of any complex energy $E\in\mathbb{C}$ away from the branch points (threshold energies $E_{T_c}$) of mapping (\ref{eq:rho_c(E) mapping}), of a sum of poles and residues and a holomorphic entire part $\boldsymbol{\mathrm{Hol}}_{\boldsymbol{R}_{L}}(E)$:
\begin{equation}
\boldsymbol{R}_{L}(E) \underset{\mathcal{W}(E)}{=} \sum_{j\geq 1} \frac{\sum_{m=1}^{M_j}\boldsymbol{r_{j}^{m}}{\boldsymbol{{r_{j}^{m}}}}^\mathsf{T}}{E - \mathcal{E}_j} + \boldsymbol{\mathrm{Hol}}_{\boldsymbol{R}_{L}}(E)
\label{eq::RL Mittag Leffler degenerate state}
\end{equation}
or, in the particular (but most common) case where $\mathcal{E}_j$ is a non-degenerate eigenvalue (with multiplicity $M_j=1)$,
\begin{equation}
\boldsymbol{R}_{L}(E) \underset{\mathcal{W}(E)}{=} \sum_{j\geq 1} \frac{\boldsymbol{r_j}\boldsymbol{r_j}^\mathsf{T}}{E - \mathcal{E}_j} + \boldsymbol{\mathrm{Hol}}_{\boldsymbol{R}_{L}}(E)
\label{eq::RL Mittag Leffler}
\end{equation}
This is the Siegert-Humblet expansion into so-called \textit{radioactive states} \cite{Siegert_1939,Breit_radioactive_1940, Radioactive_Mahaux_1969, Eigenchannel_Treatment_Of_R_Matrix_Theory_1997,Radioactive_states_Roumania_2012}
--- equivalent to equation (2.16) of section IX.2.c. in \cite{Lane_and_Thomas_1958} where we have modified the notation for greater consistency ($\mathcal{E}_j$ corresponds to $H_\lambda$ of \cite{Lane_and_Thomas_1958} and $\boldsymbol{r_j}$ corresponds to $\boldsymbol{\omega_\lambda}$) since there are more complex poles $\mathcal{E}_j$ than real energy levels $E_\lambda$. The Siegert-Humblet parameters are then the poles $\left\{\mathcal{E}_j\right\}$ and residue widths $\left\{\boldsymbol{r_j}\right\}$ of this complex resonance expansion of the Kapur-Peierls operator $\boldsymbol{R}_{L}$.

The Gohberg-Sigal theory \cite{Gohberg_Sigal_1971} provides a method for calculating these poles and residues by solving the following generalized eigenvalue problem --- which we call \textit{radioactive problem}:
\begin{equation}
\left.\boldsymbol{R}_{L}^{-1}(E)\right|_{E = \mathcal{E}_j} \boldsymbol{q_j} = \boldsymbol{0}
\label{eq:R_L radioactive problem}
\end{equation}
that is finding the poles $\big\{\mathcal{E}_j\big\}$ of the Kapur-Peierls operator, $\boldsymbol{R}_{L}$, and their associated eigenvectors $\left\{\boldsymbol{q_j}\right\}$. The poles are complex and usually decomposed as:
\begin{equation}
\mathcal{E}_j \triangleq E_j - \mathrm{i}\frac{\Gamma_j}{2}
\label{eq:E_j pole def}
\end{equation}
It can be shown (c.f. discussion section IX.2.d pp.297--298 in \cite{Lane_and_Thomas_1958}, or section 9.2 eq. (9.11) in \cite{Theory_of_Nuclear_Reactions_I_resonances_Humblet_and_Rosenfeld_1961}) that fundamental physical properties (conservation of probability, causality and time reversal) ensure that the poles reside either on the positive semi-axis of purely-imaginary $k_c \in \mathrm{i}\mathbb{R}_+$ -- corresponding to bound states for real sub-threshold energies, i.e. $E_j < E_{T_c} $ and $\Gamma_j = 0$ -- or that all the other poles are on the lower-half $k_c$ plane, with $\Gamma_j > 0$, corresponding to ``resonance'' or ``radioactively decaying'' states. All poles enjoy the specular symmetry property: if $k_c \in \mathbb{C}$ is a pole of the Kapur-Peierls operator, then $-k_c^*$ is too.

Let $M_j = \mathrm{dim}\left( \mathrm{Ker} \left( \boldsymbol{R}_{L}^{-1}(\mathcal{E}_j) \right) \right) $ be the dimension of the nullspace of the inverse of the Kapur-Peierls operator at pole value $\mathcal{E}_j$ -- that is $M_j$ is the geometric multiplicity.
We can thus write $\mathrm{Ker} \left( \boldsymbol{R}_{L}^{-1}(\mathcal{E}_j) \right) = \mathrm{Span}\left( \boldsymbol{q_{j}^{1}}, \hdots, \boldsymbol{q_{j}^{m}},\hdots, \boldsymbol{q_{j}^{M_j}} \right)$.
As we discuss in section \ref{subsubsec:: Semi-simple poles in R-matrix theory}, it is physically reasonable to assume that the geometric and algebraic multiplicities are equal (semi-simplicity condition), which entails a Laurent development of order one for the poles -- i.e. no higher powers of $\frac{1}{E - \mathcal{E}_j}$ in expansion (\ref{eq::RL Mittag Leffler degenerate state}).
Since $\boldsymbol{R}_{L}$ is complex-symmetric, if we assume we can find non-quasi-null eigenvectors solutions to (\ref{eq:R_L radioactive problem}) -- that is $\forall \; (j,m) \; , \; \; {\boldsymbol{q_{j}^{m}}}^\mathsf{T} \boldsymbol{q_{j}^{m}} \neq 0 $ so it is non-defective \cite{Craven_complex_symmetric_1969, Nondefective_complex_symmetric_matrices_1985, complex_symmetric_matrix_SVD_1988, Scott_complex_symmetric_1993, fast_diag_of_complex_symmetric_matrices_for_quantum_applications_1997, Complex_symmetric_operators_2005, Complex_symmetric_operators_II_2007} -- then Gohberg-Sigal theory can be adapted to the case of complex-symmetric matrices to normalize the rank-$M_j$ residues of $\boldsymbol{R}_{L}$ matrix as:
\begin{equation}
\sum_{m=1}^{M_j} \boldsymbol{r_{j}^{m}} {\boldsymbol{r_{j}^{m}}}^\mathsf{T}  =  \sum_{m=1}^{M_j} \frac{\boldsymbol{q_{j}^{m}}{\boldsymbol{q_{j}^{m}}}^\mathsf{T}}{{\boldsymbol{q_{j}^{m}}}^\mathsf{T} \left( \left. { \frac{\partial \boldsymbol{R}_{L}^{-1}}{\partial E} }\right|_{E=\mathcal{E}_j} \right) {\boldsymbol{q_{j}^{m}}}}
\label{eq:R_L residues for degenerate case}
\end{equation}
The residue widths $\left\{\boldsymbol{r_{j}^{m}}\right\}$, here called \textit{radioactive widths}, can thus directly be expressed as:
\begin{equation}
\boldsymbol{r_{j}^{m}}  = \frac{\boldsymbol{q_{j}^{m}}}{\sqrt{{\boldsymbol{q_{j}^{m}}}^\mathsf{T} \left( \left. { \frac{\partial \boldsymbol{R}_{L}^{-1}}{\partial E} }\right|_{E=\mathcal{E}_j} \right) \boldsymbol{q_{j}^{m}}}}
\label{eq:R_L residues widths}
\end{equation}
where $\left. { \frac{\partial \boldsymbol{R}_{L}^{-1}}{\partial E} }\right|_{E=\mathcal{E}_j} $ can readily be calculated by means of the following property:
\begin{equation}
\left. { \frac{\partial \boldsymbol{R}_{L}^{-1}}{\partial E} }\right|_{E=\mathcal{E}_j} = \frac{\partial \boldsymbol{R}^{-1} }{\partial E} (\mathcal{E}_j) - \frac{\partial \boldsymbol{L} }{\partial E} (\mathcal{E}_j)
\end{equation}
where the R-matrix $\boldsymbol{R}$ is invertible at the radioactive poles $\left\{\mathcal{E}_j\right\}$, with
\begin{equation}
\frac{\partial \boldsymbol{R}^{-1} }{\partial E} (E) = - \boldsymbol{R}^{-1} \boldsymbol{\gamma}^\mathsf{T} \left(\boldsymbol{e} - E\Id{}\right)^{-2} \boldsymbol{\gamma} \boldsymbol{R}^{-1}
\end{equation}
In practice, we are most often presented with non-degenerate states where $M_j = 1$, meaning the kernel is an eigenline $\mathrm{Ker} \left( \boldsymbol{R}_{L}^{-1}(\mathcal{E}_j) \right) = \mathrm{Span}\left( \boldsymbol{q_{j}} \right)$, which entails rank-one residues normalized as:
\begin{equation}
\boldsymbol{r_j} \boldsymbol{r_j}^\mathsf{T}  = \frac{\boldsymbol{q_j}\boldsymbol{q_j}^\mathsf{T}}{\boldsymbol{q_j}^\mathsf{T} \left( \left. { \frac{\partial \boldsymbol{R}_{L}^{-1}}{\partial E} }\right|_{E=\mathcal{E}_j} \right) \boldsymbol{q_j}}
\label{eq:R_L residues}
\end{equation}
or equivalently
\begin{equation}
\boldsymbol{r_j}^\mathsf{T} \left( \left. { \frac{\partial \boldsymbol{R}_{L}^{-1}}{\partial E} }\right|_{E=\mathcal{E}_j} \right) \boldsymbol{r_j} = 1 
\label{eq:R_L residues normalization}
\end{equation}
thus for clarity of reading and without loss of generality, we henceforth drop the superscript ``$m$'' and summation over the multiplicity, unless it is of specific interest.

The \textit{radioactive poles}, $\left\{\mathcal{E}_j\right\}$, and \textit{radioactive widths}, $\left\{\boldsymbol{r_{j}^{m}} = \left[ r_{{j,c_1}}^{m}, \hdots, r_{{j,c}}^m , \hdots , r_{{j,c_{N_c}}}^{m} \right]^\mathsf{T} \right\}$, are the Siegert-Humblet parameters. They are complex and locally untangle the energy dependence into an expansion sum of poles and residues (\ref{eq::RL Mittag Leffler degenerate state}).
Additional discussion on these poles and residues can be found in \cite{Lane_and_Thomas_1958}, sections IX.2.c-d-e p.297-298, or in \cite{Siegert, Breit_radioactive_1940, Radioactive_Mahaux_1969, Radioactive_states_Roumania}.

The Kapur-Peierls matrix $\boldsymbol{R}_{L}$ is invariant to a change in boundary conditions $B_c$ --- c.f. equations (25) and (26) of section II.F of \cite{Ducru_shadow_Brune_Poles_2020} --- this entails the radioactive poles $\left\{\mathcal{E}_j\right\}$ and widths $\left\{\boldsymbol{r_j}\right\}$ are independent of the boundary condition $B_c$.

\subsection{\label{sec: Level matrix approach to Siegert-Humblet expansion}Level matrix $\boldsymbol{A}(E)$ approach to Siegert \& Humblet expansion}

An alternative approach to calculating the Siegert-Humblet parameters $\left\{a_c,\mathcal{E}_j,r_{j,c}^m,E_{T_c}\right\}$ from the Wigner-Eisenbud ones $\left\{a_c, B_c, \gamma_{\lambda,c}, E_\lambda, E_{T_c} \right\}$ is through the level matrix $\boldsymbol{A}$.
We search for the poles and eigenvectors of the level matrix operator $\boldsymbol{A}$:
\begin{equation}
\left.\boldsymbol{A}^{-1}(E)\right|_{E=\mathcal{E}_j}\boldsymbol{b_j} = \boldsymbol{0}
\label{eq::invA det roots}
\end{equation}
from (\ref{eq:inv_A expression}), this means solving for the eigenvalues $\left\{\mathcal{E}_j\right\}$ and associated eigenvectors $\left\{\boldsymbol{b_j}\right\}$ that satisfy:
\begin{equation}
\begin{IEEEeqnarraybox}[][c]{rCl}
\left[\boldsymbol{e} - \boldsymbol{\gamma} \left( \boldsymbol{L}(\mathcal{E}_j) - \boldsymbol{B} \right)\boldsymbol{ \gamma}^\mathsf{T} \right]\boldsymbol{b_j} = \mathcal{E}_j\boldsymbol{b_j}
\IEEEstrut\end{IEEEeqnarraybox}
\label{eq: invA eigenproblem for RL poles}
\end{equation}
This problem is analogous to the alternative parametrization of R-matrix theory, but replacing the shift factor $\boldsymbol{S}$ with the outgoing-wave reduced logarithmic derivative $\boldsymbol{L}$ (c.f.  \cite{Ducru_shadow_Brune_Poles_2020}).

Again, the same hypotheses as for the Kapur-Peierls operator $\boldsymbol{R}_{L}$ in section \ref{sec:R_L def} allow us to adapt the Gohberg-Sigal theory to the case of complex-symmetric operators to yield the following local Mittag-Leffler expansion of the level matrix (with normalized residues):
\begin{equation}
\boldsymbol{A}(E) \underset{\mathcal{W}(E)}{=} \sum_{j\geq 1} \frac{ \sum_{m=1}^{M_j}\boldsymbol{a_{j}^{m}}{\boldsymbol{a_{j}^{m}}}^\mathsf{T}}{E - \mathcal{E}_j} + \boldsymbol{\mathrm{Hol}}_{\boldsymbol{A}}(E)
\label{eq::A Mittag Leffler}
\end{equation}
In the most frequent case of non-degenerate eigenvalues to (\ref{eq::invA det roots}), this yields rank-one residues as:
\begin{equation}
\boldsymbol{A}(E) \underset{\mathcal{W}(E)}{=} \sum_{j\geq 1} \frac{ \boldsymbol{a_j}\boldsymbol{a_j}^\mathsf{T}}{E - \mathcal{E}_j} + \boldsymbol{\mathrm{Hol}}_{\boldsymbol{A}}(E)
\label{eq::A Mittag Leffler non-degenerate}
\end{equation}
Again, under non-quasi-null eigenvectors assumption $ {\boldsymbol{b_{j}^{m}}}^\mathsf{T}\boldsymbol{b_{j}^{m}}\neq 0$, Gohberg-Sigal theory ensures the residues are normalized as:
\begin{equation}
\boldsymbol{a_j^m}{\boldsymbol{a_j^m}}^\mathsf{T}  = \frac{\boldsymbol{b_j^m}{\boldsymbol{b_j^m}}^\mathsf{T}}{{\boldsymbol{b_j^m}}^\mathsf{T} \left( \left. { \frac{\partial \boldsymbol{A}^{-1}}{\partial E} }\right|_{E=\mathcal{E}_j} \right) \boldsymbol{b_j^m}}
\label{eq:A residues }
\end{equation}
which is readily calculable from
\begin{equation}
\frac{\partial \boldsymbol{A}^{-1}}{\partial E} (\mathcal{E}_j) = - \Id{} - \boldsymbol{\gamma} \frac{\partial \boldsymbol{L} }{\partial E} (\mathcal{E}_j) \boldsymbol{\gamma}^\mathsf{T}
\label{eq: level matrix energy derivatives}
\end{equation}
Plugging (\ref{eq::A Mittag Leffler}) into (\ref{eq: def Kapur-Peierls operator}), and invoking the unicity of the complex residues, implies the radioactive widths (\ref{eq:R_L residues widths}) can be obtained as
\begin{equation}
\boldsymbol{r_j^m} = \boldsymbol{\gamma}^\mathsf{T}\boldsymbol{a_j^m}
\label{eq::radioactive widths rj from aj}
\end{equation}

This is an interesting and novel way to define the Siegert-Humblet parameters, which is similar to the alternative parameters definition of \cite{Ducru_shadow_Brune_Poles_2020}.
From this perspective, the alternative parameters appear as a special case that leave the Siegert-Humblet level-matrix parameters invariant to boundary condition $B_c$.
Indeed, one could search for the Siegert-Humblet expansion of the alternative parametrization of R-matrix theory, by simply proceeding as in equation (34) section III.A of \cite{Ducru_shadow_Brune_Poles_2020}, but replacing the level matrix $\boldsymbol{A}$ with the alternative level matrix $\boldsymbol{\widetilde{A}}$ (defined in equations (30) and (33) section III.A of \cite{Ducru_shadow_Brune_Poles_2020}):
\begin{equation}
\left.\boldsymbol{\widetilde{A}}^{-1}(E)\right|_{E=\mathcal{E}_j}\boldsymbol{\widetilde{b_j}} = \boldsymbol{0}
\label{eq::invA det roots on Brune}
\end{equation}
The exact same Gohberg-Sigal procedure can then be applied to the Mittag-Leffler expansion of the alternative level matrix $\boldsymbol{\widetilde{A}}$, in the vicinity $\mathcal{W}(E)$ of $E\in\mathbb{C}$ away from branch points $\left\{E_{T_c}\right\}$,
\begin{equation}
\boldsymbol{\widetilde{A}}(E) \underset{\mathcal{W}(E)}{=} \sum_{j\geq 1} \frac{\sum_{m=1}^{M_j}\boldsymbol{\widetilde{a_j^m}}{\boldsymbol{\widetilde{a_j^m}}}^\mathsf{T}}{E - \mathcal{E}_j} + \boldsymbol{\mathrm{Hol}}_{\boldsymbol{\widetilde{A}}}(E)
\label{eq:Brune A Mittag-Leffler}
\end{equation}
yielding the normalized residue widths:
\begin{equation}
\boldsymbol{\widetilde{a_j^m}} {\boldsymbol{\widetilde{a_j^m}}}^\mathsf{T}  = \frac{\boldsymbol{\widetilde{b_j^m}}{\boldsymbol{\widetilde{b_j^m}}}^\mathsf{T}}{{\boldsymbol{\widetilde{b_j^m}}}^\mathsf{T} \left( \left. { \frac{\partial \boldsymbol{\widetilde{A}}^{-1}}{\partial E} }\right|_{E=\mathcal{E}_j} \right) \boldsymbol{\widetilde{b_j^m}}}
\label{eq:A residues Brune}
\end{equation}
where (\ref{eq: level matrix energy derivatives}) can be combined to equation (33) of \cite{Ducru_shadow_Brune_Poles_2020}:
\begin{equation*}
\boldsymbol{\gamma^\mathsf{T} A \gamma} = \boldsymbol{\widetilde{\gamma}^\mathsf{T} \widetilde{A} \widetilde{\gamma}}
\label{eq:R_L unchanged by Brune}
\end{equation*}
to calculate the energy derivative.
Then, plugging (\ref{eq:A residues Brune}) into the same equation (33) of \cite{Ducru_shadow_Brune_Poles_2020}, we obtain the relation between the alternative R-matrix parameters and the Siegert-Humblet radioactive parameters:
\begin{equation}
\boldsymbol{r_j^m} = \boldsymbol{\widetilde{\gamma}}^\mathsf{T}\boldsymbol{\widetilde{a_j^m}}
\label{eq::radioactive widths rj from aj Brune}
\end{equation}

This relation (\ref{eq::radioactive widths rj from aj Brune}) is especially enlightening when compared to (\ref{eq::radioactive widths rj from aj}) from the viewpoint of invariance to boundary condition $B_c$. Indeed, we explained that the Siegert-Humblet parameters $\left\{\mathcal{E}_j, \boldsymbol{r_j^m}\right\}$ are invariant with a change of boundary condition $B_c \to B_c'$.
This is however not true of the level matrix residue widths $\left\{\boldsymbol{a_j^m}\right\}$ from (\ref{eq:A residues }). Thus, we can formally write this invariance by differentiating (\ref{eq::radioactive widths rj from aj}) with respect to $B_c$ and noting that $\frac{\partial \boldsymbol{r_j^m}}{\partial \boldsymbol{B}} = \boldsymbol{0}$, yielding:
\begin{equation}
\boldsymbol{0} = \frac{\partial \boldsymbol{\gamma}^\mathsf{T}}{\partial \boldsymbol{B}} \boldsymbol{a_j^m} + \boldsymbol{\gamma}^\mathsf{T}\frac{\partial \boldsymbol{a_j^m}}{\partial \boldsymbol{B}}
\label{eq:: invariance from Bc radioactive widths rj and aj}
\end{equation}
This new relation links the variation of the Wigner-Eisenbud resonance widths $\gamma_{\lambda,c}$ at level values $E_\lambda$ (resonance energies) under a change of boundary conditions $B_{c'}$, to the variation of the level matrix residue widths $a_{j,c}^m$ at pole values $\mathcal{E}_j$ under change of boundary condition $B_{c'}$. Since transformations (26) and (27) of section II.F in \cite{Ducru_shadow_Brune_Poles_2020} detail how to perform $\frac{\partial \boldsymbol{\gamma}^\mathsf{T}}{\partial \boldsymbol{B}}$, equation (\ref{eq:: invariance from Bc radioactive widths rj and aj}) could be used to update $\boldsymbol{a_j^m}$ under a change $B_c \to B_{c'}$.

Another telling insight from relation (\ref{eq:: invariance from Bc radioactive widths rj and aj}) is when we apply it to the relation between the alternative parameters and the Siegert-Humblet radioactive widths (\ref{eq::radioactive widths rj from aj Brune}). 
Since the alternative parameters $\boldsymbol{\widetilde{\gamma}}$ are invariant to $B_c$ (that is their main purpose), the same differentiation as in (\ref{eq:: invariance from Bc radioactive widths rj and aj}) now yields zero derivatives,
\begin{equation}
\boldsymbol{0} = \boldsymbol{\widetilde{\gamma}}^\mathsf{T}\frac{\partial \boldsymbol{\widetilde{a_j^m}}}{\partial \boldsymbol{B}}
\label{eq:: invariance from Bc radioactive widths rj and aj Brune}
\end{equation}
This is obvious from the fact that the alternative level matrix $\boldsymbol{\widetilde{A}}$ is invariant under change of boundary condition.
Yet invariance (\ref{eq:: invariance from Bc radioactive widths rj and aj Brune}) is insightful at it presents the alternative parameters $\left\{ \widetilde{E_i} , \widetilde{\gamma} \right\}$ as the ones which, when transformed to Siegert-Humblet radioactive parameters $\left\{\mathcal{E}_j , \boldsymbol{r_{j}} \right\}$ though (\ref{eq::radioactive widths rj from aj Brune}), leave the level residue widths $\left\{\boldsymbol{\widetilde{a_j}}\right\}$ invariant to $B_c$.

Conversely, the Kapur-Peierls pole expansion (\ref{eq::RL Mittag Leffler}) extends the alternative parametrization in that it generates boundary condition $B_c$ independent poles $\left\{\mathcal{E}_j\right\}$ and radioactive widths $\left\{\boldsymbol{r_j}\right\}$ that explicitly invert the alternative level matrix $\boldsymbol{\widetilde{A}}$ to yield (\ref{eq:Brune A Mittag-Leffler}).

\subsection{\label{sec: Siegert-Humblet Radioactive Pole Expansion Branch Structure} Siegert-Humblet Radioactive Pole Expansion Branch Structure}

Section \ref{sec:R_L def} introduced the Siegert-Humblet parametrization as the solutions of radioactive problem (\ref{eq:R_L radioactive problem}), where the $\boldsymbol{R}(E)$ matrix (\ref{eq:R expression}) is a function of the energy $E$, while $\boldsymbol{L}(\boldsymbol{\rho})$ is a function of the dimensionless wavenumber $\rho_c \triangleq a_c k_c(E)$. Thus, radioactive problem (\ref{eq:R_L radioactive problem}) can be solved either in energy space or in momentum space, both of which are linked by the $\rho(E)$ mapping (\ref{eq:rho_c(E) mapping}). 
This mapping induces a multi-sheeted Riemann surface, which introduces branch-points and sheets we now unveil in theorem \ref{theo::Siegert-Humblet Radioactive Poles}.

\begin{theorem}\label{theo::Siegert-Humblet Radioactive Poles} \textsc{Siegert-Humblet Radioactive Pole Expansion Branch Structure}. \\
Let the radioactive poles $\left\{\mathcal{E}_j\right\}$ be the solutions of the radioactive problem (\ref{eq:R_L radioactive problem}), and $\left\{E_{T_c}\right\}$ denote the threshold energies, branch-points of the $\rho_c(E)$ wavenumber-energy mapping (\ref{eq:rho_c(E) mapping}), then:
\begin{itemize}
    \item in the neighborhood $\mathcal{W}(E)$ of any complex energy $E$ away from branch-points $\left\{E_{T_c}\right\}$, there exists a series of complex matrices $\left\{ \boldsymbol{c}_n \right\}$ such that the Mittag-Leffler expansion (\ref{eq::RL Mittag Leffler}) takes the analytic form:
\begin{equation}
\boldsymbol{R}_{L}(E) \underset{\mathcal{W}(E)}{=} \sum_{j\geq 1} \frac{\boldsymbol{r_j}\boldsymbol{r_j}^\mathsf{T}}{E - \mathcal{E}_j} +\sum_{n\geq0} \boldsymbol{c}_n E^n
\label{eq::RL Mittag Leffler expanded}
\end{equation}
    \item the radioactive poles $\left\{\mathcal{E}_j\right\}$ are complex, and live on the multi-sheeted Riemann surface of $k_c(E)$ wavenumber-energy mapping (\ref{eq:rho_c(E) mapping}):
\begin{equation}
\Big\{\mathcal{E}_j, + , +, -, \hdots, +, - \Big\}
\label{eq:: pole E_j sheet reporting}
\end{equation}
    \item Let $N_L$ be the number of solutions to the radioactive problem (\ref{eq:R_L radioactive problem}) in wavenumber $\boldsymbol{\rho}$ space. For neutral particles:
\begin{equation}
N_L = \left(2 N_\lambda  + \sum_{c=1}^{N_c} \ell_c \right)\; \times 2^{(N_{E_{T_c} \neq E_{T_{c'}}} - 1 ) }
\label{eq::NL number of poles}
\end{equation}
    \item For charged particles, there is an infinite number (countable) of radioactive poles: $N_L = \infty$.
\end{itemize}
\end{theorem}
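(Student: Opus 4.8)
\emph{Proof sketch.} The plan is to reduce all four claims to the scalar \emph{radioactive determinant} $\det\boldsymbol{A}^{-1}(E)$, whose zeros are generically the radioactive poles $\left\{\mathcal{E}_j\right\}$: since $\boldsymbol{R}_{L}=\boldsymbol{\gamma}^\mathsf{T}\boldsymbol{A}\boldsymbol{\gamma}$ (\ref{eq: def Kapur-Peierls operator}), a solution of (\ref{eq::invA det roots}) survives as a pole of $\boldsymbol{R}_{L}$ unless its residue is annihilated by $\boldsymbol{\gamma}^\mathsf{T}$, a non-generic event we exclude. For the \emph{first bullet}: away from the thresholds $\left\{E_{T_c}\right\}$, $\boldsymbol{R}(E)$ is rational in $E$ (\ref{eq:R expression}) and each $L_c$ is holomorphic except at the roots $\left\{\omega_n\right\}$ of $O_c$ through the mapping (\ref{eq:rho_c(E) mapping}); hence $\boldsymbol{R}_{L}$ is meromorphic on $\mathcal{W}(E)$ with poles of Laurent order one under the semi-simplicity hypothesis of section \ref{subsubsec:: Semi-simple poles in R-matrix theory}, so (\ref{eq::RL Mittag Leffler}) holds and its holomorphic remainder, being analytic on $\mathcal{W}(E)$, has a convergent Taylor series $\sum_{n\geq0}\boldsymbol{c}_n E^n$, which is (\ref{eq::RL Mittag Leffler expanded}).

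For the \emph{second bullet}, I would note that for neutral massive particles $k_c(E)=\sqrt{(2\mu_c/\hbar^2)(E-E_{T_c})}$ is two-valued with a square-root branch point at $E_{T_c}$, channels sharing a threshold sharing the same branch; the natural domain of $\boldsymbol{R}_{L}$ is thus the compact Riemann surface $\Sigma$ obtained by normalizing the projective closure of the fibre product over $\mathbb{P}^1_E$ of the $N_T$ double covers $y_t^2=E-E_{T_t}$, where $N_T$ is the number $N_{E_{T_c}\neq E_{T_{c'}}}$ of distinct thresholds. The covering $\Sigma\to\mathbb{P}^1_E$ has degree $2^{N_T}$, and a sheet is labelled by a sign string $(\epsilon_1,\dots,\epsilon_{N_c})$, constant on each threshold class, which is exactly the datum $\left\{\mathcal{E}_j,+,+,-,\dots\right\}$ attached to a pole in (\ref{eq:: pole E_j sheet reporting}); the physical constraints recalled before the theorem then place bound-state poles on the all-$+$ sheet with $k_c\in\mathrm{i}\mathbb{R}_+$ and every other pole on a sheet carrying at least one $-$ with $\mathrm{Im}\,k_c<0$, the specular symmetry $k_c\mapsto-k_c^\ast$ pairing sheets and poles, the branch dependence of $L_c$ being read off from (\ref{eq::Explicit Mittag-Leffler expansion of L_c}).

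For the \emph{third bullet}, I would use that for neutral particles $O_c(\rho)=e^{\mathrm{i}\rho}\times(\text{polynomial of degree }\ell_c)$, so $O_c$ has exactly $\ell_c$ nonzero roots and $L_c(\rho)=-\ell_c+\mathrm{i}\rho+\sum_{n=1}^{\ell_c}\rho/(\rho-\omega_n)$ is rational, whence $\det\boldsymbol{A}^{-1}$ is meromorphic on the compact $\Sigma$. Counting its zeros by clearing denominators, set
\[
\mathcal{N}(E)\ \triangleq\ \det\boldsymbol{A}^{-1}(E)\ \prod_{c=1}^{N_c}\prod_{n=1}^{\ell_c}\big(\rho_c-\omega_n\big),
\]
which (the $\rho_c$ regarded as functions of $E$ on $\Sigma$) is holomorphic on $\Sigma$ off the points over $E=\infty$. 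Over $E=\infty$ the covering has exactly $2^{N_T-1}$ points, each of ramification index $2$; since $\boldsymbol{A}^{-1}=-E\,\Id{}+O(\sqrt{E})$ one gets $\det\boldsymbol{A}^{-1}\sim(-1)^{N_\lambda}E^{N_\lambda}$, a pole of order $2N_\lambda$ there, while the product of the $(\rho_c-\omega_n)$ adds a pole of order $\sum_c\ell_c$; hence $\mathcal{N}$ extends to $\Sigma$ with all poles over $E=\infty$, of total order $\left(2N_\lambda+\sum_c\ell_c\right)2^{N_T-1}$. Equality of the numbers of zeros and poles of a meromorphic function on a compact Riemann surface, together with the generic cancellation of the (simple) finite poles of $\det\boldsymbol{A}^{-1}$ by the zeros of $\prod_c\prod_n(\rho_c-\omega_n)$, identifies the finite zeros of $\mathcal{N}$ with the radioactive poles and gives $N_L=\left(2N_\lambda+\sum_c\ell_c\right)2^{N_T-1}$, i.e.\ (\ref{eq::NL number of poles}). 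For the \emph{fourth bullet}, the outgoing Coulomb function $O_c=H^{+}_{\ell_c}(\eta_c,\rho_c)$ has infinitely many complex zeros, so the sum in (\ref{eq::Explicit Mittag-Leffler expansion of L_c}) is a genuine infinite series, $L_c$ is meromorphic with infinitely many poles, $\det\boldsymbol{A}^{-1}$ is a nonconstant meromorphic function with infinitely many poles, hence infinitely many zeros — discrete, therefore countable: $N_L=\infty$.

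The hard part will be the third bullet: proving rigorously that the normalized fibre product $\Sigma$ has exactly $2^{N_T-1}$ points over $E=\infty$, each of ramification index $2$, for arbitrary $N_T$ (for $N_T\leq2$ an explicit uniformizer settles it — e.g.\ $u=y_1-y_2$ when $N_T=2$, since $y_1^2-y_2^2$ is constant — but for $N_T\geq3$ the surface has positive genus and the argument must be intrinsic), and then isolating the genericity hypotheses (nonvanishing widths $\gamma_{\lambda,c}$, simple roots $\omega_n$, and mutual distinctness of the threshold energies, the $L_c$-pole energies and the radioactive poles $\left\{\mathcal{E}_j\right\}$) under which no accidental cancellation lowers the count.
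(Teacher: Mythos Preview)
Your proof is correct in outline and takes a genuinely different route from the paper's. For the third bullet, the paper works channel-by-channel in $\rho_c$-space: it invokes an external ``diagonal divisibility and capped multiplicities lemma'' (lemma~3 of \cite{Ducru_shadow_Brune_Poles_2020}) applied to $\det\boldsymbol{R}_L^{-1}$, reads off the degree of the resulting rational function in $\rho_c$ to get $2N_\lambda+\sum_c\ell_c$ zeros ``per sheet'', and then multiplies by the number $2^{N_T-1}$ of squarings needed to unfold the remaining thresholds. You instead pass to the compact Riemann surface $\Sigma$ (the normalized fibre product of the $N_T$ double covers), clear denominators to form a global meromorphic function $\mathcal{N}$, and invoke the degree principle (\#zeros $=$ \#poles on a compact surface) after computing the pole order of $\mathcal{N}$ over $E=\infty$. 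Your ramification count over infinity --- $2^{N_T-1}$ points each of index~$2$, because locally all the $w_t^2=u$ agree up to sign --- is correct, and the resulting pole order $(2N_\lambda+\sum_c\ell_c)\cdot 2^{N_T-1}$ matches (\ref{eq::NL number of poles}).

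What each approach buys: the paper's argument is elementary (polynomial degree counting) but leans entirely on an external lemma and is vague about how the per-sheet count interacts with the sheet structure; your argument is intrinsic and self-contained once one accepts the compact-Riemann-surface machinery, and it makes the role of the threshold structure transparent. Your honest flagging of the genericity hypotheses is apt: the paper itself adds, after the proof, the caveat that when several channels share the same $L_c$ the pole order of $\det\boldsymbol{A}^{-1}$ at $\rho_c=\omega_n$ is capped by $N_\lambda$, so the cancellation against $\prod_{c,n}(\rho_c-\omega_n)$ can fail to be exact in those degenerate cases --- this is precisely the non-generic regime you exclude. One minor point: you might note that the paper states the surface has $2^{N_c}$ sheets but then uses $N_T$ in the formula; your fibre-product construction makes clear it is $2^{N_T}$ sheets, with channels sharing a threshold identified, which is the consistent reading.
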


\begin{proof}
Away from the branch points $\left\{E_{T_c}\right\}$, the holomorphic part of Mittag-Leffler expansion (\ref{eq::RL Mittag Leffler}) can be analytically expanded in series as (\ref{eq::RL Mittag Leffler expanded}) -- we here assumed the non-degenerate case of rank-one residues (multiplicity $M_j = 1$) though it is readily generalizable to (\ref{eq::RL Mittag Leffler degenerate state}). 

When solving radioactive problem (\ref{eq:R_L radioactive problem}), or (\ref{eq: invA eigenproblem for RL poles}), to obtain the Siegert-Humblet poles $\left\{\mathcal{E}_j\right\}$ and residues $\left\{\boldsymbol{r_j}\right\}$, or $\left\{\boldsymbol{a_j}\right\}$, it is necessary to compute the $\boldsymbol{L^0}$ matrix function $\boldsymbol{L^0}(E) \triangleq  \boldsymbol{L^0}(\boldsymbol{\rho}(E)) $ for complex energies $E\in \mathbb{C}$. As discussed in \cite{Ducru_shadow_Brune_Poles_2020} (c.f. sections II.A, B, III. B. and C. of \cite{Ducru_shadow_Brune_Poles_2020}), mapping (\ref{eq:rho_c(E) mapping}) generates a multi-sheeted Riemann surface with $2^{N_c}$ branches (with the threshold values $E_{T_c}$ as branch points), corresponding to the choice for each channel $c$, of the sign of the square root in $\boldsymbol{\rho}(E)$. This means that when searching for the poles, one has to keep track of these choices and specify for each pole $\mathcal{E}_j$ on what sheet it is found. Every pole $\mathcal{E}_j$ must thus come with the full reporting of these $N_c$ signs, i.e. $\left\{\mathcal{E}_j, - , +, +, \hdots, -, + \right\} $ as (\ref{eq:: pole E_j sheet reporting}).

 When searching for these radioactive poles in wavenumber space, the $\boldsymbol{R}_{L}$ Kapur-Peierls operator (\ref{eq: def Kapur-Peierls operator}) is continued to complex wavenumbers by meromorphic continuation of $\boldsymbol{L}(\boldsymbol{\rho})$, where the reduced logarithmic derivative of the outgoing wavefunction (\ref{eq: L operator}) takes the Mittag-Leffler expansion described in equation (13) of theorem 1 section II.B. of \cite{Ducru_shadow_Brune_Poles_2020}. 
There are more radioactive poles $\left\{\mathcal{E}_j\right\}$ than Wigner-Eisenbud levels $\left\{E_\lambda\right\}$ --- as was the case for the alternative parameters (c.f. theorems 2 and 3 III.C of \cite{Ducru_shadow_Brune_Poles_2020}).
For massive neutral particles, we can proceed in an analogous fashion as for the proof of theorem 3 section III.C of \cite{Ducru_shadow_Brune_Poles_2020}, and apply the diagonal divisibility and capped multiplicities lemma (lemma 3 section III.C. of \cite{Ducru_shadow_Brune_Poles_2020}) to the determinant of the Kapur-Peierls operator $\boldsymbol{R}_L$ in (\ref{eq:R_L radioactive problem}) -- but this time in $\rho_c$ space -- and then look at the order of the resulting rational fractions in $\rho_c$ and the number of times one must square the polynomials to unfold all $\rho_c = \mp \sqrt{\cdot}$ sheets of mapping (\ref{eq:rho_c(E) mapping}). We were thus able to establish that the number $N_L$ of poles in wavenumber $\rho$-space is $2 N_\lambda  + \sum_{c=1}^{N_c} \ell_c$ poles per sheet: that is a total of (\ref{eq::NL number of poles}) over all sheets, where $N_{E_{T_c} \neq E_{T_{c'}}}$ designates the number of different thresholds (including the obvious $E_{T_c} = 0$ zero threshold) and thus the number of sheets.

In the charged particles case, $L_c(\rho_c)$ has a countably infinite number of poles, which in turn induces an infinite number (countable) of solutions to the radioactive problem (\ref{eq:R_L radioactive problem}) (c.f. discussion after theorem 1 section II.B of \cite{Ducru_shadow_Brune_Poles_2020}).

\end{proof}

It is important to grasp the meaning of the Mittag-Leffler expansion (\ref{eq::RL Mittag Leffler}) --- or (\ref{eq::A Mittag Leffler}) and (\ref{eq:Brune A Mittag-Leffler}). These are local expressions: they do not hold for all complex energies $E\in \mathbb{C}$ because of the branch-point structure of the Riemann sheet. However, in the neighborhood $\mathcal{W}(E)$ of any complex energy point $E\in\mathbb{C}$ away from the branch-points (thresholds $\left\{E_{T_c}\right\}$), the Mittag-Leffler expansion (\ref{eq::RL Mittag Leffler}) is true, and its holomorphic part admits an analytic expansion $\boldsymbol{\mathrm{Hol}}_{\boldsymbol{R}_{L}}(E) \triangleq \sum_{n\geq0} \boldsymbol{c}_n E^n$.
This has two major consequences for the Siegert-Humblet expansion.
First, contrarily to the alternative parameters $\left\{\widetilde{E_i}, \widetilde{\gamma_{i,c}}\right\}$ discussed in \cite{Ducru_shadow_Brune_Poles_2020}, the Siegert-Humblet set of radioactive poles and widths $\left\{\mathcal{E}_j , r_{j,c} \right\}$ do not suffice to uniquely determine the energy behavior of the scattering matrix $\boldsymbol{U}(E)$: one needs to locally add the expansion coefficients $\left[\boldsymbol{c}_n\right]_{c,c'}$ of the entire part $\boldsymbol{\mathrm{Hol}}_{\boldsymbol{R}_{L}}(E) \triangleq \sum_{n\geq0} \boldsymbol{c}_n E^n$.
Second, since the set of coefficients $\left\{\boldsymbol{c}_n\right\}$ is \textit{a priori} infinite (and so is the set of poles in the Coulomb case), this means that numerically the Siegert-Humblet expansion can only be used to compute local approximations of the scattering matrix, which can nonetheless reach any target accuracy by increasing the number of $\left\{  \mathcal{E}_j\right\}_{j \in \llbracket 1,N_L \rrbracket} $ poles included and the order of the truncation $N_{\mathcal{W}(E)}$ in $\left\{\boldsymbol{c}_n\right\}_{n\in\llbracket1,N_{\mathcal{W}(E)}\rrbracket}$.
In practice, this means that to compute the scattering matrix one needs to provide the Siegert-Humblet parameters $\left\{\mathcal{E}_j, r_{j,c}\right\}$, cut the energy domain of interest into local windows $\mathcal{W}(E)$ away from threshold branch-points $\left\{E_{T_c}\right\}$, and provide a set of local coefficients $\left\{\boldsymbol{c}_n\right\}_{n\in\llbracket1,N_{\mathcal{W}(E)}\rrbracket}$ for each window.

As discussed in \cite{Ducru_shadow_Brune_Poles_2020}  (c.f. lemmas 1 and 2 section III.B and theorems 2 and 3 of section III.C of \cite{Ducru_shadow_Brune_Poles_2020}), the definition of the shift and penetration functions for complex wavenumbers is ambiguous (in particular purely imaginary wavenumbers yield negative or sub-threshold energies), which in turn entail various possible alternative parameters. 
When solving radioactive problem (\ref{eq:R_L radioactive problem}) to find the Siegert-Humblet radioactive poles and residues $\left\{ \mathcal{E}_j, r_{j,c}\right\}$ -- or (\ref{eq: invA eigenproblem for RL poles}) equivalently -- there are no such ambiguities on the definition of $\boldsymbol{L}$: the Kapur-Peierls operator is simply analytically continued to complex wavenumbers.
The unicity of analytic continuation thus entails that the Siegert-Humblet parameters are uniquely defined, as long as we specify for each channel $c$ what sheet of the Riemann surface from mapping (\ref{eq:rho_c(E) mapping}) was chosen, as in (\ref{eq:: pole E_j sheet reporting}).

The $\left\{\mathcal{E}_j, + , +, \hdots, +, + \right\}$ sheet is called the \textit{physical sheet}, and we here call the poles on that sheet the \textit{principal poles}. All other sheets are called \textit{nonphysical} and the poles laying on these sheets are called \textit{shadow poles}. Often, the principal poles are responsible for the resonant behavior, with shadow poles only contributing to background behavior, but cases have emerged where the shadow poles contribute significantly to the resonance structure, as reported in \cite{Hale_1987}, and G. Hale there introduced a quantity called \textit{strength} of a pole (c.f. eq. (7) in \cite{Hale_1987}, or paragraph after eq. (2.11) XI.2.b, p.306, and section XI.4, p.312 in \cite{Lane_and_Thomas_1958}) to quantify the impact a pole $\mathcal{E}_j$ will have on resonance behavior, by comparing the residue $r_{j,c}$ to the Wigner-Eisenbud widths $\gamma_{\lambda,c}$.\\

Result (\ref{eq::NL number of poles}) is quite instructive: one can observe that the number $N_L$ of Siegert-Humblet poles adds-up the number of levels $N_\lambda$ and the number of poles of $\boldsymbol{L}$ (which is $\sum_{c=1}^{N_c} \ell_c$ for neutral massive particles, and is infinite in the Coulomb case, c.f. discussion after theorem 1 in section II.B of \cite{Ducru_shadow_Brune_Poles_2020}. Moreover, $N_L$ is duplicated with each new sheet of the Riemann surface from mapping (\ref{eq:rho_c(E) mapping}) --- that is associated to a new threshold, hence the $N_{E_{T_c} \neq E_{T_{c'}}}$.
Interestingly, comparing $N_L$ from (\ref{eq::NL number of poles}) with the number $N_S$ of alternative analytic poles from equation (49) in theorem 3 section III.C of \cite{Ducru_shadow_Brune_Poles_2020} --- which are in $E$-space and must thus be doubled to obtain the number of $\rho$-space poles --- we note that the analytic continuation of the shift factor $\boldsymbol{S}$ (c.f. lemma 2 III.B of \cite{Ducru_shadow_Brune_Poles_2020}) adds a virtual pole for each pole of $\boldsymbol{L}$ when unfolding the sheets of mapping (\ref{eq:rho_c(E) mapping}), because it is a function of $\rho_c^2(E)$. This can readily be observed in the trivial case of a p-wave ($\ell=1$) channel with one resonance (one level $N_\lambda=1$), where $S(E) = -\frac{1}{1 + \rho^2(E)}$ introduces two poles at $\rho(E) = \pm \mathrm{i}$, while $L(E) = \frac{- 1 + \mathrm{i}\rho(E) + \rho^2(E)}{1 - \mathrm{i}\rho(E)}$ only counts one pole, at $\rho(E) = \mathrm{i}$.

As for equation (50) of theorem 3 section III.C. of \cite{Ducru_shadow_Brune_Poles_2020}, one should add the precision that in the sum over the channels in (\ref{eq::NL number of poles}), the multiplicity of possible $L_c(\rho_c)$ repeated over many different channels $L_c(\rho_c) = L_{c'\neq c}(\rho_{c'})$ is capped by $N_\lambda$, which in practice would only occur in the rare cases where only one or two levels occurs for many channels with same angular momenta (and, of course, total angular momenta and parity $J^\pi$). 

Numerically, solving the generalized eigenvalue problems (\ref{eq:R_L radioactive problem}) or (\ref{eq: invA eigenproblem for RL poles}) falls into the well-known class of nonlinear eigenvalue problems, for which algorithms we direct the reader to Heinrich Voss's chapter 115 in the Handbook of Linear Algebra \cite{Handbook_of_linear_algebra}.
We will just state that instead of the Rayleigh-quotient type of methods expressed in \cite{Handbook_of_linear_algebra}, it can sometimes be computationally advantageous to first find the poles $\left\{\mathcal{E}_j\right\}$ by solving the channel determinant problem, $\mathrm{det}\left( \left.\boldsymbol{R}_{L}^{-1}(E)\right|_{E = \mathcal{E}_j} \right) = 0$, or the corresponding level determinant one, $\mathrm{det}\left( \left.\boldsymbol{A}^{-1}(E)\right|_{E = \mathcal{E}_j} \right) = 0$, and then solve the associated linear eigenvalue problem. Methods tailored to find all the roots of this problem where introduced in \cite{Ducru_PHYSOR_conversion_2016}, or in equations (200) and (204) of \cite{Frohner_Jeff_2000}.
Notwithstanding, from a numerical standpoint, having the two approaches is beneficial in that solving (\ref{eq:R_L radioactive problem}) will be advantageous over solving (\ref{eq: invA eigenproblem for RL poles}) when the number of levels $N_\lambda$ far exceeds the number of channels $N_c$, and conversely.
Nonetheless, the multi-sheeted nature of the radioactive problem makes it harder to solve, as one must search each sheet of mapping (\ref{eq:rho_c(E) mapping}) to find all the poles.

\subsection{\label{sec: Xenon evidence of radioactive parameters}Xenon $^{\mathrm{134}}\mathrm{Xe}$ evidence of radioactive parameters}

\begin{figure}[] 
  \centering
  \subfigure[\ First p-wave resonance.]{\includegraphics[width=0.49\textwidth]{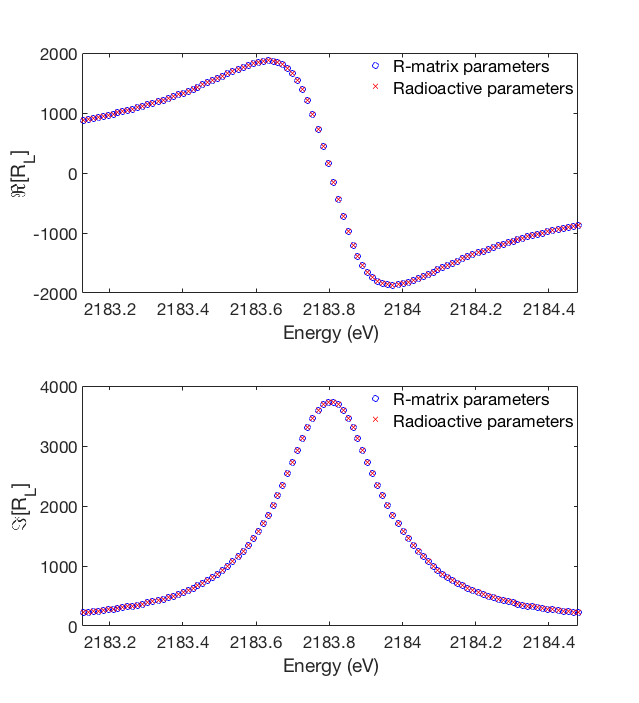}}
  \subfigure[\ Second p-wave resonance.]{\includegraphics[width=0.49\textwidth]{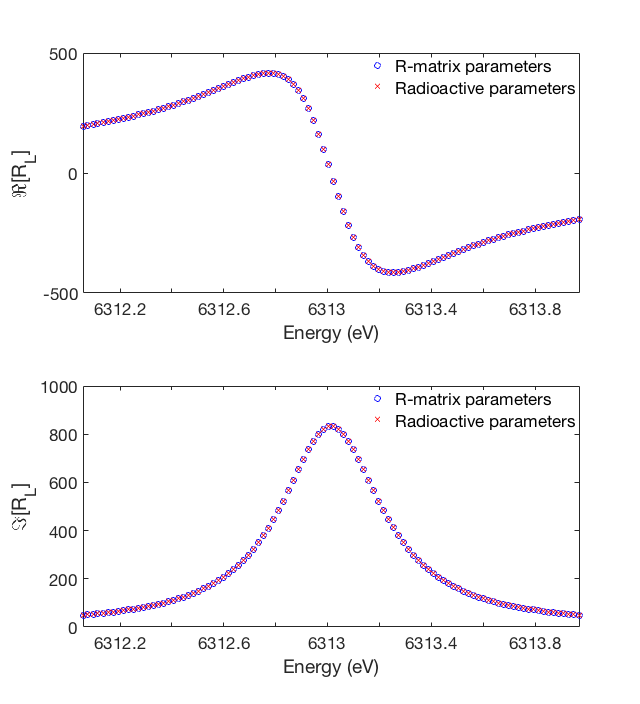}}
  \caption{\small{Kapur-Peirels $\boldsymbol{R}_L(E)$ operator (\ref{eq: def Kapur-Peierls operator}) of xenon $^{\mathrm{134}}\mathrm{Xe}$ two p-wave resonances in spin-parity group $J^\pi = 1/2^{(-)}$. Dimensionless $\boldsymbol{R}_L(E)$ is computed using radioactive parameters from table \ref{tab:Xe-134 radioactive parameters} in expression (\ref{eq::xenon RL Mittag Leffler expanded in E space}), or using the R-matrix parameters from table \ref{tab:Xe-134 radioactive parameters} in Reich-Moore level-matrix (\ref{eq:inv_A expression}) -- that is definition (21) of \cite{Ducru_WMP_Phys_Rev_C_2019} -- yielding identical real and imaginary parts.
  }}
  \label{fig:xenon-134 J=1/2(-) Kapur-Peierls operator}
\end{figure}

\begin{table}[ht!!]
\caption{\label{tab:Xe-134 radioactive parameters} 
Radioactive parameters (Siegert-Humblet poles and residue widths of the Kapur-Peierls $\boldsymbol{R}_L(E)$ operator) of the two p-wave resonances of $^{\mathrm{134}}\mathrm{Xe}$, spin-parity group $J^\pi = 1/2^{(-)}$, converted from ENDF/B-VIII.0 evaluation (MLBW) to multipole representation using Reich-Moore level matrix (\ref{eq:inv_A expression}), that is definition (21) of \cite{Ducru_WMP_Phys_Rev_C_2019}.}
\begin{ruledtabular}
\begin{tabular}{l}
$ z = \sqrt{E}$ with $E$ in (eV)  \tabularnewline
$A = 132.7600$ \tabularnewline
$a_c = 5.80$ : channel radius (Fermis) \tabularnewline
$\rho_0 = \frac{A a_c \sqrt{\frac{2m_n}{h}} }{A + 1} $ in ($\mathrm{\sqrt{eV}}^{-1}$), so that $\rho(z) \triangleq \rho_0 z$ \tabularnewline
with $\sqrt{\frac{2m_n}{h}} = 0.002196807122623 $ in units ($1/(10^{-14}\text{m} \mathrm{\sqrt{eV}})$) \tabularnewline
\hline \tabularnewline
\textbf{Radioactive parameters} (rounded to 5 digits): \tabularnewline
\hline 
\begin{tabular}{l|l|l}
 Radioactive poles  & Radioactive residue & Level-matrix residue  \tabularnewline
   $\left\{\mathcal{E}_j , \pm \right\}$ from (\ref{eq: invA eigenproblem for RL poles}) & widths $\boldsymbol{r_j}$ from (\ref{eq::radioactive widths rj from aj}) & widths $\boldsymbol{a_j}$ from (\ref{eq::A Mittag Leffler non-degenerate}) \tabularnewline
  ($\mathrm{eV}$), sheet of (\ref{eq:rho_c(E) mapping}) & ($\mathrm{\sqrt{eV}}$)  &  (dimensionless)  \tabularnewline
\hline \hline
$\mkern-5mu\left\{\mkern-10mu\begin{array}{c}
   -6.2694 {\mkern+2mu\mathrm{E}\mkern-1mu {+} \mkern-1mu {5} }     \\
    -\mathrm{i} 1.0238 {\mkern+2mu\mathrm{E} \mkern-1mu {-} \mkern-1mu {4} }
\end{array} \mkern-5mu, \mkern-5mu + \mkern-5mu \right\} \mkern-5mu$
& $\begin{array}{c}
   9.1193 {\mkern+2mu\mathrm{E}\mkern-1mu {-} \mkern-1mu {8} }     \\
    -\mathrm{i} 1.4762 {\mkern+2mu\mathrm{E} \mkern-1mu {+} \mkern-1mu {0} }
\end{array} $
& $\left[\begin{array}{c}
   2.7683{\mkern+2mu\mathrm{E}\mkern-1mu {-} \mkern-1mu {9} }     \\
    -\mathrm{i} 4.4744{\mkern+2mu\mathrm{E} \mkern-1mu {-} \mkern-1mu {2} }  \\ \hline
    1.5345 {\mkern+2mu\mathrm{E}\mkern-1mu {-} \mkern-1mu {9} }     \\
    -\mathrm{i} 2.4964{\mkern+2mu\mathrm{E} \mkern-1mu {-} \mkern-1mu {2} }
\end{array} \right]$
\tabularnewline \hline
$\mkern-5mu\left\{\mkern-10mu\begin{array}{c}
   2.1838 {\mkern+2mu\mathrm{E}\mkern-1mu {+} \mkern-1mu {3} }     \\
    +\mathrm{i} 9.0757 {\mkern+2mu\mathrm{E} \mkern-1mu {-} \mkern-1mu {2} }
\end{array} \mkern-5mu, \mkern-5mu - \mkern-5mu \right\} \mkern-5mu$
& $\begin{array}{c}
   8.6799 {\mkern+2mu\mathrm{E}\mkern-1mu {-} \mkern-1mu {4} }     \\
    -\mathrm{i} 2.5113  {\mkern+2mu\mathrm{E} \mkern-1mu {+} \mkern-1mu {1} }
\end{array} $
& $\left[\begin{array}{c}
   4.444{\mkern+2mu\mathrm{E}\mkern-1mu {-} \mkern-1mu {5} }     \\
    -\mathrm{i} 9.995{\mkern+2mu\mathrm{E} \mkern-1mu {-} \mkern-1mu {1} }  \\ \hline
    -1.7608 {\mkern+2mu\mathrm{E}\mkern-1mu {-} \mkern-1mu {5} }     \\
    -\mathrm{i} 2.9849 {\mkern+2mu\mathrm{E} \mkern-1mu {-} \mkern-1mu {4} }
\end{array} \right]$
\tabularnewline \hline
$\mkern-5mu\left\{\mkern-10mu\begin{array}{c}
   2.1838 {\mkern+2mu\mathrm{E}\mkern-1mu {+} \mkern-1mu {3} }     \\
    -\mathrm{i}  1.6868 {\mkern+2mu\mathrm{E} \mkern-1mu {-} \mkern-1mu {1} }
\end{array}  \mkern-5mu, \mkern-5mu + \mkern-5mu \right\} \mkern-5mu$
& $\begin{array}{c}
   8.6814  {\mkern+2mu\mathrm{E}\mkern-1mu {-} \mkern-1mu {4} }     \\
    +\mathrm{i} 2.5113 {\mkern+2mu\mathrm{E} \mkern-1mu {+} \mkern-1mu {1} }
\end{array} $
& $\left[\begin{array}{c}
   4.444{\mkern+2mu\mathrm{E}\mkern-1mu {-} \mkern-1mu {5} }     \\
    +\mathrm{i} 9.995{\mkern+2mu\mathrm{E} \mkern-1mu {-} \mkern-1mu {1} }  \\ \hline
    -1.7597 {\mkern+2mu\mathrm{E}\mkern-1mu {-} \mkern-1mu {5} }     \\
    +\mathrm{i} 2.9849 {\mkern+2mu\mathrm{E} \mkern-1mu {-} \mkern-1mu {4} }
\end{array} \right]$
\tabularnewline \hline
$\mkern-5mu\left\{\mkern-10mu\begin{array}{c}
   6.3130 {\mkern+2mu\mathrm{E}\mkern-1mu {+} \mkern-1mu {3} }     \\
    +\mathrm{i} 1.6025 {\mkern+2mu\mathrm{E} \mkern-1mu {-} \mkern-1mu {1} }
\end{array}  \mkern-5mu, \mkern-5mu - \mkern-5mu \right\} \mkern-5mu$
& $\begin{array}{c}
    2.4919 {\mkern+2mu\mathrm{E}\mkern-1mu {-} \mkern-1mu {3} }     \\
    -\mathrm{i} 1.4085  {\mkern+2mu\mathrm{E} \mkern-1mu {+} \mkern-1mu {1} }
\end{array} $
& $\left[\begin{array}{c}
   8.5974 {\mkern+2mu\mathrm{E}\mkern-1mu {-} \mkern-1mu {5} }     \\
    +\mathrm{i} 8.5696 {\mkern+2mu\mathrm{E} \mkern-1mu {-} \mkern-1mu {4} }  \\ \hline
    2.3534 {\mkern+2mu\mathrm{E}\mkern-1mu {-} \mkern-1mu {5} }     \\
    -\mathrm{i} 9.9984 {\mkern+2mu\mathrm{E} \mkern-1mu {-} \mkern-1mu {1} }
\end{array} \right]$
\tabularnewline \hline
$\mkern-5mu\left\{\mkern-10mu\begin{array}{c}
  6.3130 {\mkern+2mu\mathrm{E}\mkern-1mu {+} \mkern-1mu {3} }     \\
    -\mathrm{i} 2.3822 {\mkern+2mu\mathrm{E} \mkern-1mu {-} \mkern-1mu {1} }
\end{array}  \mkern-5mu, \mkern-5mu + \mkern-5mu \right\} \mkern-5mu$
& $\begin{array}{c}
   2.4916 {\mkern+2mu\mathrm{E}\mkern-1mu {-} \mkern-1mu {3} }     \\
    + \mathrm{i} 1.4085 {\mkern+2mu\mathrm{E} \mkern-1mu {+} \mkern-1mu {1} }
\end{array} $
& $\left[\begin{array}{c}
   8.5964 {\mkern+2mu\mathrm{E}\mkern-1mu {-} \mkern-1mu {5} }     \\
    -\mathrm{i} 8.5697 {\mkern+2mu\mathrm{E} \mkern-1mu {-} \mkern-1mu {4} }  \\ \hline
   2.3534 {\mkern+2mu\mathrm{E}\mkern-1mu {-} \mkern-1mu {5} }     \\
    +\mathrm{i} 9.9984 {\mkern+2mu\mathrm{E} \mkern-1mu {-} \mkern-1mu {1} }
\end{array} \right]$
\end{tabular} \\
\hline\\
\textbf{R-matrix parameters}: \\
\hline \\
$E_1 = 2186.0$ : first resonance energy (eV) \\
$\Gamma_{1,n} = 0.2600$ : neutron width of first resonance \\ (not reduced width), i.e. $\Gamma_{\lambda,c} = 2P_c(E_\lambda) \gamma_{\lambda,c}^2$ \\
$\Gamma_{1,\gamma} = 0.0780$ : eliminated capture width (eV) \\
$E_2 = 6315.0$ : second resonance energy (eV) \\
$\Gamma_{2,n} = 0.4000$ (eV) \\
$\Gamma_{2,\gamma} = 0.0780$ (eV) \\
$g_{J^\pi} = 1/3$ : spin statistical factor \\
$B_c = - 1 $ \\
\end{tabular}
\end{ruledtabular}
\end{table}
















%


%



In our previous article \cite{Ducru_shadow_Brune_Poles_2020}, we observed the first evidence of shadow poles in the alternative parametrization of R-matrix theory in isotope xenon $^{\mathrm{134}}\mathrm{Xe}$ spin-parity group $J^\pi = 1/2^{(-)}$, showing how they depend on the choice of continuation to complex wavenumbers. 
We here document in table \ref{tab:Xe-134 radioactive parameters} the Siegert-Humblet radioactive parameters (poles and residues of the Kapur-Peierls $\boldsymbol{R}_L$ operator), for these same p-wave resonances of $^{\mathrm{134}}\mathrm{Xe}$ spin-parity group $J^\pi = 1/2^{(-)}$. As shown in figure \ref{fig:xenon-134 J=1/2(-) Kapur-Peierls operator}, both the radioactive parameters and the R-matrix parameters yield an identical Kapur-Peierls $\boldsymbol{R}_L(E)$ operator, and therefore exactly reconstruct the scattering matrix $\boldsymbol{U}(E)$ of the nuclear interactions. 

This is made possible because $^{\mathrm{134}}\mathrm{Xe}$ spin-parity group $J^\pi = 1/2^{(-)}$ has only a neutron channel with zero threshold ($E_{T_c} = 0$).
In the particular case of neutral particles with zero threshold, the outgoing wavefunction reduced logarithmic derivative operator $\boldsymbol{L}(\boldsymbol{\rho})$ is a rational function in $\sqrt{E}$: this can be seen from Mittag-Leffer expansion (\ref{eq::Explicit Mittag-Leffler expansion of L_c}) with a finite amount of poles $\left\{\omega_n \right\}$ (reported in theorem 1 section II.B and table II of \cite{Ducru_shadow_Brune_Poles_2020}).
Therefore, the transformation $z \triangleq \sqrt{E}$ unfolds the Riemann surface of mapping (\ref{eq:rho_c(E) mapping}): that is that searching for solutions to the radioactive problem (\ref{eq:R_L radioactive problem}) in $z$-space is equivalent to searching on both sheets of the $\pm\sqrt{E}$ Riemann surface from mapping (\ref{eq:rho_c(E) mapping}).
Moreover, a study of the numerator and denominator of the inverse level matrix $\boldsymbol{A^{-1}}(z)$ from (\ref{eq:inv_A expression}) then shows that the level matrix $\boldsymbol{A}(z)$ is rational function of degree -2 in $z$-space, with $N_L$ poles from (\ref{eq::NL number of poles}) with only one sheet (no other thresholds than zero), so that its Mittag-Leffler expansion (\ref{eq::A Mittag Leffler non-degenerate}) is actually a partial fraction decomposition in simple $z$ poles, without constant nor holomorphic part (c.f. section II.F. of \cite{Ducru_WMP_Phys_Rev_C_2019} for more in depth discussion of this process):
\begin{equation}
\boldsymbol{A}(E) = \sum_{j = 1}^{N_L} \frac{\frac{\boldsymbol{a_j}\boldsymbol{a_j}^\mathsf{T}}{2\sqrt{\mathcal{E}_j}}}{\sqrt{E} - \sqrt{\mathcal{E}_j}}
\label{eq::xenon A Mittag Leffler expanded in sqrt(E) space}
\end{equation}
Note that the nullity of the constant term entails the following remarkable property:
\begin{equation}
 \sum_{j = 1}^{N_L} \frac{\boldsymbol{a_j}\boldsymbol{a_j}^\mathsf{T}}{2\sqrt{\mathcal{E}_j}} = \boldsymbol{0}
\label{eq::xenon A residues add to nullity}
\end{equation}
Since from (\ref{eq::radioactive widths rj from aj}) we have $\boldsymbol{r_j} = \boldsymbol{\gamma}^\mathsf{T}\boldsymbol{a_j}$ (assuming non-degenerate states), the latter properties on the level-matrix can be transcribed into the following exact pole expansion for the Kapur-Peierls operator (\ref{eq: def Kapur-Peierls operator}):
\begin{equation}
\boldsymbol{R}_{L}(E) = \sum_{j = 1}^{N_L} \frac{\frac{\boldsymbol{r_j}\boldsymbol{r_j}^\mathsf{T}}{2\sqrt{\mathcal{E}_j}}}{\sqrt{E} - \sqrt{\mathcal{E}_j}}
\label{eq::xenon RL Mittag Leffler expanded in sqrt(E) space}
\end{equation}
which is equivalent to eq. (106) of \cite{Ducru_WMP_Phys_Rev_C_2019}, and the null constant relations (\ref{eq::xenon A residues add to nullity}) entails the remarkable property on the radioactive parameters (c.f. eq. (108) of \cite{Ducru_WMP_Phys_Rev_C_2019}):
\begin{equation}
 \sum_{j = 1}^{N_L} \frac{\boldsymbol{r_j}\boldsymbol{r_j}^\mathsf{T}}{2\sqrt{\mathcal{E}_j}} = \boldsymbol{0}
\label{eq::xenon RL residues add to nullity}
\end{equation}
By setting a choice of sheet in $z = \pm \sqrt{E}$, the latter equations can be written as:
\begin{equation}
\boldsymbol{R}_{L}(E) = \sum_{j = 1}^{N_L} \frac{\boldsymbol{r_j}\boldsymbol{r_j}^\mathsf{T}}{E - \mathcal{E}_j} + \underbrace{\sum_{j = 1}^{N_L} \frac{\frac{\boldsymbol{r_j}\boldsymbol{r_j}^\mathsf{T}}{2\sqrt{\mathcal{E}_j}}}{\sqrt{E} + \sqrt{\mathcal{E}_j}}}_{\boldsymbol{\mathrm{Hol}}_{\boldsymbol{R}_{L}}(E)}
\label{eq::xenon RL Mittag Leffler expanded in E space}
\end{equation}
where $-\sqrt{\mathcal{E}_j}$ is not a pole, and therefore the second term is the exact holomorphic part $\boldsymbol{\mathrm{Hol}}_{\boldsymbol{R}_{L}}(E)$ from (\ref{eq::RL Mittag Leffler}).

\subsection{\label{sec: Wenon evidence of radioactive parameters}Comparing radioactive, traditional, and alternative R-matrix parameters}

This case of xenon $^{\mathrm{134}}\mathrm{Xe}$ shows the general merits of the radioactive parameters: in contrast
with the R-matrix resonance parameters, the radioactive poles $\mathcal{E}_j$ are independent of both the arbitrary boundary parameter $B_c$ and the channel radius $a_c$, while the radioactive widths $\boldsymbol{r_j}$ are independent of the boundary parameters $B_c$ and depend on the channel radius in a systematic
way (provided by theorem \ref{theo:: Radioactive parameters transformation under change of channel radius} bellow).
Moreover, in this specific neutral particles with zero-threshold case, the Kapur-Peierls $\boldsymbol{R}_L(E)$ operator pole expansions (\ref{eq::xenon RL Mittag Leffler expanded in sqrt(E) space}) and (\ref{eq::xenon RL Mittag Leffler expanded in E space}) are exact, and therefore can fully reconstruct the R-matrix model scattering matrix, as shown by figure \ref{fig:xenon-134 J=1/2(-) Kapur-Peierls operator}.

Nonetheless, this example also shows the limitations of the radioactive parameters pole expansion approach (\ref{eq::RL Mittag Leffler expanded}) of theorem \ref{theo::Siegert-Humblet Radioactive Poles}.
Just as the alternative parameters of Brune in \cite{Ducru_shadow_Brune_Poles_2020}, the radioactive parameters entangle the energy dimension with the wavenumber one, meaning one now has to specify with each radioactive pole $\mathcal{E}_j$ its sheet (\ref{eq:: pole E_j sheet reporting}) on the Riemann surface of mapping (\ref{eq:rho_c(E) mapping}), for each threshold branch, as specified in theorem \ref{theo::Siegert-Humblet Radioactive Poles}. 
In contrast, though they depend on the arbitrary boundary parameters $B_c$ and channel radii $a_c$, the traditional Wigner-Eisenbud R-matrix parameters have the truly remarkable (and seldom appreciated) property of de-entangling the energy dimension from the wavenumber one. The Wigner-Eisenbud resonance parameters are real and well defined in energy space, without any need to map to the wavenumber and therefore specify where the resonance energies $E_\lambda$ dwell on the multi-sheeted Riemann surface of mapping (\ref{eq:rho_c(E) mapping}).

Another significant limitation of the radioactive parameters is that they are in general incomplete, meaning that the knowledge of the radioactive poles and residues is not sufficient to fully parametrize the $\boldsymbol{R}_L(E)$ Kapur-Peierls operator: one also needs to parametrize the holomorphic part $\boldsymbol{\mathrm{Hol}}_{\boldsymbol{R}_{L}}(E)$ in Mittag-Leffler expansion (\ref{eq::RL Mittag Leffler expanded}). In the general case of charged particles or thresholds, there is no simple way of parametrizing this holomorphic part (though it is known exactly for zero-threshold neutral particles as equation (\ref{eq::xenon RL Mittag Leffler expanded in E space}) specifies). Moreover, even if the holomorphic part were known, in the general case of charged particles and thresholds there is an infinite number of radioactive poles ($N_L = \infty$), all of which are necessary to exactly reconstruct the scattering matrix. 
This means the radioactive parameters alone are not very well suited for evaluations in standard nuclear data libraries. Nonetheless, the radioactive poles have recently been used to constitute an alternative nuclear data library --- the \textit{windowed multipole library} --- with the goal of achieving significant computational performance gains in nuclear simulations, as we explain in our follow-up article \cite{Ducru_WMP_Phys_Rev_C_2019}: the final in the xenon trilogy on pole parametrizations of R-matrix theory \cite{Ducru_shadow_Brune_Poles_2020, Ducru_Scattering_Matrix_of_Complex_Wavenumbers_2019, Ducru_WMP_Phys_Rev_C_2019}.

For comparison, the alternative parameters proposed by Brune in \cite{Brune_2002} combine some merits and drawbacks of both the radioactive and the traditional (Wigner-Eisenbud) parameters.
Like the radioactive parameters, the alternative parameters are independent of the arbitrary boundary condition $B_c$, though they still depend on arbitrary channel radii $a_c$.
Like the Wigner-Eisenbud resonance parameters, the alternative parameters are always complete: with the knowledge of $N_\lambda$ alternative poles, one can fully reconstruct the scattering matrix (c.f. theorem 4 of \cite{Ducru_shadow_Brune_Poles_2020}).
On the other hand, unlike the Wigner-Eisenbud resonance parameters, the alternative parameters entangle the energy dimension with the wavenumber one: as for the radioactive poles, one must specify on which sheet of the Riemann surface (\ref{eq:rho_c(E) mapping}) are the alternative poles (c.f. theorems 2 of \cite{Ducru_shadow_Brune_Poles_2020}). However, proper analytic continuation will unfold the sheets of Riemann surface (\ref{eq:rho_c(E) mapping}) and thus render such specification useless, as we show in theorem 3 of \cite{Ducru_shadow_Brune_Poles_2020} --- this is another strong argument in favor of analytic continuation of R-matrix operators, in particular the shift $S_c(\rho_c)$ and penetration $P_c(\rho_c)$ functions (contrarily to ``force-closure'' legacy of Lane \& Thomas).
Moreover, in practice this is not as much of a limitation, as we showed in theorem 4 of \cite{Ducru_shadow_Brune_Poles_2020} that we can always choose the first $N_\lambda$ resonant alternative poles of the physical sheet $\left\{E,+\right\}$.
Nonetheless, all Reich-Moore and sub-threshold alternative parameters still change depending on whether the shift $S_c(E)$ and penetration $P_c(E)$ functions are analytically continued (theorem 3 of \cite{Ducru_shadow_Brune_Poles_2020}) or ``forced-closed'' as defined by Lane \& Thomas (theorem 2 of \cite{Ducru_shadow_Brune_Poles_2020}): we here argue that the physically and mathematically correct way is to perform analytic continuation of the shift $S_c(\rho_c)$ and penetration $P_c(\rho_c)$ functions, and provide many more arguments for this in section \ref{sec::Scattering matrix continuation to complex energies}. 

Note that a commonly alleged advantage of the alternative poles $\widetilde{E_\lambda}$ is that they correspond to the peak of the resonances -- actually of the Kapur-Peierls operator $\boldsymbol{R}_L(E)$ since the cross section has an additional $\frac{1}{\left|k_c(E)\right|^2}$ modulating term (see \cite{Ducru_WMP_Phys_Rev_C_2019} for more discussion on this). Though this is true in the case of full R-matrix equations (where the resonance energies are real) for resonant poles above threshold (not the shadow poles discovered in \cite{Ducru_shadow_Brune_Poles_2020}), this ceases to be true for channel-eliminated Reich-Moore evaluations (where the resonance energies are in effect complex $E_\lambda - \mathrm{i}\frac{\Gamma_{\lambda,\gamma}}{2}$ as explained in section II.A.4. of \cite{Ducru_WMP_Phys_Rev_C_2019}). Indeed, the alternative poles $\widetilde{E_\lambda}$ are then complex (c.f. section IV.A of \cite{Ducru_shadow_Brune_Poles_2020}), and neither analytic continuation nor Lane \& Thomas force-closure will entail their real parts exactly correspond to the Kapur-Peierls operator $\boldsymbol{R}_L(E)$ resonance peaks. 
The exact peaks of the $\boldsymbol{R}_L(E)$ resonances are actually the real parts of the radioactive poles $\Re\left[ \mathcal{E}_j\right]$, and the widths are the imaginary parts $\Im\left[ \mathcal{E}_j\right]$, which we here document in table \ref{tab:Xe-134 radioactive parameters} and shown in figure \ref{fig:xenon-134 J=1/2(-) Kapur-Peierls operator} for the two p-wave resonances of xenon $^{\mathrm{134}}\mathrm{Xe}$ spin-parity group $J^\pi = 1/2^{(-)}$.  
In practice, though, the real part of the alternative poles $\Re\left[\widetilde{E_\lambda}\right]$ are close (but not identical) to the real part of the radioactive poles $\Re\left[\mathcal{E}_j\right]$ (one needs to go to more digits to see the discrepancy between values of table VI in \cite{Ducru_shadow_Brune_Poles_2020} to our table \ref{tab:Xe-134 radioactive parameters} here), and as such are much closer to the peak of the resonances than are the Wigner-Eisenbud resonance energies $E_\lambda$.

Another important characteristic of the radioactive parameters is that they are the bridge between the R-matrix theory parametrizations of nuclear reactions, and the scattering matrix pole expansions of Humblet and Rosenfeld, as we now explain in section \ref{sec: Radioactive parameters link R-matrix theory to the scattering matrix pole expansions}.

\subsection{\label{sec: Radioactive parameters link R-matrix theory to the scattering matrix pole expansions} Radioactive parameters link R-matrix theory to the scattering matrix pole expansions}

So far, we have started from the R-matrix Wigner-Eisenbud parameters $\left\{E_\lambda, \gamma_{\lambda,c}\right\}$ to construct the poles and residues of the Kapur-Peierls operator $\boldsymbol{R}_{L}$, through (\ref{eq:R_L radioactive problem}) and (\ref{eq:R_L residues widths}).
We here show that these Siegert-Humblet radioactive parameters are the link between R-matrix theory (c.f. \cite{Wigner_and_Eisenbud_1947,Bloch_1957,Lane_and_Thomas_1958}) and the scattering matrix pole expansions of Humblet-Rosenfeld and others (c.f. \cite{Dyatlov, Guillope_1989_ENS, S-matrix-complex_resonance_parameters_Csoto_1997, Theory_of_Nuclear_Reactions_I_resonances_Humblet_and_Rosenfeld_1961, Theory_of_Nuclear_Reactions_II_optical_model_Rosenfeld_1961, Theory_of_Nuclear_Reactions_III_Channel_radii_Humblet_1961_channel_Radii, Theory_of_Nuclear_Reactions_IV_Coulomb_Humblet_1964, Theory_of_Nuclear_Reactions_V_low_energy_penetrations_Jeukenne_1965, Theory_of_Nuclear_Reactions_VI_unitarity_Humblet_1964, Theory_of_Nuclear_Reactions_VII_Photons_Mahaux_1965, Theory_of_Nuclear_Reactions_VIII_evolutions_Rosenfeld_1965}). 

Indeed, plugging-in the Kapur-Peierls $\boldsymbol{R}_L$ operator expansion (\ref{eq::RL Mittag Leffler}) into the expression of the scattering matrix (\ref{eq:U expression}) then yields the Mittag-Leffler expansion of the scattering matrix:
\begin{equation}
\boldsymbol{U}(E) \underset{\mathcal{W}(E)}{=} \boldsymbol{w}\sum_{j\geq 1} \frac{\boldsymbol{u_j}\boldsymbol{u_j}^\mathsf{T}}{E - \mathcal{E}_j} + \boldsymbol{\mathrm{Hol}}_{\boldsymbol{U}}(E)
\label{eq::U Mittag Leffler}
\end{equation}
where $\boldsymbol{w} \triangleq 2\mathrm{i}\Id{}$ is the Wronskian (\ref{eq:wronksian expression}), and the scattering residue widths $\boldsymbol{u_j}$ are defined as:
\begin{equation}
\boldsymbol{u_j} \triangleq \left[\boldsymbol{\rho}^{1/2}\boldsymbol{O}^{-1}\right]_{E = \mathcal{E}_j}\boldsymbol{r_j}
\label{eq::u_j scattering residue width}
\end{equation}
In writing (\ref{eq::U Mittag Leffler}), we have used the fact that all the resonances of the scattering matrix $\boldsymbol{U}(E)$ come from the Kapur-Peierls radioactive poles $\left\{\mathcal{E}_j\right\}$ -- indeed, we demonstrate in theorem \ref{theo: Scattering matrix poles are the Siegert-Humblet radioactive poles}, section \ref{subsec::Spurious poles cancellation for analytically continued scattering matrix}, that the poles $\left\{\omega_n\right\}$ of the outgoing wave function $\boldsymbol{O}(E)$ cancel out in (\ref{eq:U expression}) and are thus not present in the scattering matrix.
Cauchy's residues theorem then allows us to evaluate the residues at the pole value to obtain (\ref{eq::u_j scattering residue width}).
As for (\ref{eq::RL Mittag Leffler}), if a resonance were to be degenerate with multiplicity $M_j$, the residues would no longer be rank-one, but instead the scattering matrix residue associated to pole $\mathcal{E}_j$ would be $\sum_{m=1}^{M_j}\boldsymbol{u_j^m}{\boldsymbol{u_j^m}}^\mathsf{T}$, with $\boldsymbol{u_j^m} \triangleq \left[\boldsymbol{\rho}^{1/2}\boldsymbol{O}^{-1}\right]_{E = \mathcal{E}_j}\boldsymbol{r_{j}^m}$.

Expression (\ref{eq::U Mittag Leffler}) exhibits the advantage that the energy dependence of the scattering matrix $\boldsymbol{U}(E)$ is untangled in a simple sum. All the resonance behavior stems from the complex poles and residue widths $\left\{\mathcal{E}_j , u_{j,c} \right\}$, which yield the familiar Breit-Wigner profiles (Cauchy-Lorentz distributions) for the cross section. Conversely, all the threshold behavior and the background are described by the holomorphic part $\boldsymbol{\mathrm{Hol}}_{\boldsymbol{U}}(E)$, which can be expanded in various forms, for instance analytically (\ref{eq::Hol_U expansion}).

This establishes the important bridge between the R-matrix parametrizations and the Humblet-Rosenfeld expansions of the scattering matrix. More precisely, Mittag-Leffler expansion (\ref{eq::U Mittag Leffler}) is identical to the Humblet-Rosenfeld expansions (10.22a)-(10.22b) in \cite{Theory_of_Nuclear_Reactions_I_resonances_Humblet_and_Rosenfeld_1961} for the neutral particles case, and (5.4a)-(5.4b) in \cite{Theory_of_Nuclear_Reactions_IV_Coulomb_Humblet_1964} for the Coulomb case. We thus here directly connect the R-matrix parameters with the Humblet-Rosenfeld resonances, parametrized by their partial widths and real and imaginary poles, as described in \cite{Theory_of_Nuclear_Reactions_III_Channel_radii_Humblet_1961_channel_Radii}.
In particular, the poles $\left\{\mathcal{E}_j\right\}$ from (\ref{eq:E_j pole def}), found by solving (\ref{eq:R_L radioactive problem}), are exactly the ones defined by equations (9.5) and (9.8) in \cite{Theory_of_Nuclear_Reactions_I_resonances_Humblet_and_Rosenfeld_1961}.
The scattering residue widths $\left\{u_{j,c}\right\}$, calculated from (\ref{eq::u_j scattering residue width}), then correspond to the Humblet-Rosenfeld complex residues (10.12) in \cite{Theory_of_Nuclear_Reactions_I_resonances_Humblet_and_Rosenfeld_1961}, from which they build their quantities $\left\{G_{c,n}\right\}$ appearing in expansions (10.22a)-(10.22b) in \cite{Theory_of_Nuclear_Reactions_I_resonances_Humblet_and_Rosenfeld_1961}, or (5.4a)-(5.4b) in \cite{Theory_of_Nuclear_Reactions_IV_Coulomb_Humblet_1964}.
Finally, the holomorphic part $\boldsymbol{\mathrm{Hol}}_{\boldsymbol{U}}(E)$ corresponds to the regular function $Q_{c,c'}(E)$ defined between (10.14a) and (10.14b) in \cite{Theory_of_Nuclear_Reactions_I_resonances_Humblet_and_Rosenfeld_1961}.

Just as Humblet and Rosenfeld did with $Q_{c,c'}(E)$ in section 10.2 of \cite{Theory_of_Nuclear_Reactions_I_resonances_Humblet_and_Rosenfeld_1961} and section 4 of \cite{Theory_of_Nuclear_Reactions_IV_Coulomb_Humblet_1964}, we do not give here an explicit way of calculating this holomorphic contribution $\boldsymbol{\mathrm{Hol}}_{\boldsymbol{U}}(E)$ other than stating that it is possible to expand it in various ways. Far from a threshold, an analytic series in energy space $E$ can stand:
\begin{equation}
\boldsymbol{\mathrm{Hol}}_{\boldsymbol{U}}(E) \underset{\mathcal{W}(E)}{=}  \sum_{n\geq0} \boldsymbol{s}_n E^n
\label{eq::Hol_U expansion}
\end{equation}
In the immediate vicinity of a threshold, the asymptotic threshold behavior will prevail (for massive particles, $U_{c,c'}\sim k_c^{\ell_c + 1} k_{c'}^{\ell_{c'}}$, c.f. eq.(10.5) in \cite{Theory_of_Nuclear_Reactions_I_resonances_Humblet_and_Rosenfeld_1961}, or \cite{Wigner_Thresholds_1948}), yielding an expansion in wavenumber space of the form:
\begin{equation}
\boldsymbol{\mathrm{Hol}}_{\boldsymbol{U}}(E) \underset{\mathcal{W}(E_{T_c})}{=} \sum_{n\geq0} \boldsymbol{s}_n k_c^n(E)
\label{eq::Hol_U threshold expansion}
\end{equation}
Though there is no explicit way of linking these expansions (\ref{eq::Hol_U threshold expansion}) or (\ref{eq::Hol_U expansion}) to the R-matrix Wigner-Eisenbud parameters $\left\{E_\lambda, \gamma_{\lambda, c}\right\}$, this means that the same approach as that discussed in the paragraph following theorem \ref{theo::Siegert-Humblet Radioactive Poles} can be taken: one can provide a local set of coefficients $\left\{\boldsymbol{s}_n\right\}_{\mathcal{W}(E)}$ to expand the holomorphic part of the scattering matrix $\boldsymbol{\mathrm{Hol}}_{\boldsymbol{U}}(E)$, and then calculate the scattering matrix from the Mittag-Leffler expansion (\ref{eq::U Mittag Leffler}). This is at the core of the windowed multipole representation of R-matrix cross sections established in \cite{Ducru_WMP_Phys_Rev_C_2019}.

An important question is that of the radius of convergence of the Mittag-Leffler expansion (\ref{eq::U Mittag Leffler}), in other terms how big can the vicinity $\mathcal{W}(E)$ be?
Humblet and Rosenfeld analyze this problem in section 1.4 of \cite{Theory_of_Nuclear_Reactions_I_resonances_Humblet_and_Rosenfeld_1961}, and perform the Mittag-Leffler expansion (1.50). In the first paragraph of p.538 it is stated that Humblet demonstrated in his Ph.D. thesis that the Mittag-Leffler series will converge for $M\geq 1$ for $U(k)$, though this does not investigate the multi-channel case, and thus the multi-sheeted nature of the Riemann surface stemming from mapping (\ref{eq:rho_c(E) mapping}). They assume at the beginning of section 10.2 that this property stands in the multi-channel case and yet continue their discussion with a choice of $M=0$ that would leave the residues diverging according to their expansion (1.50). This is one reason why we chose in this article to start from a local Mittag-Leffler expansion, and then search for its domain of convergence.
General mathematical scattering theory shows that the Mittag-Leffler expansion holds at least on the whole physical sheet (c.f. theorem 0.2 p.139 of \cite{Guillope_1989_ENS}).
Moreover, in his article on ``threshold behaviour in quantum field theory''\cite{edenThresholdBehaviourQuantum1952}, Eden proves that ``between the threshold values [...] the [Scattering] matrix elements are analytic functions of the energies and momenta of the
incident particles'', though it does not specify in which form the Mittag-Leffler expansion will converge separately on each sheet. 
In practice this requirement is not needed since it is often computationally more advantageous to break down an energy region between two consecutive thresholds $\left[ E_{T_c} , E_{T_{c+1}} \right]$ into smaller vicinities (a compression method for efficient computation used in the windowed multipole library \cite{Ducru_WMP_Phys_Rev_C_2019}).

As we see, by performing the Mittag-Leffler expansion (\ref{eq::U Mittag Leffler}), we have traded-off a finite set of real, unwound, Wigner-Eisenbud parameters $\left\{E_\lambda,\gamma_{\lambda,c}\right\}$ that completely parametrize the energy dependence of the scattering matrix through (\ref{eq:U expression}), with an infinite set of complex Siegert-Humblet radioactive parameters $\left\{\mathcal{E}_j,r_{j,c}\right\}$ plus some local coefficients $\left\{\boldsymbol{s_n}\right\}_{\mathcal{W}(E)}$ for the holomorphic part, all intricately intertwined through radioactive problem (\ref{eq:R_L radioactive problem}), which makes them dwell on a sub-manifold of the multi-sheeted Riemann surface of mapping (\ref{eq:rho_c(E) mapping}). This additional complexity of the Siegert-Humblet parameters comes at the gain of a simple parametrization of the energy dependence for the scattering matrix: the poles and residues expansion (\ref{eq::U Mittag Leffler}). For computational purposes, this may sometimes be a trade-off worth doing: this is the basis for the windowed multipole representation of R-matrix cross sections \cite{Ducru_WMP_Phys_Rev_C_2019}.

\section{\label{sec:Radioactive parameters invariance to channel radii}Radioactive parameters invariance to channel radii}

Section \ref{sec:Siegert-Humblet pole expansion in radioactive states} provided new insights into the link between the Humblet-Rosenfeld scattering matrix pole expansions, and both the Wigner-Eisenbud, and the Siegert-Humblet parametrizations of R-matrix theory. Concerning invariance to arbitrary parameters, we saw that the Siegert-Humblet parameters are invariant under change of boundary condition $B_c$, but not under change of channel radius $a_c$ --- this is also true for the alternative parameters discussed in \cite{Ducru_shadow_Brune_Poles_2020}.
This section is dedicated to invariance properties of the Siegert-Humblet radioactive parameters to a change in channel radius $a_c$. This problem is less studied than that of the invariance to the boundary conditions $B_c$. To the best of our knowledge, the only previous results on this topic are the partial differential equations on the Wigner-Eisenbud $\left\{E_\lambda , \gamma_{\lambda,c}\right\}$ parameters Teichmann derived in his Ph.D. thesis (c.f. eq. (2.29) and (2.31) sections III.2. p.27 of \cite{Teichmann_thesis_1949}), a recent study of the limit case $a_c \to 0$ in \cite{G_Hale_channel_radius_limit_2014_PhysRevC.89.014623}, as well as the general results of the variations of the R-matrix to any arbitrary parameter by Mockel and Perez (c.f equations (71) and (75) \cite{mockelVariationMatrix1970}).
We here focus on the Siegert-Humblet parameters $\big\{\mathcal{E}_j , r_{j,c}\big\}$. Our main result of this section resides in theorem \ref{theo:: Radioactive parameters transformation under change of channel radius}, which establishes a way of converting the Siegert-Humblet radioactive parameters under a change of channel radius $a_c$.

\begin{theorem}\label{theo:: Radioactive parameters transformation under change of channel radius} \textsc{Radioactive parameters transformation under change of channel radius $a_c$.} \\
Let the radioactive poles $\left\{\mathcal{E}_j\right\}$ be the solutions of the radioactive problem (\ref{eq:R_L radioactive problem}).
Under a change of channel radius $a_c^{(0)} \to a_c$ (or infinitesimal $\frac{\partial \cdot }{\partial a_c }$):
\begin{itemize}
    \item the Kapur-Peierls operator $\boldsymbol{R}_{L}$, defined in (\ref{eq: def Kapur-Peierls operator}), is subject to the following partial differential equations: for the diagonal elements,
\begin{equation}
\begin{IEEEeqnarraybox}[][c]{rcl}
a_c \frac{\partial {R_L}_{cc}}{\partial a_c} + (1- 2 L_c) {R_L}_{cc} - 1 & \ = \ &  0
\IEEEstrut\end{IEEEeqnarraybox}
\label{eq: dR//da cc}
\end{equation}
and for off-diagonal ones,
\begin{equation}
\begin{IEEEeqnarraybox}[][c]{rcl}
a_c \frac{\partial {R_L}_{cc'}}{\partial a_c} + (\frac{1}{2}- L_c) {R_L}_{cc'}  & \ = \ &  0
\IEEEstrut\end{IEEEeqnarraybox}
\label{eq: dR//da cc'}
\end{equation}
    \item the radioactive poles $\left\{\mathcal{E}_j\right\}$ are invariant:
    \begin{equation}
\begin{IEEEeqnarraybox}[][c]{rcl}
\frac{\partial \mathcal{E}_j}{\partial a_c} & \ = \ &  0
\IEEEstrut\end{IEEEeqnarraybox}
\label{eq: dp/dac = 0}
\end{equation}
    \item the radioactive widths $\left\{r_{j,c}\right\}$ (widths of the Kapur-Peierls $\boldsymbol{R}_{L}$ operator residues (\ref{eq:R_L residues for degenerate case})), are subject to the following first-order linear partial differential equation:
\begin{equation}
\begin{IEEEeqnarraybox}[][c]{rcl}
a_c \frac{\partial {r}_{j,c}}{\partial a_c} + (\frac{1}{2}- L_c) {r}_{j,c} & \ = \ &  0
\IEEEstrut\end{IEEEeqnarraybox}
\label{eq: drcj/dac }
\end{equation}
 \item which can be formally solved as,
\begin{equation}
\begin{IEEEeqnarraybox}[][c]{rcl}
{r}_{j,c}(a_c) = {r}_{j,c}(a_c^{(0)}) \, \sqrt{\frac{a_c^{(0)}}{a_c}} \, \exp{\left(\int_{a_c^{(0)}}^{a_c} \frac{L_c(k_c x)}{x} \mathrm{d}x\right)}
\IEEEstrut\end{IEEEeqnarraybox}
\label{eq: rcj(ac) integral form}
\end{equation}
\item and explicitly integrates to:
\begin{equation}
\begin{IEEEeqnarraybox}[][c]{rcl}
\frac{{r}_{j,c}(a_c)}{{r}_{j,c}(a_c^{(0)})} =  \frac{O_c(\rho_c(a_c))}{O_c(\rho_c(a_c^{(0)}))} \sqrt{\frac{a_c^{(0)}}{a_c}}
\IEEEstrut\end{IEEEeqnarraybox}
\label{eq: rcj(ac) explicit}
\end{equation}
\item Moreover, letting $\left\{ \omega_n \right\}$ be the roots of the outgoing wave function $\Big\{ \omega_n \; | \; O_c(\omega_n) = 0 \Big\}$, the latter (\ref{eq: rcj(ac) explicit}) can take the following elemental product expansion: 
\begin{equation}
\begin{IEEEeqnarraybox}[][c]{rcl}
\frac{{r}_{j,c}(a_c)}{{r}_{j,c}(a_c^{(0)}) } = \sqrt{\frac{a_c^{(0)}}{a_c}} \left( \frac{ a_c^{(0)} }{  a_c}\right)^{\ell} \mathrm{e}^{\mathrm{i}k_c\left(a_c - a_c^{(0)}\right)} \prod_{n\geq1}\left( \frac{k_c a_c - \omega_n}{k_c a_c^{(0)} - \omega_n}\right)
\IEEEstrut\end{IEEEeqnarraybox}
\label{eq: rcj(ac) - integrated by Mittag-Leffler}
\end{equation}
where there are an infinite number of such roots $\left\{ \omega_n \right\}$ in the Coulomb case, while for neutral particle channel $c $ with angular momentum $\ell$, there exists exactly $\ell$ roots $\left\{ \omega_n \right\}_{n\in\llbracket 1, \ell \rrbracket}$, the exact and algebraically solvable values of which are reported, up to angular momentum $\ell = 4$, in table II of \cite{Ducru_shadow_Brune_Poles_2020}.

\end{itemize}

\end{theorem}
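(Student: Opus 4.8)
Here is the strategy I would follow; I will assume, as in the statement of the theorem, the local Mittag--Leffler forms (\ref{eq::RL Mittag Leffler}) and (\ref{eq::RL Mittag Leffler degenerate state}), the definition (\ref{eq: L operator}) of $L_c$, and its expansion (\ref{eq::Explicit Mittag-Leffler expansion of L_c}).

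The one physical input is that the scattering matrix $\boldsymbol{U}(E)$ is invariant under a change of channel radius $a_c$ — the channel radius being arbitrary beyond the range of the interior Hamiltonian — whereas the diagonal operators $\boldsymbol{O}$, $\boldsymbol{I}$, $\boldsymbol{\rho}$ have an explicit $a_c$-dependence through $\rho_c = a_c k_c$, with $k_c$ a function of $E$ only. First I would solve (\ref{eq:U expression}) for the Kapur--Peierls operator, $\boldsymbol{R}_{L} = \tfrac{1}{2\mathrm{i}}\,\boldsymbol{\rho}^{-1/2}\boldsymbol{O}\big(\boldsymbol{U} - \boldsymbol{O}^{-1}\boldsymbol{I}\big)\boldsymbol{O}\boldsymbol{\rho}^{-1/2}$, which entrywise is $[R_L]_{cc'} = \tfrac{1}{2\mathrm{i}}\rho_c^{-1/2}\rho_{c'}^{-1/2}\big(O_c U_{cc'} O_{c'} - \delta_{cc'} O_c I_c\big)$, and differentiate with respect to $a_c$. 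The chain rule with $\partial\rho_c/\partial a_c = k_c = \rho_c/a_c$ together with (\ref{eq: L operator}) gives $\partial O_c/\partial a_c = L_c O_c/a_c$, hence $\partial(\rho_c^{-1/2}O_c)/\partial a_c = (L_c - \tfrac12)\,\rho_c^{-1/2}O_c/a_c$; since $U_{cc'}$ and the channel-$c'$ factors ($c'\neq c$) are $a_c$-independent, the off-diagonal entries immediately satisfy (\ref{eq: dR//da cc'}). For the diagonal entry the term $O_c I_c$ must also be differentiated; eliminating the incoming-wave logarithmic derivative $\bar L_c \triangleq \rho_c I_c^{(1)}/I_c$ via the Wronskian condition (\ref{eq:wronksian expression}) — which reads $O_c I_c(L_c - \bar L_c) = 2\mathrm{i}\rho_c$ — collapses $\partial_{a_c}(\rho_c^{-1}O_cI_c)$ to $\tfrac{1}{a_c}\big(\rho_c^{-1}O_cI_c(2L_c-1) - 2\mathrm{i}\big)$ and produces exactly the inhomogeneous $-1$ term of (\ref{eq: dR//da cc}).

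Next, to get (\ref{eq: dp/dac = 0}) I would feed the local expansion (\ref{eq::RL Mittag Leffler}), $[R_L]_{cc'} = r_{j,c}r_{j,c'}/(E-\mathcal{E}_j) + (\text{holomorphic})$, into the PDEs just derived: their right-hand sides are holomorphic-coefficient linear expressions in $[R_L]_{cc'}$ plus a constant ($L_c$ being regular at $\mathcal{E}_j$, since $\boldsymbol{R}_{L}$ — hence its poles — stays bounded at the poles $\omega_n$ of $L_c$), so $\partial_{a_c}[R_L]_{cc'}$ has at most a simple pole at $\mathcal{E}_j$; but if $\mathcal{E}_j$ depended on $a_c$ it would carry a double pole $r_{j,c}r_{j,c'}(\partial_{a_c}\mathcal{E}_j)/(E-\mathcal{E}_j)^2$, forcing $(\partial_{a_c}\mathcal{E}_j)\,r_{j,c}r_{j,c'} = 0$ for all $c,c'$, hence $\partial_{a_c}\mathcal{E}_j=0$ (the residue matrix being nonzero); the degenerate case (\ref{eq::RL Mittag Leffler degenerate state}) is handled identically. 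With the poles frozen, equating the simple-pole coefficients on the two sides of the diagonal PDE gives $a_c\,\partial_{a_c}(r_{j,c}^2) = (2L_c(\mathcal{E}_j)-1)\,r_{j,c}^2$ (the constant $-1$ contributing no residue), i.e. precisely (\ref{eq: drcj/dac }), while the off-diagonal coefficients give the compatible statement that $r_{j,c'}$ for $c'\neq c$ is unchanged; working with the squared diagonal entry $r_{j,c}^2$ rather than with the symmetric rank-one factor sidesteps the sign/phase ambiguity of the factorization. (An alternative route uses $\boldsymbol{r_j}=\boldsymbol{\gamma}^\mathsf{T}\boldsymbol{a_j}$ from (\ref{eq::radioactive widths rj from aj}), but it is less direct since it requires the Teichmann-type transformation of $\boldsymbol{\gamma}$ under $a_c$.)

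It then remains to integrate the scalar ODE (\ref{eq: drcj/dac }). At the (now $a_c$-independent) energy $E=\mathcal{E}_j$, hence fixed $k_c$, it reads $\mathrm{d}\ln r_{j,c}/\mathrm{d}a_c = \big(L_c(k_c a_c)-\tfrac12\big)/a_c$, which integrates at once to the closed form (\ref{eq: rcj(ac) integral form}). The substitution $\rho = k_c x$ turns $\int_{a_c^{(0)}}^{a_c}\tfrac{L_c(k_c x)}{x}\mathrm{d}x$ into $\int \tfrac{L_c(\rho)}{\rho}\mathrm{d}\rho$, and since $L_c(\rho)/\rho = \mathrm{d}\ln O_c/\mathrm{d}\rho$ by (\ref{eq: L operator}), this evaluates to $\ln\big[O_c(\rho_c(a_c))/O_c(\rho_c(a_c^{(0)}))\big]$, yielding (\ref{eq: rcj(ac) explicit}). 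Finally, substituting instead the explicit expansion (\ref{eq::Explicit Mittag-Leffler expansion of L_c}), $L_c(\rho)/\rho = -\ell/\rho + \mathrm{i} + \sum_{n\geq1}1/(\rho-\omega_n)$, and integrating term by term gives $-\ell\ln\rho + \mathrm{i}\rho + \sum_{n\geq1}\ln(\rho-\omega_n)$; exponentiating the evaluated difference and multiplying by the $\sqrt{a_c^{(0)}/a_c}$ prefactor produces the elemental product (\ref{eq: rcj(ac) - integrated by Mittag-Leffler}), the number of factors $\{\omega_n\}$ being $\ell$ for a neutral channel (the roots algebraically known up to $\ell=4$ from table~II of \cite{Ducru_shadow_Brune_Poles_2020}) and countably infinite in the Coulomb case. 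The main obstacle is the first step — extracting the correct inhomogeneous diagonal PDE (\ref{eq: dR//da cc}), whose $-1$ hinges entirely on using the Wronskian to eliminate $\bar L_c$ — together with the care needed in the pole-freezing residue argument; the three final integrations are comparatively routine.
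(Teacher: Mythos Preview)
Your proof is correct, and it is a genuinely different organization from the paper's. Both start from the invariance of $\boldsymbol{U}$ under $a_c$, but they differ at two key junctures.

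First, for the pole invariance (\ref{eq: dp/dac = 0}): the paper invokes the forward-referenced theorem~\ref{theo: Scattering matrix poles are the Siegert-Humblet radioactive poles} (the poles of $\boldsymbol{U}$ are exactly the $\{\mathcal{E}_j\}$, hence must be $a_c$-invariant since $\boldsymbol{U}$ is). You instead feed the Mittag--Leffler expansion of $\boldsymbol{R}_L$ into the PDEs (\ref{eq: dR//da cc})--(\ref{eq: dR//da cc'}) and match pole orders, which is more self-contained and avoids the forward reference; your side remark that $L_c$ is regular at $\mathcal{E}_j$ because $\boldsymbol{R}_L$ stays bounded at the poles $\omega_n$ of $L_c$ is correct (a Sherman--Morrison-type argument on $\boldsymbol{R}_L^{-1}=\boldsymbol{R}^{-1}-\boldsymbol{L^0}$ confirms it), and it is needed for the pole-order argument to go through.

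Second, for the width PDE (\ref{eq: drcj/dac }): the paper derives it from the invariance of the scattering residue $\boldsymbol{u_j}=[\boldsymbol{\rho}^{1/2}\boldsymbol{O}^{-1}]_{\mathcal{E}_j}\boldsymbol{r_j}$ combined with $\partial_{a_c}(\rho_c^{1/2}O_c^{-1})=\tfrac{1}{a_c}\rho_c^{1/2}O_c^{-1}(\tfrac12-L_c)$, whereas you obtain it by equating simple-pole residues in the diagonal PDE (\ref{eq: dR//da cc}). Your route works directly with $\boldsymbol{R}_L$ and sidesteps the scattering-matrix pole expansion altogether; the paper's route is shorter once theorem~\ref{theo: Scattering matrix poles are the Siegert-Humblet radioactive poles} is granted. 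Your derivation of the PDEs themselves (inverting (\ref{eq:U expression}) to write $\boldsymbol{R}_L$ in terms of $\boldsymbol{U}$ and differentiating) is also the mirror image of the paper's (which differentiates $\boldsymbol{U}$ in terms of $\boldsymbol{R}_L$); both use the Wronskian identity in the same essential way to produce the inhomogeneous $-1$ in (\ref{eq: dR//da cc}). The final three integration steps are identical to the paper's.
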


\begin{proof}
We start by bringing forth the observation that the scattering matrix $\boldsymbol{U}$ is invariant under change of channel radius $a_c$, i.e. for any channel $c$ we have:
\begin{equation}
\begin{IEEEeqnarraybox}[][c]{rcl}
\frac{\partial \boldsymbol{U}}{\partial a_c}  & \ = \ &  \boldsymbol{0}
\IEEEstrut\end{IEEEeqnarraybox}
\label{eq: dU/dac = 0}
\end{equation}
Since theorem \ref{theo: Scattering matrix poles are the Siegert-Humblet radioactive poles} will show that the poles of the scattering matrix are exactly the ones of the Kapur-Peierls operator $\boldsymbol{R}_{L}$, which are the Siegert-Humblet poles $\left\{\mathcal{E}_j\right\}$, invariance (\ref{eq: dU/dac = 0}) entails that the radioactive poles are invariant under change of channel radius $a_c$, i.e. (\ref{eq: dp/dac = 0}).

This is not the case for the radioactive widths $\left\{ r_{j,c}\right\}$. However, one can use invariance (\ref{eq: dU/dac = 0}) to differentiate the scattering matrix $\boldsymbol{U}$ expression (\ref{eq:U expression}).
The $\boldsymbol{L}$ operator definition (\ref{eq: L operator}), and $\rho_c = k_c a_c$, entail
\begin{equation}
\begin{IEEEeqnarraybox}[][c]{rcl}
\frac{\partial \rho_c^{1/2}O_c^{-1} }{\partial a_c}  & \ = \ & \frac{1}{a_c}\rho_c^{1/2}O_c^{-1}\left[ \frac{1}{2} - L_c\right]
\IEEEstrut\end{IEEEeqnarraybox}
\label{eq: drhoO-1/dac = 0}
\end{equation}
this enables us to establish the partial differential equations (\ref{eq: dR//da cc}) and (\ref{eq: dR//da cc'}) on the Kapur-Peierls matrix operator $\boldsymbol{R}_{L}$ elements, which can be synthesized into expression,
\begin{equation}
\begin{IEEEeqnarraybox}[][c]{rcl}
\boldsymbol{a} \frac{\partial \boldsymbol{R}_{L}}{\partial \boldsymbol{a}} + ( \frac{1}{2} \Id{} - \boldsymbol{L}) \boldsymbol{R}_{L} + \Id{}\circ\left[ (\frac{1}{2}\Id{} - \boldsymbol{L}) \boldsymbol{R}_{L} - \Id{}  \right] & \ = \ &  \boldsymbol{0}
\IEEEstrut\end{IEEEeqnarraybox}
\label{eq: dRL/da }
\end{equation}
where $\circ$ designates the Hadamard matrix product, and where we used the notation:
\begin{equation}
\begin{IEEEeqnarraybox}[][c]{rcl}
\left[\frac{\partial \boldsymbol{R}_{L}}{\partial \boldsymbol{a}}\right]_{cc'}  & \ \triangleq \ &  \frac{\partial {R_L}_{cc'}}{\partial a_c}
\IEEEstrut\end{IEEEeqnarraybox}
\label{eq: dRL/da def}
\end{equation}
Equivalently, inverting the Kapur-Peierls operator in differential equation (\ref{eq: dRL/da def}) yields the following Riccati equation:
\begin{equation}
\begin{IEEEeqnarraybox}[][c]{rcl}
\boldsymbol{a} \frac{\partial \boldsymbol{R}_{L}^{-1}}{\partial \boldsymbol{a}} - \boldsymbol{R}_{L}^{-1} ( \frac{1}{2} \Id{} - \boldsymbol{L})  - \Id{}\circ\left[ \boldsymbol{R}_{L}^{-1}(\frac{1}{2}\Id{} - \boldsymbol{L})  - \boldsymbol{R}_{L}^{-2}  \right] & \ = \ &  \boldsymbol{0}
\IEEEstrut\end{IEEEeqnarraybox}
\label{eq: dinvRL/da }
\end{equation}
These first order partial differential equations on the Kapur-Peierls operator $\boldsymbol{R}_L$ are equivalent to relations (71) and (75) Mockel and Perez established for the $\boldsymbol{R}$ matrix in \cite{mockelVariationMatrix1970}. They are quite inconvenient to solve in that they are channel-dependent, and thus give rise to equations for each cross term.
Remarkably, this is not the case for the radioactive residues. 

Having demonstrated the radioactive poles invariance (\ref{eq: dp/dac = 0}), Mittag-Leffler expansion (\ref{eq::U Mittag Leffler}) entails that $\boldsymbol{u_j}$ from (\ref{eq::u_j scattering residue width}) satisfies invariance: $\frac{\partial \boldsymbol{u_j}}{\partial a_c}   \ = \   \boldsymbol{0}$.
Applying result (\ref{eq: drhoO-1/dac = 0}) to the latter then yields partial differential equation (\ref{eq: drcj/dac }), the direct integration of which readily yields (\ref{eq: rcj(ac) integral form}). Since $L_c(\rho_c) \triangleq \frac{\rho_c}{O_c(\rho_c)}\frac{\partial O_c(\rho_c)}{\partial \rho_c}$, (\ref{eq: rcj(ac) integral form}) integrates explicitly to (\ref{eq: rcj(ac) explicit}). This result also stands for any degenerate state of multiplicity $M_j$, where for each radioactive width $\boldsymbol{r_j^m}$ we have:
\begin{equation}
\begin{IEEEeqnarraybox}[][c]{rcl}
\frac{{r}_{j,c}^m(a_c)}{{r}_{j,c}^m(a_c^{(0)})} =  \frac{O_c(\rho_c(a_c))}{O_c(\rho_c(a_c^{(0)}))} \sqrt{\frac{a_c^{(0)}}{a_c}}
\IEEEstrut\end{IEEEeqnarraybox}
\label{eq: rcj(ac) explicit degenerate m}
\end{equation}

Finally, the proof of (\ref{eq: rcj(ac) - integrated by Mittag-Leffler}) is the element-wise integration of (\ref{eq: rcj(ac) integral form}) using the Mittag-Leffler pole expansion (\ref{eq::Explicit Mittag-Leffler expansion of L_c}) of $L_c(\rho)$, which we established in theorem 1 of \cite{Ducru_shadow_Brune_Poles_2020} -- invoking Fubini's theorem to permute sum and integral.
In the case of neutral particles, there is a finite number of roots $\left\{ \omega_n \right\}$ so that the product in (\ref{eq: rcj(ac) - integrated by Mittag-Leffler}) is finite.
Note that in the charged particles case, there is an infinite number (countable) of roots $\left\{\omega_n\right\}$, and the Weierstrass factorization theorem would thus usually require (\ref{eq: rcj(ac) - integrated by Mittag-Leffler}) to be cast in a Hadamard canonical representation with Weierstrass elementary factors. However, in (\ref{eq: rcj(ac) - integrated by Mittag-Leffler}), the product elements tend towards unity as $n$ goes to infinity $\left( \frac{k_c a_c - \omega_n}{k_c a_c^{(0)} - \omega_n}\right) \underset{n\to \infty}{\longrightarrow} 1$, so that the infinite product in (\ref{eq: rcj(ac) - integrated by Mittag-Leffler}) should still converge.

\end{proof}

Note that for neutral particles (massive or massless) s-waves ($\ell=0$), the outgoing wave function is $O_c(\rho(a_c)) = \mathrm{e}^{\mathrm{i}k_ca_c}$ (c.f. table I of \cite{Ducru_shadow_Brune_Poles_2020}), so that (\ref{eq: rcj(ac) explicit}) yields ${r}_{j,c}(a_c) = {r}_{j,c}(a_c^{(0)}) \, \sqrt{\frac{a_c^{(0)}}{a_c}} \, \mathrm{e}^{\mathrm{i}k_c\left(a_c - a_c^{(0)}\right)}$. Alternatively, directly integrating (\ref{eq: rcj(ac) integral form}) with the outgoing-wave reduced logarithmic derivative expression $L_c(\rho(a_c)) = \mathrm{i}k_c a_c$ yields back the same result. Thus for s-wave neutral channels subject to a change of channel radius, the modulus of the radioactive widths decreases proportionally to the inverse square root of the channel radius $a_c$, at least for real wavenumbers $k_c \in \mathbb{R}$, i.e. real energies above the channel threshold.
Since the transition probability rates partial widths can be defined as the square of the modulus of the radioactive width (c.f. eq. (6) in \cite{Hale_1987}), this means these transition partial widths decrease inversely to the channel radius:
$\left|\frac{{r}_{j,c}(a_c)}{{r}_{j,c}(a_c^{(0)})} \right|^2=  \frac{a_c^{(0)}}{a_c}$.

A striking property of the R-matrix parametrizations is that they separate the channel contribution to each resonance, meaning that to compute, for instance, the $R_{c,c'}$ element in (\ref{eq:R expression}), one only requires the widths for each level of each channel, $\gamma_{\lambda,c}$, and not some new parameter for each specific channel pair $c,c'$ combination.
In this spirit, we show in theorem \ref{theo:: Radioactive parameters transformation under change of channel radius} that the Siegert-Humblet radioactive widths $r_{j,c}$ play a similar role in that their transformation under a change of channel radius only depends on that given channel.

Theorem \ref{theo:: Radioactive parameters transformation under change of channel radius} makes explicit the behavior of the radioactive widths $\big\{r_{j,c}\big\}$ under a change of channel radius $a_c$.
Strikingly, only the Kapur-Peierls matrix $\boldsymbol{R}_{L}$ appears in this change of variable. 
This means that the R-matrix $\boldsymbol{R}$ and the $\boldsymbol{L^0}$ matrix function suffice to both compute the Siegert-Humblet parameters $\big\{\mathcal{E}_j, r_{j,c}\big\}$ from (\ref{eq:R_L radioactive problem}), and to change the radioactive widths $\big\{r_{j,c}\big\}$ under a change of channel radius $a_c$.
This novel result portrays the Siegert-Humblet parameters as allowing a simple energy dependence to the scattering matrix (\ref{eq::U Mittag Leffler}) --- albeit locally and needing the expansion coefficients (\ref{eq::Hol_U expansion}) --- all the while being boundary condition $B_c$ independent and easy to transform under a change of channel radius $a_c$.

\section{\label{sec::Scattering matrix continuation to complex energies} Scattering matrix continuation to complex energies}

In section 5.2 of \cite{Theory_of_Nuclear_Reactions_I_resonances_Humblet_and_Rosenfeld_1961}, Humblet and Rosenfeld continue the scattering matrix to complex wave numbers $k_c \in \mathbb{C}$, and define corresponding open and closed channels. They however never point to the conundrum that this entails: in their approach, the scattering matrix seemingly does not annul itself below threshold.
This is contrary to the approach taken by Lane \& Thomas, where they explicitly annul the elements of the scattering matrix below thresholds, as stated in the paragraph between equations (2.1) and (2.2) of section VII.1. p.289 \cite{Lane_and_Thomas_1958}.
Claude Bloch ingeniously circumvents the problem by explicitly stating after eq. (50) in \cite{Bloch_1957} that the scattering matrix is a matrix of the open channels only, meaning its dimensions change as more channels open when energy $E$ increases past new thresholds $E > E_{T_c}$. In his approach, sub-threshold elements of the scattering matrix need not be annulled, one simply does not consider them.

We dedicate this section to this question of how to extend the scattering matrix to complex wavenumbers $k_c\in\mathbb{C}$, while closing the channels below threshold.
We argue that analytic continuation of R-matrix operators (lemma 2 section III.B of \cite{Ducru_shadow_Brune_Poles_2020}) is the physically correct way of constructing the scattering matrix for complex wavenumbers.
To support this, we advance and demonstrate three new arguments: analytic continuation cancels out spurious poles otherwise introduced by the outgoing wavefunctions $O_c$ (theorem \ref{theo: Scattering matrix poles are the Siegert-Humblet radioactive poles}); analytic continuation respects generalized unitarity (theorem \ref{theo::satisfied generalized unitarity condition}); and, for massive particles (not photons), analytic continuation of real wavenumber expressions to sub-threshold energies naturally sees the transmission matrix evanesce on the physical sheet (theorem \ref{theo::evanescence of sub-threshold transmission matrix}), while always closing the channels by annulling the cross section (theorem \ref{theo::Analytic continuation annuls sub-threshold cross sections}).

\subsection{\label{subsec::Forcing sub-threshold elements to zero} Forcing sub-threshold elements to zero: the legacy of Lane \& Thomas}

To close the channels for real energies below threshold, the simplest approach is the one proposed by Lane \& Thomas in \cite{Lane_and_Thomas_1958}.
The scattering matrix expressions (\ref{eq:U expression}) can be re-written, for real energies above threshold, according to section VII.1 equation (1.6b) in \cite{Lane_and_Thomas_1958}:
\begin{equation}
\begin{IEEEeqnarraybox}[][c]{rcl}
\boldsymbol{U} & \ = \ &  \boldsymbol{\Omega} \left( \Id{}+ \boldsymbol{w} \boldsymbol{\mathfrak{P}}^{1/2}\boldsymbol{R}_{L}\boldsymbol{\mathfrak{P}}^{1/2}  \right) \boldsymbol{\Omega}
\IEEEstrut\end{IEEEeqnarraybox}
\label{eq:U expression L&T}
\end{equation}
with Wronskian $\boldsymbol{w}$ from (\ref{eq:wronksian expression}) and the values defined for energies above the thresholds in III.3.a. p.271 of \cite{Lane_and_Thomas_1958}:
\begin{equation}
\begin{IEEEeqnarraybox}[][c]{rcl}
\boldsymbol{\Omega} & \ \triangleq \ & \boldsymbol{O}^{-1}\boldsymbol{I} \\
\boldsymbol{\mathfrak{P}} & \ \triangleq \ & \boldsymbol{\rho} \boldsymbol{O}^{-1}\boldsymbol{I}^{-1}
\IEEEstrut\end{IEEEeqnarraybox}
\label{eq:Omega and mathfrak P expression L&T}
\end{equation}
Let us note that the Mittag-Leffler expansion (\ref{eq::RL Mittag Leffler}) of the Kapur-Peierls matrix $\boldsymbol{R}_{L}$ operator can still be performed.

The Lane \& Thomas ``sub-threshold channel force-closure'' approach exploits the ambiguity in the definition of the shift $\boldsymbol{S}(E)$ and penetration $\boldsymbol{P}(E)$ factors:
\begin{equation}
\boldsymbol{L} \ = \ \boldsymbol{S} + i \boldsymbol{P}
\label{eq:: L = S + iP}
\end{equation}
for complex energies $E \in \mathbb{C}$, as discussed in section III.B of \cite{Ducru_shadow_Brune_Poles_2020} (of which we follow the notation).
Lane \& Thomas choose the branch-point definitions for the shift $\boldsymbol{S}$ and penetration $\boldsymbol{P}$ functions, made explicit in lemma 1 section III.B of \cite{Ducru_shadow_Brune_Poles_2020}.
Lane \& Thomas do not specify how they would continue the quantities (\ref{eq:Omega and mathfrak P expression L&T}) for negative energies, as they state ``we need not be concerned with stating similar relations for the negative energy channels" (c.f. paragraph after equation (4.7c), p.271.), but they do specify that $\boldsymbol{P} = \boldsymbol{0}$ below threshold energies and $\boldsymbol{P} = \boldsymbol{\mathfrak{P}}$ above.
This means that plugging-in $\boldsymbol{P} $ in place of $\boldsymbol{\mathfrak{P}}$ in (\ref{eq:Omega and mathfrak P expression L&T}) has the convenient property of automatically closing the reaction channels below threshold, since in that case $U_{c,c'} = \Omega_c \Omega_{c'}$, which annuls the off-diagonal terms of the cross section (the reaction channels $c\neq c'$) when plugged into equation (1.10) in \cite{Lane_and_Thomas_1958} VIII.1. p.291.
Note that this approach only annuls the off-diagonal terms of the scattering cross section, leaving non-zero cross sections for the diagonal $\sigma_{cc}(E)$, even below threshold.  Indeed, equation (4.5a) section III.4.a., p.271 of \cite{Lane_and_Thomas_1958} gives $\Omega_c = \mathrm{e}^{\mathrm{i}(\omega_c - \phi_c)}$, whilst the cross section is begotten by the amplitudes of the \textit{transmission matrix} $\boldsymbol{T}(E)$, defined as $T_{cc'} \triangleq \delta_{cc'}\mathrm{e}^{2\mathrm{i}\omega_c} - U_{cc'} $ in (2.3), section VIII.2., p.292. For sub-threshold real energies, the diagonal term of the transmission matrix is thus equal to $T_{cc} = \mathrm{e}^{2\mathrm{i}\omega_c}\left( 1 - \mathrm{e}^{-2\mathrm{i}\phi_c} \right)$. This means that in the Lane \& Thomas approach, all channels $c'\neq c$ are force-closed to zero below the incoming channel threshold $E < E_{T_c}$, except for the $c \to c$ reaction, which is tactfully overlooked as non-physical.

Of course, this approach comes at the cost of sacrificing the analytic properties of the scattering matrix $\boldsymbol{U}$: since $P_c = \Im\left[L_c\right]$, the penetration factor is no longer meromorphic and thus neither is $\boldsymbol{U}$.
This entails that in decomposition (\ref{eq:U expression L&T}) of the scattering matrix, if one ``force-closes'' the channels using the branch-point definition of Lane \& Thomas --- instead of analytically continuing both $\boldsymbol{\mathfrak{P}}$ and $\boldsymbol{\Omega}$ to complex wavenumbers $\boldsymbol{\rho} \in \mathbb{C}$ --- the scattering matrix $\boldsymbol{U}(E)$ cannot have poles, as there is then no mathematical meaning to such notion.
This goes directly against a vast amount of literature on the analytic properties of the scattering matrix \cite{Eden_and_Taylor, Theory_of_Nuclear_Reactions_I_resonances_Humblet_and_Rosenfeld_1961, Theory_of_Nuclear_Reactions_II_optical_model_Rosenfeld_1961, Theory_of_Nuclear_Reactions_III_Channel_radii_Humblet_1961_channel_Radii, Theory_of_Nuclear_Reactions_IV_Coulomb_Humblet_1964, Theory_of_Nuclear_Reactions_V_low_energy_penetrations_Jeukenne_1965, Theory_of_Nuclear_Reactions_VI_unitarity_Humblet_1964, Theory_of_Nuclear_Reactions_VII_Photons_Mahaux_1965, Theory_of_Nuclear_Reactions_VIII_evolutions_Rosenfeld_1965, Theory_of_Nuclear_Reactions_IX_few_levels_approx_Mahaux_1965, Humblet_1990, Pacific_Journal_Mittag_Leffler_and_spectral_theory_1960, Guillope_1989_ENS, Dyatlov}.
This is the approach presently taken by the SAMMY code at Oak Ridge National Laboratory \cite{SAMMY_2008}, and upon which rest numerous ENDF evaluations \cite{ENDFBVIII8th2018brown}.


We would like to note that under careful reading, this might not actually have been the approach intended by Lane \& Thomas in \cite{Lane_and_Thomas_1958}. Indeed, Lane \& Thomas never specify how to prolong the $\boldsymbol{\mathfrak{P}}$ to sub-threshold energies, and in equation (\ref{eq:U expression L&T}) it is $\boldsymbol{\mathfrak{P}}$ that is present and not $\boldsymbol{P}$.
They do however note in the paragraph between equations (2.1) and (2.2) of section VII.1. p.289, that ``as there are no physical situations in which the $I_c^-$ occur, the components of the [scattering matrix] are not physically significant and one might as well set them equal to zero as can be seen from (1.6b). This may be accomplished without affecting the [positive energy channels] by setting the negative energy components of the Wronskian matrix to zero; $w_c^-=0$. (This means that the $O_c^-$ and $I_c^-$ are not linearly independent.)".
The choice of wording is here important. Indeed, it says that it is possible to set the Wronskian to zero to close channels below the threshold, though it is not necessary. This is yet another way of closing sub-threshold channels that would allow to keep the analytic properties of the scattering matrix, with $\boldsymbol{\mathfrak{P}} \triangleq  \boldsymbol{\rho} \boldsymbol{O}^{-1}\boldsymbol{I}^{-1} $ still analytically continued, albeit at the cost of not knowing when in the complex plane should the Wronskian $w_c$ be set to zero --- perhaps only on $\mathbb{R}_-$, which would then become a branch line. \\
We show in theorem \ref{theo: Scattering matrix poles are the Siegert-Humblet radioactive poles} (section \ref{subsec::Spurious poles cancellation for analytically continued scattering matrix}) that as long as the Wronskian relation (\ref{eq:wronksian expression}) is guaranteed, the poles of the outgoing scattering wave function $O_c$ cancel out of the scattering expressions (\ref{eq:U expression}) and (\ref{eq:U expression L&T}). The Wronskian condition (\ref{eq:wronksian expression}) is conserved when keeping $\boldsymbol{\mathfrak{P}} $ from (\ref{eq:Omega and mathfrak P expression L&T}) analytically continued --- instead of the definition $P_c \triangleq \Im\left[L_c\right]$, which cannot respect the Wronskian relation (\ref{eq:wronksian expression}) --- so that this approach of setting the Wronskian to zero below threshold while analytically continuing the penetration and shift factors would indeed cancel out the spurious poles of the outgoing wave functions $O_c$.

\subsection{\label{subsec::Analytic continuation of the scattering matrix} Analytic continuation of the scattering matrix}

In opposition to the Lane \& Thomas approach, an entire field of physics and mathematics has studied the analytic continuation of the scattering matrix to complex wavenumbers $k_c\in \mathbb{C}$ \cite{Dyatlov, Guillope_1989_ENS, S-matrix-complex_resonance_parameters_Csoto_1997, Favella_and_Reineri_1962, complex_energy_above_inoization_threshold_1969, Theory_of_Nuclear_Reactions_I_resonances_Humblet_and_Rosenfeld_1961, Theory_of_Nuclear_Reactions_II_optical_model_Rosenfeld_1961, Theory_of_Nuclear_Reactions_III_Channel_radii_Humblet_1961_channel_Radii, Theory_of_Nuclear_Reactions_IV_Coulomb_Humblet_1964, Theory_of_Nuclear_Reactions_V_low_energy_penetrations_Jeukenne_1965, Theory_of_Nuclear_Reactions_VI_unitarity_Humblet_1964, Theory_of_Nuclear_Reactions_VII_Photons_Mahaux_1965, Theory_of_Nuclear_Reactions_VIII_evolutions_Rosenfeld_1965}.

As we saw, there is no ambiguity as to how to continue the $\boldsymbol{L}(\boldsymbol{\rho})$ matrix function to complex wave numbers (c.f. theorem 1 section II.B of \cite{Ducru_shadow_Brune_Poles_2020}), and thus the $\boldsymbol{R}_{L}$ Kapur-Peierls operator (\ref{eq: def Kapur-Peierls operator}). Indeed, the incoming $I_c(\rho_c)$ and outgoing $O_c(\rho_c)$ wave functions can be analytically continued to complex wavenumbers $k_c\in \mathbb{C}$ (c.f. theorem 1 section II.B of \cite{Ducru_shadow_Brune_Poles_2020}), and through the multi-sheeted mapping (\ref{eq:rho_c(E) mapping}) to complex energies $E\in \mathbb{C}$. This naturally yields the meromorphic continuation of the scattering matrix to complex energies (\ref{eq::U Mittag Leffler}).
Since many evaluations are performed using decomposition (\ref{eq:U expression L&T}), in practice performing analytic continuation of R-matrix operators thus means continuing (\ref{eq:Omega and mathfrak P expression L&T}) operators $\boldsymbol{\Omega}$ and $\boldsymbol{\mathfrak{P}}$, setting $\boldsymbol{\mathfrak{P}} \triangleq \boldsymbol{P}$, and defining the shift $S_c(\rho_c)$ and penetration $P_c(\rho_c)$ functions as analytically continued complex meromorphic functions (that is definition (44) and lemma 2 of \cite{Ducru_shadow_Brune_Poles_2020} as opposed to the Lane \& Thomas ``force closure'' definition (41) and lemma 1 of \cite{Ducru_shadow_Brune_Poles_2020}).

The shortcoming of this analytic continuation approach is that it does not evidently annul the channel elements of the scattering matrix for sub-threshold energies $E<E_{T_c}$. Indeed, analytic continuation (\ref{eq::U Mittag Leffler}) means the scattering matrix $\boldsymbol{U}$ is a meromorphic operator from $\mathbb{C}$ to $\mathbb{C}$ on the multi-sheeted Riemann surface of mapping (\ref{eq:rho_c(E) mapping}). Unicity of the analytic continuation then means that if the scattering matrix elements are zero below their threshold, $U_{c,c'}(E) = 0 \; , \; \forall E - E_{T_c} \in \mathbb{R}_-$, then it is identically zero for all energies on that sheet of the manifold: $U_{c,c'}(E) = 0 \; ,\; \forall E \in \mathbb{C}$. Thus, the analytic continuation formalism cannot set elements of the scattering matrix to be identically zero below thresholds $\left\{E_{T_c}\right\}$.

This apparent inability to close channels below thresholds is the principal reason why the nuclear data community has stuck to the legacy approach of Lane \& Thomas (lemma 1 section III.B of \cite{Ducru_shadow_Brune_Poles_2020}), when computing the scattering matrix in equation (\ref{eq:U expression}). This has been the subject of an ongoing controversy in the field on how to continue the scattering matrix to complex wave numbers.

\subsection{\label{subsubsec:: Semi-simple poles in R-matrix theory} Assuming semi-simple poles in R-matrix theory}

Before advancing our analytic continuation arguments of channel closure (section \ref{subsec::Closure of sub-threshold cross sections through analytic continuation}) and generalized unitarity (section \ref{subsec::Generalized unitarity}), let us first start with a general note on high-order poles in R-matrix theory (see the consequences for analytic continuation in section \ref{subsec::Spurious poles cancellation for analytically continued scattering matrix}).
Being a high-order pole, as opposed to a simple pole, can bear various meanings. In our context, the three following definitions are of interest: a) \textit{Laurent order}: the order of the polar expansion in the Laurent development in the vicinity of a pole; b) \textit{Algebraic multiplicity}: the multiplicity of the root of the resolvant at a pole value; c) \textit{Geometric multiplicity}: the dimension of the associated nullspace.

From equation (\ref{eq::RL Mittag Leffler degenerate state}) and throughout the article, we have treated the case of degenerate states where the geometric multiplicity $M_j > 1$ was higher than one, leading to rank-$M_j$ residues.
We have however always assumed the Laurent order to be one: in equation (\ref{eq::RL Mittag Leffler degenerate state}), the residues might be rank-$M_j$, but the Laurent order is still unity (no $\frac{1}{(E-\mathcal{E}_j)^2}$ or higher Laurent orders).

In the general case, the Laurent order is greater than one but it does not equal geometric or algebraic multiplicity.
In terms of Jordan normal form, if the Jordan cells had sizes $n_1, ..., n_{m_g}$, then the geometric multiplicity is equal to $m_g$, the algebraic multiplicity $m_a$ is the sum $ m_a = n_1 + ... + n_{m_g}$, and the Laurent order is the maximum $\mathrm{max}\left\{n_1, ..., n_m\right\}$.

Alternatively, these can be defined as follows:
Let $\boldsymbol{M}(z)$ be a complex-symmetric meromorphic matrix operator, with a root at $z=z_0$ (i.e. $\boldsymbol{M}(z_0)$ is non-invertible).
The algebraic multiplicity $m_a$ is the first non-zero derivative of the determinant, i.e. the first integer $m_a\in\mathbb{N}$ such that $\left.\frac{\mathrm{d}^{m_a}}{\mathrm{d}z^{m_a}}\mathrm{det}\Big( \boldsymbol{M}(z) \Big)\right|_{z=z_0} \neq 0$ ; alternatively, using Cauchy's theorem, the first integer $m_a$ such that $\oint_{\mathcal{C}_{z_0}}  \frac{\boldsymbol{M}(z)}{(z-z_0)^{m_a}} \mathrm{d}z = \boldsymbol{0}$.
The geometric multiplicity $m_g$ is the dimension of the kernel (nullspace), i.e. $m_g = \mathrm{dim}\left( \mathrm{Ker}\left(\boldsymbol{M}(z_0)\right)\right)$.
In general the algebraic multiplicity is greater than the geometric one: $m_a \geq m_g$.

$\boldsymbol{M}(z_0)$ is said to be \textit{semi-simple} if its geometric and algebraic multiplicities are equal, i.e. $m_a = m_g$ (c.f. theorem 2, p.120 in \cite{Elements_de_Mathematique_Algebre_Bourbaki_1959}).
Semi-simplicity can be established using the following result: $\boldsymbol{M}(z_0)$ is semi-simple if, and only if, for each nonzero $v \in \mathrm{Ker}\left( \boldsymbol{M}(z_0)\right)$, there exists $ w \in \mathrm{Ker}\left( \boldsymbol{M}(z_0)\right)$ such that:
\begin{equation}
v^\mathsf{T} \Bigg(\left. \frac{\mathrm{d}\boldsymbol{M}}{\mathrm{d}z} \right|_{z=z_0}  \Bigg)w \neq 0
\label{eq::semisimple condition}
\end{equation}

If an operator $\boldsymbol{M}(z_0)$ is semi-simple at a root $z_0$, then $z_0$ is a pole of Laurent order one for the inverse operator $\boldsymbol{M}^{-1}(z) \underset{\mathcal{W}(z_0)}{\sim} \frac{\boldsymbol{\widetilde{M}}}{z-z_0} $.
For Hermitian operators, the semi-simplicity property is guaranteed. However, resonances seldom correspond to Hermitian operators. In our case, the resonances correspond to the poles of the scattering matrix $\boldsymbol{U}(E)$, which is not self-adjoint but complex-symmetric $\boldsymbol{U}^\mathsf{T} = \boldsymbol{U}$ (c.f. equation (2.15) section VI.2.c p.287 in \cite{Lane_and_Thomas_1958}).
For complex-symmetric operators, semi-simplicity is not guaranteed in general, even when discarding the complex case of quasi-null vectors.

In the case of R-matrix theory, we were able to find cases where the geometric multiplicity of the scattering matrix does not match the algebraic one, thus R-matrix theory does not always yield semi-simple scattering matrices, and the Laurent development orders of the resonance poles can be higher.
For instance, we can devise examples of non-semi-simple inverse level matrices from definition (\ref{eq:inv_A expression}) by choosing resonance parameters such that the algebraic multiplicity is strictly greater than the geometric one.

However, one can also observe in these simple cases that the space of parameters for which semi-simplicity is broken is a hyper-plane of the space of R-matrix parameters. 
This gives credit to the traditional physics arguments that the probability of this occurring is quasi-null: R-matrix theory can yield scattering matrices with Laurent orders higher than one, but this is extremely unlikely; a mathematical approach of generic simplicity of resonances can be found in chapter 4 ``Black Box Scattering in $\mathbb{R}^n$" of \cite{Dyatlov}, in particular theorem 4.4 (Meromorphic continuation for black box Hamiltonians), theorem 4.5 (Spectrum of black box Hamiltonians), theorem 4.7 (Singular part of RV($\lambda$) for black box Hamiltonians), and theorem 4.39 (Generic simplicity of resonances for higher dimensional black box with potential perturbation).
In other terms, we assume semi-simplicity is almost always guaranteed through R-matrix parametrizations.

Henceforth, we use this argument to continue assuming the Kapur-Peierls matrix $\boldsymbol{R}_{L}$ is usually semi-simple, and thus the Laurent order of the radioactive poles $\big\{ \mathcal{E}_j \big\}$ in (\ref{eq::RL Mittag Leffler degenerate state}) is, in practice, one.

But let us be aware that in general scattering theory, the scattering operator may exhibit high-order poles \cite{Dyatlov, Guillope_1989_ENS, Owusu_2009}, and efforts are being made to have these ``exceptional points'' of second order arise in the specific case of nuclear interactions \cite{Exceptional_points_scattering, Exceptional_points_Hamiltonian_Michel}. The traditional R-matrix assumption where the poles of the scattering matrix are almost-always of Laurent-order one is unable to describe these physical phenomena.

\subsection{\label{subsec::Spurious poles cancellation for analytically continued scattering matrix} Scattering matrix poles are the Siegert-Humblet radioactive poles}

This section is dedicated to a remarkable property of the Siegert-Humblet radioactive poles $\left\{ \mathcal{E}_j \right\}$: in R-matrix theory, these are exactly the poles of the scattering matrix (theorem \ref{theo: Scattering matrix poles are the Siegert-Humblet radioactive poles}).

\begin{theorem}\label{theo: Scattering matrix poles are the Siegert-Humblet radioactive poles} \textsc{Scattering matrix poles are the Siegert-Humblet radioactive poles}. \\
In R-matrix theory, when the R-matrix operators (Kapur-Peierls $\boldsymbol{R}_L$ and incoming and outgoing wavefunctions $\boldsymbol{I}$ and $\boldsymbol{O}$) are analytically continued to complex energies $E\in\mathbb{C}$ such as to respect the Wronskian condition (\ref{eq:wronksian expression}), then  the poles of the scattering matrix $\boldsymbol{U}$ are exactly the poles of the Kapur-Peierls operator $\boldsymbol{R}_{L}$, i.e. the Siegert-Humblet radioactive poles $\big\{ \mathcal{E}_j \big\}$ from (\ref{eq:R_L radioactive problem}) and (\ref{eq:E_j pole def}).
These poles are almost always of Laurent-order of one.
\end{theorem}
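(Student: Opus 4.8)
The plan is to read the candidate singularities of $\boldsymbol{U}$ straight off the R-matrix formula (\ref{eq:U expression}) and then to show that only the radioactive ones survive. Away from the threshold branch points of (\ref{eq:rho_c(E) mapping}) the functions $\boldsymbol{I}$ and $\boldsymbol{O}$ are holomorphic, $\boldsymbol{O}^{-1}$ is meromorphic with poles exactly at the zeros $\{\omega_n\}$ of the $O_c$ (which by (\ref{eq::Explicit Mittag-Leffler expansion of L_c}) are the poles of $\boldsymbol{L}$), and by definition the only poles of $\boldsymbol{R}_L$ are the radioactive poles $\{\mathcal{E}_j\}$ — in particular $\boldsymbol{R}_L$ is \emph{regular} at the $\{\omega_n\}$ and at the Wigner-Eisenbud levels $\{E_\lambda\}$. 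Hence the poles of $\boldsymbol{U}$ are contained in $\{\omega_n\}\cup\{\mathcal{E}_j\}$, and the content of the theorem is (i) every $\omega_n$ is a removable, ``spurious'' singularity, and (ii) every $\mathcal{E}_j$ genuinely persists.

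For (i) I would first recast (\ref{eq:U expression}) in the factored form $\boldsymbol{U} = \boldsymbol{O}^{-1}\boldsymbol{V}\boldsymbol{O}^{-1}$ with $\boldsymbol{V} \triangleq \boldsymbol{I}\boldsymbol{O} + 2\mathrm{i}\,\boldsymbol{\rho}^{1/2}\boldsymbol{R}_L\boldsymbol{\rho}^{1/2}$, which is legitimate because $\boldsymbol{O}$, $\boldsymbol{I}$, $\boldsymbol{\rho}$ are diagonal and therefore commute; the inner matrix $\boldsymbol{V}$ is holomorphic wherever $\boldsymbol{R}_L$ is. Then $U_{cc'} = V_{cc'}/(O_cO_{c'})$, so at a simple zero $\omega_n$ of a given $O_c$ it suffices to show that $V_{cc}$ vanishes to order at least two at $\omega_n$, that $V_{cc'}$ and $V_{c'c}$ vanish to order at least one for $c'\neq c$, while the remaining entries $V_{c'c''}$ with $c',c''\neq c$ are automatically finite there, carrying no $O_c$ in the denominator.

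The crux is the local behaviour of $\boldsymbol{R}_L$ at $\omega_n$. Using $\boldsymbol{R}_L^{-1} = \boldsymbol{R}^{-1} - (\boldsymbol{L} - \boldsymbol{B})$, which follows from (\ref{eq: def Kapur-Peierls operator}), together with the fact that at a simple zero $\omega_n$ of $O_c$ the function $L_c = \rho_cO_c^{(1)}/O_c$ has a simple pole of residue $\omega_n$ sitting in the $(c,c)$ slot (immediate from (\ref{eq: L operator}), or from (\ref{eq::Explicit Mittag-Leffler expansion of L_c})), one gets $\boldsymbol{R}_L^{-1} = -\frac{\omega_n}{\rho_c-\omega_n}\boldsymbol{E}_{cc} + \boldsymbol{G}_0 + \mathcal{O}(\rho_c-\omega_n)$, where $\boldsymbol{E}_{cc}$ is the matrix with $1$ in position $(c,c)$ and $0$ elsewhere and $\boldsymbol{G}_0$ is holomorphic (using that $\boldsymbol{R}$, hence $\boldsymbol{R}^{-1}$, is regular at $\omega_n$). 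A rank-one perturbation inversion (Sherman--Morrison, or the Gohberg--Sigal machinery already invoked in the paper) then gives that $\boldsymbol{R}_L$ is holomorphic at $\omega_n$ with, writing $\epsilon = \rho_c-\omega_n$, $(\boldsymbol{R}_L)_{cc} = -\epsilon/\omega_n + \mathcal{O}(\epsilon^2)$ and $(\boldsymbol{R}_L)_{cc'} = \mathcal{O}(\epsilon)$ for $c'\neq c$, the remaining block converging to a finite limit. Inserting this into $\boldsymbol{V}$ and invoking the Wronskian hypothesis (\ref{eq:wronksian expression}) evaluated at $\omega_n$, which forces $I_c(\omega_n)O_c^{(1)}(\omega_n) = 2\mathrm{i}$, one finds $[I_cO_c]'(\omega_n) = 2\mathrm{i}$ and $[\rho_c(\boldsymbol{R}_L)_{cc}]'(\omega_n) = -1$, so $V_{cc} = I_cO_c + 2\mathrm{i}\rho_c(\boldsymbol{R}_L)_{cc}$ has a double zero and $V_{cc'}$ a simple zero at $\omega_n$; the spurious poles cancel, and the cancellation visibly hinges on the Wronskian (the Lane \& Thomas force-closure, which breaks (\ref{eq:wronksian expression}) below threshold, destroys precisely this identity). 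For (ii), at a radioactive pole $\mathcal{E}_j$ — generically distinct from every $\omega_n$, so $O_c(\mathcal{E}_j)\neq0$, and with non-vanishing residue width $\boldsymbol{r}_j$ — the factored form shows $\boldsymbol{U}$ acquires a genuine simple pole of residue $\boldsymbol{w}\boldsymbol{u}_j\boldsymbol{u}_j^\mathsf{T}$ with $\boldsymbol{u}_j = [\boldsymbol{\rho}^{1/2}\boldsymbol{O}^{-1}]_{\mathcal{E}_j}\boldsymbol{r}_j$, reproducing (\ref{eq::U Mittag Leffler})--(\ref{eq::u_j scattering residue width}); the ``almost always Laurent order one'' clause is then just the semi-simplicity discussion of section \ref{subsubsec:: Semi-simple poles in R-matrix theory} applied to $\boldsymbol{R}_L$ and carried through the factorization.

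I expect the real obstacle to be the rigorous treatment of the degenerate configurations at the $\omega_n$: the cases where $\boldsymbol{R}$ fails to be invertible at $\omega_n$, where the $(c,c)$ entry of $\boldsymbol{G}_0^{-1}$ vanishes (so that $\omega_n$ is itself a radioactive pole, with a residue width whose $c$-component vanishes, and the cancellation must be re-examined as a limit), where two channels have coincident $O$-zeros, or where some $O_c$ has a higher-order zero. Each of these is a proper analytic subvariety of R-matrix parameter space and can be discarded by the same genericity/continuity reasoning used for semi-simplicity, but writing that out carefully — together with the charged-particle case, where the $\{\omega_n\}$ are countably infinite and one must invoke convergence of the associated Mittag--Leffler sum — is where the work concentrates. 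Throughout, the statement is read sheet by sheet on the Riemann surface of (\ref{eq:rho_c(E) mapping}), away from the branch points $\{E_{T_c}\}$.
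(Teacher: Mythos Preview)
Your constructive argument is correct and runs in parallel to the paper's own constructive proof: both isolate the local behaviour of $\boldsymbol{R}_L$ near a zero $\omega_n$ of $O_c$ and show that the Wronskian identity $O_c^{(1)}(\omega_n)I_c(\omega_n)=2\mathrm{i}$ kills the apparent singularity. The tactics differ. You first symmetrize $\boldsymbol{U}=\boldsymbol{O}^{-1}\boldsymbol{V}\boldsymbol{O}^{-1}$ and run a Sherman--Morrison expansion of $\boldsymbol{R}_L=\big[\boldsymbol{G}_0-\frac{\omega_n}{\epsilon}\boldsymbol{E}_{cc}\big]^{-1}$ entry by entry, obtaining $(\boldsymbol{R}_L)_{cc}=-\epsilon/\omega_n+\mathcal{O}(\epsilon^2)$ and hence the double zero of $V_{cc}$. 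The paper instead keeps the asymmetric form $\boldsymbol{U}=\boldsymbol{O}^{-1}[\boldsymbol{I}+2\mathrm{i}\boldsymbol{\rho}^{1/2}\boldsymbol{R}_L\boldsymbol{O}^{-1}\boldsymbol{\rho}^{1/2}]$, proves a Diagonal Semi-Simplicity lemma giving the rank-$M_n$ residue $\boldsymbol{D}_n$ of $\boldsymbol{O}^{-1}$, and evaluates $\boldsymbol{R}_L\boldsymbol{O}^{-1}$ on the kernel vectors directly to obtain $\boldsymbol{R}_L\boldsymbol{O}^{-1}(\omega_n)\boldsymbol{w}_{n,m}=-[\boldsymbol{\rho}\boldsymbol{O}^{(1)}]^{-1}(\omega_n)\boldsymbol{w}_{n,m}$, whence $\boldsymbol{D}_n[\boldsymbol{I}-2\mathrm{i}(\boldsymbol{O}^{(1)})^{-1}]_{\omega_n}=\boldsymbol{0}$. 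Your route is cleaner for the generic single-channel zero; the paper's route handles head-on, without any genericity deferral, the case where several channels share a common $O_c$ (same $\ell_c$ and same $a_c$), which it stresses is \emph{not} exceptional in practice because evaluators deliberately re-use channel radii --- so your plan to discard that case ``by the same genericity/continuity reasoning'' is weaker than what the paper actually delivers, though your Sherman--Morrison would upgrade to Woodbury there with little extra effort. The paper also supplies a separate one-line \textit{reductio} proof: under the Wronskian condition, (\ref{eq:U expression}) is equivalent to Lane \& Thomas eq.~(1.5), VII.1, in which no $\boldsymbol{O}^{-1}$ appears at all, so the absence of $\omega_n$-poles is immediate.
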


Section \ref{subsubsec:: Semi-simple poles in R-matrix theory} gives the reasons to assume that the poles of the Kapur-Peierls matrix $\boldsymbol{R}_{L}$ are simple (i.e. or Laurent order one).
For the rest of this theorem, we here give two proofs: a first by \textit{reductio ad absurdum}, and a second constructive proof.

\begin{proof}
\textit{Reductio ad absurdum}: Since the radioactive poles $\mathcal{E}_j$ are not poles of the outgoing wavefunction, i.e. $\boldsymbol{O}^{-1} \boldsymbol{\rho}^{1/2}(\mathcal{E}_j) \neq \boldsymbol{0} $, expression (\ref{eq:U expression}) implies that all the poles $\mathcal{E}_j$ of the Kapur-Peierls $\boldsymbol{R}_L(E)$ operator are poles of the scattering matrix $\boldsymbol{U}(E)$. 
As first sight, expression (\ref{eq:U expression}) would suggest the roots $\left\{ \omega_n\right\}$ of the outgoing wavefunctions (i.e. all such that there exists a channel $c$ for which $O_c(\omega_n) = 0$) are also poles of the scattering matrix. 
However, when performing analytic continuation of R-matrix operators while conserving the Wronskian condition (\ref{eq:wronksian expression}), expression (\ref{eq:U expression}) is equivalent to expression eq. (1.5) of section VII.1 of \cite{Lane_and_Thomas_1958}, for which it is evident that the roots $\left\{ \omega_n\right\}$ of the outgoing wavefunction $O_c(\rho_c)$ are not poles of the scattering matrix $\boldsymbol{U}$ (that is because in both the Coulomb and the neutral particle case the outgoing wavefunctions $O_c(\rho_c)$ are confluent hypergeometric functions with simple roots $\left\{ \omega_n\right\}$ entailing that $O_c^{(1)}(\omega_n) \neq 0$).
Hence the poles of the scattering matrix $\boldsymbol{U}(E)$ must be exactly all the radioactive poles $\mathcal{E}_j$.
\end{proof}
Though this latter proof is correct, it does not explain how the roots $\left\{ \omega_n\right\}$ of the outgoing wavefunction $O_c(\rho_c)$ cancel out of the scattering matrix in expression (\ref{eq:U expression}). 
Yet it is important to understand this because expression (\ref{eq:U expression}) defines the potential cross section in standard nuclear data libraries, which taken as is should thus count the $\left\{ \omega_n\right\}$ as poles. We use this explicit cancellation of these spurious poles at the residues level to establish the windowed multipole representation in our follow-up article \cite{Ducru_WMP_Phys_Rev_C_2019}. 
Moreover, if one uses the Lane \& Thomas ``force closure'' definitions, then expression (\ref{eq:U expression}) and eq. (1.5) of section VII.1 of \cite{Lane_and_Thomas_1958} are no longer equivalent in the complex plane. In this case, not only is the scattering matrix $\boldsymbol{U}(E)$ no longer meromorphic, but it also diverges at the $\left\{ \omega_n\right\}$ outgoing wavefunction roots. 
Also, a constructive proof requires a closer look at the behavior of specific poles and residues, and gives us an opportunity to explain in detail different non-trivial assumptions usually made in nuclear physics about radioactive states and other states degeneracy. 
For all these reasons, we believe it of interest to here provide a second, constructive proof of theorem \ref{theo: Scattering matrix poles are the Siegert-Humblet radioactive poles}. 
It rests on the following lemma \ref{lem::Diagonal semi-simplicity Lemma}.

\begin{lem}\label{lem::Diagonal semi-simplicity Lemma}
\textsc{Diagonal Semi-Simplicity} --
If a diagonal matrix $\boldsymbol{D}^{-1}(z)$ is composed of elements with simple roots $\big\{\omega_n\big\}$, then its inverse is semi-simple, i.e. when a pole $\omega_n$ of a diagonal matrix $\boldsymbol{D}(z)$ has an algebraic multiplicity $M_n > 1$ the Laurent development order of the pole remains 1 while the associated residue matrix is of rank $M_n$, and can be expressed as:
\begin{equation}
\begin{IEEEeqnarraybox}[][c]{rcl}
\boldsymbol{D}(z) & \underset{\mathcal{W}(z=\omega_n)}{=} & \boldsymbol{D}_0 + \frac{\boldsymbol{D}_n}{z-\omega_n} \\
\boldsymbol{D}_n & \triangleq & \sum_{m=1}^{M_n}\frac{\boldsymbol{v_n}\boldsymbol{v_n}^\mathsf{T} }{\boldsymbol{v_n}^\mathsf{T} {\boldsymbol{D_0^{-1}}}^{(1)} \boldsymbol{v_n}}
\IEEEstrut\end{IEEEeqnarraybox}
\label{eq::higher algebraic multiplicity scattering poles}
\end{equation}
\end{lem}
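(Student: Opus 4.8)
The plan is to exploit the fact that $\boldsymbol{D}^{-1}(z)$ is diagonal, so the problem decouples into $N_c$ independent scalar ones and no Jordan-block mixing between channels is possible. First I would write $\boldsymbol{D}^{-1}(z) = \boldsymbol{\mathrm{diag}}(d_c(z))$, hence $\boldsymbol{D}(z) = \boldsymbol{\mathrm{diag}}(1/d_c(z))$, and fix a root $z=\omega_n$ of $\boldsymbol{D}^{-1}$. Let $\mathcal{S}_n \triangleq \{ c : d_c(\omega_n)=0 \}$ collect the channels responsible for the zero, so that $M_n = |\mathcal{S}_n|$ is simultaneously the order of $\omega_n$ as a root of $\mathrm{det}\,\boldsymbol{D}^{-1}(z) = \prod_c d_c(z)$ --- the algebraic multiplicity, since by hypothesis each $d_c$ has only simple zeros --- and the dimension of $\mathrm{Ker}\,\boldsymbol{D}^{-1}(\omega_n) = \mathrm{Span}(\boldsymbol{e}_c)_{c\in\mathcal{S}_n}$ --- the geometric multiplicity. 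Their equality is precisely semi-simplicity; what remains is to read off the residue.

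Next I would Laurent-expand entry by entry. For $c\in\mathcal{S}_n$, simplicity of the root gives $d_c(z) = (z-\omega_n)\,g_c(z)$ with $g_c$ holomorphic and $g_c(\omega_n)=d_c^{(1)}(\omega_n)\neq 0$, whence $1/d_c(z)$ has at $\omega_n$ a simple pole of residue $1/d_c^{(1)}(\omega_n)$; for $c\notin\mathcal{S}_n$, $1/d_c(z)$ is holomorphic at $\omega_n$. Reassembling the diagonal gives, in a neighborhood of $\omega_n$, $\boldsymbol{D}(z) = \boldsymbol{D}_0(z) + \boldsymbol{D}_n/(z-\omega_n)$ with $\boldsymbol{D}_n$ the diagonal matrix carrying $1/d_c^{(1)}(\omega_n)$ on the entries $c\in\mathcal{S}_n$ and zero elsewhere. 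This is manifestly of Laurent order one and of rank $M_n$, which is the assertion of the lemma.

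To recover the normalization written in (\ref{eq::higher algebraic multiplicity scattering poles}), I would choose the canonical kernel basis $\boldsymbol{v_n^m} = \boldsymbol{e}_{c_m}$ indexed by the channels $c_m\in\mathcal{S}_n$. Because $(\boldsymbol{D}^{-1})^{(1)}(\omega_n) = \boldsymbol{\mathrm{diag}}(d_c^{(1)}(\omega_n))$ --- the diagonal analogue of the operator $\partial\boldsymbol{R}_{L}^{-1}/\partial E$ used in (\ref{eq:R_L residues for degenerate case}), and the object written ${\boldsymbol{D_0^{-1}}}^{(1)}$ in the lemma --- is itself diagonal, these vectors are automatically $(\boldsymbol{D}^{-1})^{(1)}$-orthogonal, ${\boldsymbol{e}_{c_m}}^\mathsf{T}(\boldsymbol{D}^{-1})^{(1)}(\omega_n)\boldsymbol{e}_{c_k} = \delta_{mk}\,d_{c_m}^{(1)}(\omega_n)$, and $\boldsymbol{e}_{c_m}{\boldsymbol{e}_{c_m}}^\mathsf{T}$ is the projector onto coordinate $c_m$. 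Summing over $m$ then reproduces $\boldsymbol{D}_n$ exactly, giving (\ref{eq::higher algebraic multiplicity scattering poles}). Independence of this expression from the particular kernel basis is the complex-symmetric Keldysh / Gohberg-Sigal residue formula for a semi-simple root, valid under the same non-quasi-null hypothesis ${\boldsymbol{v_n^m}}^\mathsf{T}\boldsymbol{v_n^m}\neq 0$ invoked in (\ref{eq:R_L residues for degenerate case}), which guarantees that a $(\boldsymbol{D}^{-1})^{(1)}$-orthogonal basis of $\mathrm{Ker}\,\boldsymbol{D}^{-1}(\omega_n)$ can be selected.

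I do not expect a genuine obstacle: diagonality makes the statement nearly tautological. The only real work is the bookkeeping identifying $M_n$ with both multiplicities and tying the coordinate residue $\boldsymbol{D}_n$ to the basis-free quadratic-form normalization of (\ref{eq::higher algebraic multiplicity scattering poles}); as a cross-check, the semi-simplicity criterion (\ref{eq::semisimple condition}) holds here because for any $0\neq v\in\mathrm{Ker}\,\boldsymbol{D}^{-1}(\omega_n)$ one can pick $c_0\in\mathcal{S}_n$ with $v_{c_0}\neq 0$ and take $w=\boldsymbol{e}_{c_0}$, so that $v^\mathsf{T}(\boldsymbol{D}^{-1})^{(1)}(\omega_n)\,w = v_{c_0}\,d_{c_0}^{(1)}(\omega_n)\neq 0$. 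Finally I would note that the lemma is meant to be applied with $\boldsymbol{D}^{-1}=\boldsymbol{O}$ (so $\boldsymbol{D}=\boldsymbol{O}^{-1}$), whose diagonal entries $O_c(\rho_c)$ are confluent hypergeometric functions with only simple roots $\{\omega_n\}$, so that the would-be poles of the scattering matrix (\ref{eq:U expression}) at $\{\omega_n\}$ are at worst simple and cancel at the residue level --- which is exactly what the constructive proof of theorem \ref{theo: Scattering matrix poles are the Siegert-Humblet radioactive poles} requires.
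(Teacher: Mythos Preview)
Your proposal is correct and follows essentially the same route as the paper: exploit diagonality to decouple the problem into independent scalar Laurent expansions, identify $M_n$ as both the algebraic and geometric multiplicity, and read off the rank-$M_n$ residue in the canonical kernel basis $\boldsymbol{e}_{c_m}$. The paper carries this out as a worked example with $M_n=2$ (two repeated diagonal entries) and then states that the reasoning generalizes, whereas you treat general $M_n$ directly via the index set $\mathcal{S}_n$; your additional cross-check against the semi-simplicity criterion (\ref{eq::semisimple condition}) is not in the paper's proof but is a welcome consistency check.
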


\begin{proof}
Without loss of generality, a change of variables can be performed so as to set $\omega_n = 0$.
Let $\boldsymbol{D}(z) = \boldsymbol{\mathrm{diag}}\left( d_1(z), d_2(z), \hdots, d_1(z), d_j(z), d_n(z) \right) $ be a diagonal meromorphic complex-valued operator, which admits a pole at $z=0$.
$\boldsymbol{D^{-1}}(z) = \boldsymbol{\mathrm{diag}}\left( d_1^{-1}(z), d_2^{-1}(z), \hdots, d_1^{-1}(z), d_j^{-1}(z), d_n^{-1}(z) \right) $ is well known, and
$\mathrm{det}\left( \boldsymbol{D^{-1}} \right)(z=0) = d_1^{-1}(z)^2 \prod_{j \neq 1 }d_j^{-1}(z)$.
Let us assume only $d_1^{-1}(z=0) = 0$, with a simple root, so that $d_1(z) \underset{\mathcal{W}(z=0)}{=} d0_1 + \frac{R_1}{z}$.
Then $\mathrm{det}\left( \boldsymbol{D^{-1}}(z) \right)(z=0) = d_1^{-1}(z)^2 \prod_{j \neq 1 }d_j^{-1}(z)$ has a double root: the algebraic multiplicity is thus $2$.
However, it is immediate to notice that:
\begin{equation*}
\begin{IEEEeqnarraybox}[][c]{rcl}
\boldsymbol{D}(z) & = & \boldsymbol{\mathrm{diag}}\left( d_1(z), d_2(z), \hdots, d_1(z), d_j(z), d_n(z) \right) \\  & \underset{\mathcal{W}(z=0)}{=} & \boldsymbol{\mathrm{diag}}\left( d0_1, d_2(z), \hdots, d0_1(z), d_j(z), d_n(z) \right) \\
& & + \frac{1}{z} \boldsymbol{\mathrm{diag}}\left( R_1, 0, \hdots, R_1, 0, 0 \right)
\IEEEstrut\end{IEEEeqnarraybox}
\end{equation*}
This means the Laurent development order remains 1, albeit the algebraic multiplicity of the pole is 2 (or higher $M_n$). It can thus be written that: $\boldsymbol{D}(z) \underset{\mathcal{W}(z=0)}{=} \boldsymbol{D}_0 + \frac{\boldsymbol{D}_1}{z}$. 
When solving for the non-linear eigenproblem $\boldsymbol{D^{-1}}(z) \boldsymbol{v} = \boldsymbol{0}$, the kernel is no longer an eigenline, but instead spans $\left( \boldsymbol{v_1}, \boldsymbol{v_2}  \right)$, i.e. $\mathrm{Ker}\left(\boldsymbol{D^{-1}}_0\right) = \mathrm{Span}\left( \boldsymbol{v_1}, \boldsymbol{v_2}  \right)$, with $\boldsymbol{v_1} = a_1 \left[ 1, 0 , \hdots, 0, 0 ,0 \right]^\mathsf{T} $ and $\boldsymbol{v_2} = a_2 \left[ 0, 0 , \hdots, 1, 0 ,0 \right]^\mathsf{T} $.
Then, following Gohberg-Sigal's theory \cite{Gohberg_Sigal_1971}, the fundamental property:
\begin{equation*}
\boldsymbol{D^{-1}}\boldsymbol{D} = \Id{}
\end{equation*}
and the Laurent development around the pole:
\begin{equation*}
\begin{IEEEeqnarraybox}[][c]{rcl}
\boldsymbol{D^{-1}}(z) & \underset{\mathcal{W}(z=0)}{=} & \boldsymbol{D^{-1}}_0 + z {\boldsymbol{D^{-1}}}^{(1)}_0 + \mathcal{O}\left(z^2\right)
\IEEEstrut\end{IEEEeqnarraybox}
\end{equation*}
yield the relations:
\begin{equation*}
\begin{IEEEeqnarraybox}[][c]{rcl}
\boldsymbol{D^{-1}}_0  \boldsymbol{D}_0 + {\boldsymbol{D^{-1}}}^{(1)}_0 \boldsymbol{D}_1  & =  & \Id{} \\
\boldsymbol{D^{-1}}_0  \boldsymbol{D}_1 & =  & \boldsymbol{0}
\IEEEstrut\end{IEEEeqnarraybox}
\end{equation*}
Constructing $\boldsymbol{D}_1$ to satisfy the latter then entails \begin{equation*}
\begin{IEEEeqnarraybox}[][c]{rcl}
\boldsymbol{D}_1 & = & \frac{\boldsymbol{v_1}\boldsymbol{v_1}^\mathsf{T} }{\boldsymbol{v_1}^\mathsf{T} {\boldsymbol{D_0^{-1}}}^{(1)} \boldsymbol{v_1}} +  \frac{\boldsymbol{v_2}\boldsymbol{v_2}^\mathsf{T} }{\boldsymbol{v_2}^\mathsf{T} {\boldsymbol{D_0^{-1}}}^{(1)} \boldsymbol{v_2}}
\IEEEstrut\end{IEEEeqnarraybox}
\end{equation*}
where the transpose is used because the matrix is complex-symmetric.
This reasoning immediately generalizes to expression (\ref{eq::higher algebraic multiplicity scattering poles}).
\end{proof}

Let $\left\{\omega_n\right\}$ be all the roots of the outgoing-wave functions (i.e. the poles of inverse outgoing wave $\boldsymbol{O}^{-1}$), which we can find by solving the non-linear Eigenvalue problem:
\begin{equation}
\boldsymbol{O}(\omega_n) {\boldsymbol{{w_n}}}_m = \boldsymbol{0}
\end{equation}
Looking at (\ref{eq:U expression}) shows that the roots of the outgoing wave functions $\boldsymbol{O}$ could endow the scattering matrix with additional poles, through $\boldsymbol{O}^{-1}$, and that these poles could potentially have higher Laurent orders, since $\boldsymbol{O}^{-1}$ appears twice in expression (\ref{eq:U expression}). 
Yet, because $\boldsymbol{O}$ is diagonal with simple roots, lemma \ref{lem::Diagonal semi-simplicity Lemma} entails $\boldsymbol{O}^{-1}$ is semi-simple: the algebraic multiplicities are equal to the geometric multiplicities, and thus the poles $\left\{\omega_n\right\}$ all have Laurent order one.
Situations can arise where same-charge channels within the same total angular momentum $J^\pi$ will carry same angular momenta $\ell_c = \ell_{c'}$ and equal channel radii $a_c = a_{c'}$.
In that case, the geometric multiplicity $M_n$ of pole $\omega_n$ will be equal to the number of channels sharing the same functional outgoing waves $O_c = O_{c'}$.
Diagonal semi-simplicity lemma \ref{lem::Diagonal semi-simplicity Lemma} then establishes that the residue of $\boldsymbol{O}^{-1}$ associated to pole $\omega_n$ is now a diagonal rank-$M_n$ matrix, $\boldsymbol{D_n}$, expressed as:
\begin{equation}
\boldsymbol{D_n} = \sum_{m=1}^{M_n} \frac{{\boldsymbol{w_n}}_m {\boldsymbol{w_n}}_m^\mathsf{T}}{{\boldsymbol{w_n}}_m^\mathsf{T} \boldsymbol{O}^{(1)}(\omega_n) {\boldsymbol{w_n}}_m}
\label{eq::D_n residue of O-1}
\end{equation}
where $\boldsymbol{O}^{(1)}(\omega_n)$ designates the first derivative of $\boldsymbol{O}$, evaluated at the pole value $\omega_n$.
This establishes the existence of higher-rank residues associated to the inverse outgoing wave function $\boldsymbol{O}^{-1}$.
Notice that if the channel radii $\big\{a_c\big\}$ where chosen at random, these high-rank residues would almost never emerge (null probability).
However, since $a_c$ is chosen arbitrarily in the context of R-matrix theory, it is often the case that evaluators set $a_c$ to a fixed value for multiple different channels, and even across isotopes. This is because the scattering radius is determined early on by the evaluator (an not varied afterwards) based on the amount of potential scattering observed in the experimental data, which is very similar for isotopes of the same element.
Therefore, in practice these high-rank residues are not uncommon.
Our constructive proof now establishes how analytic continuation annuls these high-rank residues.

\begin{proof}
\textit{Constructive}: Consider the scattering matrix expression $\boldsymbol{U} =  \boldsymbol{O}^{-1} \left[\boldsymbol{I} + 2i \boldsymbol{\rho}^{1/2} \boldsymbol{R}_{L} \boldsymbol{O}^{-1} \boldsymbol{\rho}^{1/2}\right]$ from (\ref{eq:U expression}).
Result (\ref{eq::D_n residue of O-1}) entails that, in the vicinity of $\omega_n$ (root of the outgoing wave-function $\boldsymbol{O}$) the residue is locally given by:
\begin{equation}
\begin{IEEEeqnarraybox}[][c]{rcl}
\boldsymbol{U}(z) &  \underset{\mathcal{W}(E=\omega_n)}{=}  & \mkern-6mu \boldsymbol{U}_0(\omega_n) \mkern-3mu + \mkern-3mu \frac{ \boldsymbol{D}_n  \mkern-3mu \left[\boldsymbol{I} \mkern-5mu + 2i \boldsymbol{\rho}^{1/2} \boldsymbol{R}_{L} \boldsymbol{O}^{-1} \boldsymbol{\rho}^{1/2} \right]_{E=\omega_n} }{E-\omega_n}
\IEEEstrut\end{IEEEeqnarraybox}
\label{eq:: U at omega_n}
\end{equation}
We now notice that evaluating the Kapur-Peierls $\boldsymbol{R}_{L}$ operator (\ref{eq: def Kapur-Peierls operator}) at the pole value $\omega_n$ yields the following equality:
\begin{equation}
\begin{IEEEeqnarraybox}[][c]{rcl}
\boldsymbol{R}_{L} \boldsymbol{O}^{-1} (\omega_n) {\boldsymbol{w_n}}_m  & \ = \ & - \left[ \boldsymbol{\rho} \boldsymbol{O}^{(1)} \right]^{-1}(\omega_n) {\boldsymbol{w_n}}_m
\IEEEstrut\end{IEEEeqnarraybox}
\label{eq::R_L O at outgoing scattering poles}
\end{equation}
Plugging (\ref{eq::R_L O at outgoing scattering poles}) into the residue of (\ref{eq:: U at omega_n}), and using the fact that (\ref{eq::D_n residue of O-1}) guarantees $\boldsymbol{D_n} $ is a linear combination of $ {\boldsymbol{w_n}}_m {\boldsymbol{w_n}}_m^\mathsf{T} $, we then have the following equality on the residues at poles $\omega_n$:
\begin{equation}
\begin{IEEEeqnarraybox}[][c]{rcl}
\boldsymbol{D}_n \mkern-3mu \left[\boldsymbol{I} \mkern-5mu + 2i \boldsymbol{\rho}^{1/2} \boldsymbol{R}_{L} \boldsymbol{O}^{-1} \boldsymbol{\rho}^{1/2} \right]_{E=\omega_n} \mkern-20mu = \boldsymbol{D}_n \mkern-3mu \left[ \boldsymbol{I} - 2i {\boldsymbol{O}^{(1)}}^{-1} \right]_{E=\omega_n}
\IEEEstrut\end{IEEEeqnarraybox}
\label{eq::U residues at outgoing scattering poles}
\end{equation}
The rightmost term is diagonal and independent from the resonance parameters.
Since the Wronskian matrix $\boldsymbol{w}$ of the external region interaction (for Coulomb or free particles) is constant (\ref{eq:wronksian expression}), $\boldsymbol{w} = \boldsymbol{O}^{(1)}\boldsymbol{I} - \boldsymbol{I}^{(1)}\boldsymbol{O} = 2i \Id{}$, evaluating at outgoing wave-function root $\omega_n$, one finds $2i \Id{} = \boldsymbol{O}^{(1)}\boldsymbol{I}(\omega_n)$.
Plugging this result into (\ref{eq::U residues at outgoing scattering poles}) annuls the corresponding residue from the scattering matrix, i.e.:
\begin{equation}
\begin{IEEEeqnarraybox}[][c]{c}
\boldsymbol{D}_n \left[\boldsymbol{I} + 2i \boldsymbol{\rho}^{1/2} \boldsymbol{R}_{L} \boldsymbol{O}^{-1} \boldsymbol{\rho}^{1/2} \right]_{E=\omega_n}  = \boldsymbol{0}
\IEEEstrut\end{IEEEeqnarraybox}
\end{equation}
Thus, if the Wronskian condition (\ref{eq:wronksian expression}) is respected, the $\big\{\omega_n\big\}$ poles cancel out of the scattering matrix $\boldsymbol{U}$
\end{proof}

Importantly, both the Lane \& Thomas force-closing of sub-threshold channels \ref{subsec::Forcing sub-threshold elements to zero} or the analytic continuation \ref{subsec::Analytic continuation of the scattering matrix} will yield the same cross section values for real energies above thresholds.
However, theorem \ref{theo: Scattering matrix poles are the Siegert-Humblet radioactive poles} demonstrates that the choice of analytic continuation in equation (\ref{eq:U expression}), respecting the Wronskian condition (\ref{eq:wronksian expression}), leads to the cancellation from the scattering matrix $\boldsymbol{U}$ of the $\big\{\omega_n\big\}$ spurious poles, which have nothing to do with the resonant states of the scattering system.
This cancellation is thus physically accurate, and would not take place had the choice of $\boldsymbol{\mathfrak{P}} = \boldsymbol{P}$ been made in equation (\ref{eq:U expression L&T}) with the Lane \& Thomas ``force closure'' definition $\boldsymbol{P} = \Im\left[ \boldsymbol{L}(z) \right] \; \in \mathbb{R}$ (c.f. lemma 1 section III.B of \cite{Ducru_shadow_Brune_Poles_2020}), under which the scattering matrix diverges at $\big\{\omega_n\big\}$. Conversely, analytically continuing the penetration function as $ \boldsymbol{P}(z) \triangleq \frac{1}{2\mathrm{i}}\left( \boldsymbol{L}(z) - \left[\boldsymbol{L}(z^*)\right]^* \right)  \; \in \mathbb{C}$ (c.f. lemma 2 section III.B of \cite{Ducru_shadow_Brune_Poles_2020}) will guarantee the cancellation of the $\big\{\omega_n\big\}$ poles from the scattering matrix $\boldsymbol{U}$ when using (\ref{eq:U expression L&T}).
Notice this is almost the definition (\ref{eq:Delta_L def}) of $\boldsymbol{ \Delta L}(\rho)$ we hereafter use in the proof of the generalized unitarity.
Then, to force-close sub-threshold channels, one could set the Wronskian to zero, as proposed by Lane \& Thomas in the paragraph between equation (2.1) and (2.2) of section VII.1. p.289. This shifts the problem to how to maintain the Wronskian condition (\ref{eq:wronksian expression}) while setting the Wronskian to zero below thresholds. 
Alternatively, we here argue in section \ref{subsec::Closure of sub-threshold cross sections through analytic continuation} that this might not be necessary, as analytic continuation can naturally close sub-threshold channels.

\subsection{\label{subsec::Generalized unitarity} Generalized unitarity for analytically continued scattering matrix}

One of the authors, G. Hale, proved a somewhat more esoteric argument in favor of analytic continuation of the scattering matrix, showing it satisfies generalized unitarity. 

Eden \& Taylor established a generalized unitarity condition, eq. (2.16) in \cite{Eden_and_Taylor}, which extents the one described by Lane \& Thomas, eq. (2.13), VI.2.c. p.287, in that the subset of open channels is unitary (thus conserving probability), but the scattering matrix can still be continued to sub-threshold channels and be non-zero, that is the full scattering matrix of open and closed channels is not unitary but satisfies the generalized unitarity condition. This is also consistent with approaches other than R-matrix to modeling nuclear interactions (c.f. commentary above eq. (3) p.4 in \cite{Exceptional_points_Hamiltonian_Michel}, \cite{complex_energy_above_inoization_threshold_1969}, or\cite{S-matrix-complex_resonance_parameters_Csoto_1997}).

The premises of the problem lies again in the multi-sheeted Riemann surface spawning from mapping (\ref{eq:rho_c(E) mapping}): when considering the scattering matrix $\boldsymbol{U}(E)$ at a given energy $E$, there are multiple possibilities for the choice of wavenumber $k_c$ at each channel. 
Following Eden \& Taylor eq. (2.14a) and eq. (2.14b) \cite{Eden_and_Taylor}, we consider the case of momenta being continued along the following paths in the multi-sheeted Riemann surface: one subset of channels $c$, denoted by $\widehat{\mathfrak{C}}$, is continued as $k_{c \in \widehat{\mathfrak{C}}} \to k_{c \in \widehat{\mathfrak{C}}}^*  $, while all the others are continued as $k_{c \not\in \widehat{\mathfrak{C}}} \to - k_{c \not\in \widehat{\mathfrak{C}}}^*  $, and we collectively denote this continuation $\boldsymbol{k} \to \boldsymbol{\widetilde{k}}$:
\begin{equation}
\boldsymbol{k} \to \boldsymbol{\widetilde{k}} : \left\{ \begin{array}{rcl}
\forall c \in \widehat{\mathfrak{C}}\; , \quad k_c & \to & k_c^* \\
\forall c \not\in \widehat{\mathfrak{C}}\; , \quad k_c & \to & -k_c^* \\
\end{array}\right. 
\label{eq:: Eden and Taylor continuation}
\end{equation}
We then seek to reproduce the generalized unitarity property eq. (2.16) of \cite{Eden_and_Taylor}, which states that the submatrix $\widehat{\boldsymbol{U}}$ composed of the channels ${c \in \widehat{\mathfrak{C}}}$, verifies the generalized unitarity condition:
\begin{equation}
\widehat{\boldsymbol{U}}(\boldsymbol{k}) \Big[ \widehat{\boldsymbol{U}}(\boldsymbol{\widetilde{k}}) \Big]^\dagger = \Id{}
\label{eq:: Eden and Taylor Generalized unitarity}
\end{equation}
We now show that analytically continuing the R-matrix expression (\ref{eq:U expression}) ensures the scattering matrix respects Eden \& Taylor generalized unitarity condition.

\begin{theorem}\label{theo::satisfied generalized unitarity condition} \textsc{Analytic continuation of the R-matrix expression for the scattering matrix ensures generalized unitarity}. \\
By performing the analytic continuation of the R-matrix expression (\ref{eq:U expression}), the scattering matrix $\boldsymbol{U}$ satisfies Eden \& Taylor's generalized unitarity condition (\ref{eq:: Eden and Taylor Generalized unitarity}).
\end{theorem}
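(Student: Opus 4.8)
The plan is to convert the Eden--Taylor relation (\ref{eq:: Eden and Taylor Generalized unitarity}) into a single algebraic ``generalized optical'' identity for the Kapur--Peierls operator --- one that encodes the Hermiticity of the interior Hamiltonian, i.e. the reality of $\boldsymbol{R}$ --- and then verify it using reflection formulas for the external wave functions. First I would use complex symmetry: since $\boldsymbol{U}^\mathsf{T}=\boldsymbol{U}$, the $\widehat{\mathfrak{C}}$ submatrix $\widehat{\boldsymbol{U}}$ is complex-symmetric too, so $[\widehat{\boldsymbol{U}}(\boldsymbol{\widetilde{k}})]^\dagger = \overline{\widehat{\boldsymbol{U}}(\boldsymbol{\widetilde{k}})}$ and the target reduces to $\widehat{\boldsymbol{U}}(\boldsymbol{k})\,\overline{\widehat{\boldsymbol{U}}(\boldsymbol{\widetilde{k}})}=\Id{}$ on the $\widehat{\mathfrak{C}}$ block. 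Because $\boldsymbol{O},\boldsymbol{I},\boldsymbol{\rho}$ are diagonal, I would rewrite (\ref{eq:U expression}) in the dressed form $\widehat{\boldsymbol{U}} = \widehat{\boldsymbol{\Omega}}\,(\Id{}+2\mathrm{i}\,\widehat{\boldsymbol{\mathfrak{P}}}^{1/2}\widehat{\boldsymbol{R}_{L}}\,\widehat{\boldsymbol{\mathfrak{P}}}^{1/2})\,\widehat{\boldsymbol{\Omega}}$ with $\boldsymbol{\Omega}\triangleq(\boldsymbol{O}^{-1}\boldsymbol{I})^{1/2}$ and $\boldsymbol{\mathfrak{P}}$ as in (\ref{eq:Omega and mathfrak P expression L&T}), noting that the coupling to the complementary (closed) channels survives inside $\widehat{\boldsymbol{R}_{L}}$ since $\boldsymbol{R}_{L}$ is an inverse over all channels.

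For the optical identity I would use $I_c=\overline{O_c}$ on the real axis with Schwarz reflection to get $\overline{O_c(\rho_c^{*})}=I_c(\rho_c)$, and the Hankel sign relation $O_c(-\rho_c)\propto I_c(\rho_c)$ (the $(-1)^{\ell_c}$ phase is immaterial: it cancels in $\boldsymbol{O}^{-1}\boldsymbol{I}$ and never enters $\boldsymbol{L}=\boldsymbol{\rho}\,\boldsymbol{O}^{(1)}/\boldsymbol{O}$). Applying these to $\boldsymbol{L}$ under the Eden--Taylor continuation (\ref{eq:: Eden and Taylor continuation}) followed by conjugation gives $\overline{L_c(\widetilde{\rho}_c)}=L_c(\rho_c)$ for $c\notin\widehat{\mathfrak{C}}$ and $\overline{L_c(\widetilde{\rho}_c)}=L_c(\rho_c)-2\mathrm{i}\,[\boldsymbol{\Delta L}]_{cc}$ for $c\in\widehat{\mathfrak{C}}$, with $\boldsymbol{\Delta L}$ the analytically continued penetration (\ref{eq:Delta_L def}); hence $\overline{\boldsymbol{L}(\boldsymbol{\widetilde{k}})}=\boldsymbol{L}(\boldsymbol{k})-2\mathrm{i}\,\boldsymbol{\Pi}$ with $\boldsymbol{\Pi}$ diagonal, equal to $[\boldsymbol{\Delta L}]_{cc}$ on $\widehat{\mathfrak{C}}$ and zero elsewhere. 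Because $\boldsymbol{A}^{-1}$ in (\ref{eq:inv_A expression}) is affine in $\boldsymbol{L}$ and $\boldsymbol{e},\boldsymbol{\gamma},\boldsymbol{B}$ (and $E$) are real, this propagates to $\overline{\boldsymbol{A}^{-1}(\boldsymbol{\widetilde{k}})}=\boldsymbol{A}^{-1}(\boldsymbol{k})+2\mathrm{i}\,\boldsymbol{\gamma}\boldsymbol{\Pi}\boldsymbol{\gamma}^{\mathsf T}$; the resolvent identity $\boldsymbol{A}-\overline{\boldsymbol{A}}=\boldsymbol{A}(\overline{\boldsymbol{A}^{-1}}-\boldsymbol{A}^{-1})\overline{\boldsymbol{A}}$, sandwiched by $\widehat{\boldsymbol{\gamma}}^{\mathsf T}(\,\cdot\,)\widehat{\boldsymbol{\gamma}}$ and using $\boldsymbol{\gamma}\boldsymbol{\Pi}\boldsymbol{\gamma}^{\mathsf T}=\widehat{\boldsymbol{\gamma}}\,\widehat{\boldsymbol{\Delta L}}\,\widehat{\boldsymbol{\gamma}}^{\mathsf T}$ together with $\boldsymbol{R}_{L}=\boldsymbol{\gamma}^{\mathsf T}\boldsymbol{A}\boldsymbol{\gamma}$, then collapses to the clean submatrix identity $\widehat{\boldsymbol{R}_{L}}(\boldsymbol{k})-\overline{\widehat{\boldsymbol{R}_{L}}(\boldsymbol{\widetilde{k}})}=2\mathrm{i}\,\widehat{\boldsymbol{R}_{L}}(\boldsymbol{k})\,\widehat{\boldsymbol{\Delta L}}\,\overline{\widehat{\boldsymbol{R}_{L}}(\boldsymbol{\widetilde{k}})}$. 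Routing through the level matrix $\boldsymbol{A}$, rather than through $\boldsymbol{R}_{L}^{-1}$ directly, is what avoids a Schur-complement detour, since the inverse of a submatrix is not the submatrix of the inverse.

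To finish I would assemble $\widehat{\boldsymbol{U}}(\boldsymbol{k})\,\overline{\widehat{\boldsymbol{U}}(\boldsymbol{\widetilde{k}})}$ from the dressed forms. On the $\widehat{\mathfrak{C}}$ channels the reflection formulas also give $\overline{\Omega_c(\widetilde{\rho}_c)}=\Omega_c(\rho_c)^{-1}$ (a pure phase on the real axis) and $\overline{\mathfrak{P}_c^{1/2}(\widetilde{\rho}_c)}=\mathfrak{P}_c^{1/2}(\rho_c)$, so $\overline{\widehat{\boldsymbol{U}}(\boldsymbol{\widetilde{k}})}=\widehat{\boldsymbol{\Omega}}^{-1}(\Id{}-2\mathrm{i}\,\widehat{\boldsymbol{\mathfrak{P}}}^{1/2}\overline{\widehat{\boldsymbol{R}_{L}}(\boldsymbol{\widetilde{k}})}\,\widehat{\boldsymbol{\mathfrak{P}}}^{1/2})\widehat{\boldsymbol{\Omega}}^{-1}$; multiplying the two, the $\widehat{\boldsymbol{\Omega}}^{\pm1}$ cancel and what remains (using $\widehat{\boldsymbol{\mathfrak{P}}}^{1/2}\widehat{\boldsymbol{\mathfrak{P}}}^{1/2}=\widehat{\boldsymbol{\Delta L}}$ on open channels) is $\Id{}+2\mathrm{i}\,\widehat{\boldsymbol{\mathfrak{P}}}^{1/2}(\widehat{\boldsymbol{R}_{L}}-\overline{\widehat{\boldsymbol{R}_{L}}})\widehat{\boldsymbol{\mathfrak{P}}}^{1/2}+4\,\widehat{\boldsymbol{\mathfrak{P}}}^{1/2}\widehat{\boldsymbol{R}_{L}}\widehat{\boldsymbol{\Delta L}}\,\overline{\widehat{\boldsymbol{R}_{L}}}\widehat{\boldsymbol{\mathfrak{P}}}^{1/2}$. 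Substituting $2\mathrm{i}(\widehat{\boldsymbol{R}_{L}}-\overline{\widehat{\boldsymbol{R}_{L}}})=-4\,\widehat{\boldsymbol{R}_{L}}\widehat{\boldsymbol{\Delta L}}\,\overline{\widehat{\boldsymbol{R}_{L}}}$ from the previous step makes the last two terms cancel, leaving $\Id{}$; conjugating back by complex symmetry yields (\ref{eq:: Eden and Taylor Generalized unitarity}), and the relation extends off the real-$\boldsymbol{k}$ configuration by uniqueness of analytic continuation.

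The main obstacle I anticipate is twofold. First, obtaining the clean submatrix optical identity: it works only because the additive $\boldsymbol{L}$-perturbation is visible at the level of $\boldsymbol{A}^{-1}$ and then transported through $\boldsymbol{R}_{L}=\boldsymbol{\gamma}^{\mathsf T}\boldsymbol{A}\boldsymbol{\gamma}$, so one must resist the temptation to work with $\boldsymbol{R}_{L}^{-1}$ block-wise. Second, the branch bookkeeping --- of $\boldsymbol{\rho}^{1/2}$, of $\boldsymbol{O}^{\pm1}$, of $\boldsymbol{\Omega}$, and of the $(-1)^{\ell_c}$ Coulomb/Hankel phases --- must be carried carefully enough to confirm that the factors $\pm\mathrm{i}$ that $k_c\to-k_c^{*}$ introduces on the complementary channels never leak into the $\widehat{\mathfrak{C}}$ block; this is precisely where the Eden--Taylor choice of paths (\ref{eq:: Eden and Taylor continuation}) does the work, since it is tailored to make $\boldsymbol{\Delta L}$ vanish on the complementary channels, so that $\boldsymbol{\Pi}$ is supported on $\widehat{\mathfrak{C}}$ alone and only the open block carries a ``unitarity source.'' It is worth isolating explicitly the hypothesis that all interior parameters $\boldsymbol{e},\boldsymbol{\gamma},\boldsymbol{B}$ are real (Hermiticity of the internal Hamiltonian): without it the identity degrades, as happens for channel-eliminated Reich--Moore parametrizations where the capture width renders the effective resonance energies complex.
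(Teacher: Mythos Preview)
Your approach is essentially the paper's: establish the conjugacy of $\boldsymbol{L}$ under the Eden--Taylor continuation followed by complex conjugation, convert this into the submatrix ``optical'' identity
\[
\widehat{\boldsymbol{R}}_L(\boldsymbol{k})-\big[\widehat{\boldsymbol{R}}_L(\boldsymbol{\widetilde{k}})\big]^{*}=\widehat{\boldsymbol{R}}_L(\boldsymbol{k})\,\widehat{\boldsymbol{\Delta L}}(\boldsymbol{k})\,\big[\widehat{\boldsymbol{R}}_L(\boldsymbol{\widetilde{k}})\big]^{*},
\]
and then expand $\widehat{\boldsymbol{U}}(\boldsymbol{k})\big[\widehat{\boldsymbol{U}}(\boldsymbol{\widetilde{k}})\big]^{\dagger}$ and watch the non-identity terms cancel. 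The paper also uses exactly this identity (its eq.~(\ref{eq:Delta_L and R_L submatrix})) and rewrites $\boldsymbol{U}$ via $\boldsymbol{\Delta L}$ (its eq.~(\ref{eq: U as a function of Delta_L})) rather than your $\boldsymbol{\Omega},\boldsymbol{\mathfrak{P}}^{1/2}$ dressing, but these are equivalent packagings of the same algebra.

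The one genuine difference is how you reach the optical identity: you route through the level matrix $\boldsymbol{A}^{-1}$, while the paper works directly in channel space with $\boldsymbol{R}_L^{-1}=\boldsymbol{R}^{-1}-(\boldsymbol{L}-\boldsymbol{B})$, obtaining $\big[\boldsymbol{R}_L^{-1}(\boldsymbol{\widetilde{k}})\big]^{*}-\boldsymbol{R}_L^{-1}(\boldsymbol{k})=\mathrm{diag}(\widehat{\boldsymbol{\Delta L}},\boldsymbol{0})$ immediately, then multiplying left by $\boldsymbol{R}_L(\boldsymbol{k})$ and right by $\big[\boldsymbol{R}_L(\boldsymbol{\widetilde{k}})\big]^{*}$ and reading off the $\widehat{\mathfrak{C}}$ block. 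Your worry about a Schur-complement detour is unfounded: because the perturbation is block-diagonal and supported on $\widehat{\mathfrak{C}}$ alone, the $\widehat{\mathfrak{C}}$ block of $\boldsymbol{R}_L\,D\,[\boldsymbol{R}_L]^{*}$ involves only $\widehat{\boldsymbol{R}}_L$, $\widehat{D}$, and $[\widehat{\boldsymbol{R}}_L]^{*}$ --- no off-block entries enter. So your $\boldsymbol{A}$-route is valid but not actually needed.

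Two bookkeeping slips to correct. With $\boldsymbol{\Delta L}$ as in (\ref{eq:Delta_L def}) one has, for $c\in\widehat{\mathfrak{C}}$, $\overline{L_c(\widetilde{\rho}_c)}=L_c(-k_c)=L_c(\rho_c)-[\boldsymbol{\Delta L}]_{cc}$, not $L_c(\rho_c)-2\mathrm{i}[\boldsymbol{\Delta L}]_{cc}$; and $\widehat{\boldsymbol{\mathfrak{P}}}=\widehat{\boldsymbol{\Delta L}}/(2\mathrm{i})$, not $\widehat{\boldsymbol{\mathfrak{P}}}=\widehat{\boldsymbol{\Delta L}}$. These two factors of $2\mathrm{i}$ happen to cancel in your final assembly, so the conclusion survives, but the optical identity you display carries the wrong prefactor relative to (\ref{eq:Delta_L and R_L submatrix}). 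Your closing remarks on the necessity of real $\boldsymbol{B}$ and real Wigner--Eisenbud parameters match the paper's own caveat.
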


\begin{proof}
The proof is based on the conjugacy relations of the outgoing and incoming wavefunctions eq. (2.12), VI.2.c. in \cite{Lane_and_Thomas_1958}, whereby, for any channel $c$:
\begin{equation}
\begin{IEEEeqnarraybox}[][c]{rclcrcl}
\big[O_c(k_c^*)\big]^* & \ = \ & I_c(k_c) \quad  &,& \quad \big[I_c(k_c^*)\big]^* & \ = \ & O_c(k_c) \\
O_c(-k_c) & \ = \ & I_c(k_c) \quad &,& \quad I_c(-k_c) & \ = \ & O_c(k_c) \\ -O_c^{(1)}(-k_c) & \ = \ & I_c^{(1)}(k_c) \quad &,& \quad -I_c^{(1)}(-k_c) & \ = \ & O_c^{(1)}(k_c) \\
\IEEEstrut\end{IEEEeqnarraybox}
\label{eq:Conjugacy for O and I}
\end{equation}
where the third line was obtained by taking the derivative of the second. 
Conjugacy relations (\ref{eq:Conjugacy for O and I}) entail the following relations on the outgoing-wave reduced logarithmic derivative $\boldsymbol{L}$:
\begin{equation}
\begin{IEEEeqnarraybox}[][c]{rclcrcl}
\Big[L_c(k_c^*)\Big]^* & \ = \ & L_c(-k_c) \quad  &,& \quad \Big[L_c(-k_c^*)\Big]^* & \ = \ & L_c(k_c)
\IEEEstrut\end{IEEEeqnarraybox}
\label{eq:L_c congugacy}
\end{equation}
We also notice that the Wronskian condition (\ref{eq:wronksian expression}) is equivalent to:
\begin{equation}
\begin{IEEEeqnarraybox}[][c]{rcl}
\frac{2\mathrm{i}\rho_c}{O_c I_c} & \ = \ & \rho_c\left[\frac{O_c^{(1)}}{O_c} - \frac{I_c^{(1)}}{I_c}\right]
\IEEEstrut\end{IEEEeqnarraybox}
\label{eq:wronksian expression for generalized unitarity}
\end{equation}
Recognizing the definition (\ref{eq: L operator}) of $\boldsymbol{L}$, and using conjugacy relations (\ref{eq:L_c congugacy}), this  Wronskian condition (\ref{eq:wronksian expression for generalized unitarity}) can be expressed as a difference of the reduced logarithmic $L_c$ derivatives:
\begin{equation}
\begin{IEEEeqnarraybox}[][c]{rcl}
 \Delta L_c (k_c)&  \ \triangleq \ & L_c(k_c) - L_c(-k_c) \ = \ \frac{2\mathrm{i}\rho_c}{O_c I_c}(k_c)
\IEEEstrut\end{IEEEeqnarraybox}
\label{eq:Delta_L def}
\end{equation}
Defining the diagonal matrix $\boldsymbol{\Delta L} \triangleq \boldsymbol{\mathrm{diag}}\Big(\Delta L_c (k_c)\Big)$, we can then re-write, similarly to (\ref{eq:U expression L&T}), the R-matrix expression (\ref{eq:U expression}) of the scattering matrix $\boldsymbol{U}$ as a function of $ \Delta L_c (k_c)$, so that:
\begin{equation}
\begin{IEEEeqnarraybox}[][c]{rcl}
 \boldsymbol{U} &  \ = \ & \boldsymbol{O}^{-1}\left[ \Id{} + \left[ \boldsymbol{\rho}^{1/2}  \boldsymbol{R}_{L} \boldsymbol{\rho}^{-1/2} \right] \boldsymbol{\Delta L}  \right] \boldsymbol{I}\\
 & \ = \ &  \boldsymbol{I} \left[ \Id{} + \boldsymbol{\Delta L} \left[ \boldsymbol{\rho}^{-1/2}  \boldsymbol{R}_{L} \boldsymbol{\rho}^{1/2} \right]   \right]  \boldsymbol{O}^{-1}
\IEEEstrut\end{IEEEeqnarraybox}
\label{eq: U as a function of Delta_L}
\end{equation}
Notice again how this expression is closely related to the analytic continuation of expression (\ref{eq:U expression L&T}).

Coming back to the Eden \& Taylor continuation (\ref{eq:: Eden and Taylor continuation}), let us now establish a relation between the Kapur-Peierls operator $\boldsymbol{R}_{L}$ and $\boldsymbol{\Delta L} $. 
From the definition (\ref{eq: def Kapur-Peierls operator}) of the Kapur-Peierls operator $\boldsymbol{R}_{L}$, recalling that under Eden \& Taylor continuations (\ref{eq:: Eden and Taylor continuation}) the energy $E$ from mapping (\ref{eq:rho_c(E) mapping}) remains unaltered, and given that the boundary-condition $B_c$ in the $\boldsymbol{L^0}$ matrix function is real and thus the R-matrix parameters (\ref{eq:R expression}) are too, it follows that:
\begin{equation}
\begin{IEEEeqnarraybox}[][c]{rcl}
  \left[\boldsymbol{R}_{L}^{-1}(\boldsymbol{\widetilde{k}} ) \right]^*&  -\boldsymbol{R}_{L}^{-1}(\boldsymbol{k})  \ = \ & \left( \begin{array}{cc}
\widehat{\boldsymbol{\Delta L}}(\boldsymbol{k}) & 0 \\
0 & 0 \\
\end{array}\right) 
\IEEEstrut\end{IEEEeqnarraybox}
\label{eq:Delta_L and R_L}
\end{equation}
where we have used the $\boldsymbol{L}$ conjugacy relations (\ref{eq:L_c congugacy}) to establish that all channels $c \not\in \widehat{\mathfrak{C}}$ cancel out, and the rest yield $\Delta L_{c \in \widehat{\mathfrak{C}}}(k_c)$. The $\widehat{\boldsymbol{\Delta L}}$ thus designates the sub-matrix composed of all the channels $c \in \widehat{\mathfrak{C}}$.
Multiplying both left and right, and considering the sub-matrices on the channels $c \in \widehat{\mathfrak{C}}$ thus yields:
\begin{equation}
\begin{IEEEeqnarraybox}[][c]{rcl}
    \boldsymbol{\widehat{R}}_L(\boldsymbol{k}) - \left[\boldsymbol{\widehat{R}}_L(\boldsymbol{\widetilde{k}} ) \right]^*& \ = \ & \boldsymbol{\widehat{R}}_L(\boldsymbol{k}) \widehat{\boldsymbol{\Delta L}}(\boldsymbol{k}) \left[\boldsymbol{\widehat{R}}_L(\boldsymbol{\widetilde{k}} ) \right]^*
\IEEEstrut\end{IEEEeqnarraybox}
\label{eq:Delta_L and R_L submatrix}
\end{equation}
This relation is what guarantees the scattering matrix $\boldsymbol{U}$ satisfies generalized unitarity condition (\ref{eq:: Eden and Taylor Generalized unitarity}).
Indeed, let us develop the left-hand side of (\ref{eq:: Eden and Taylor Generalized unitarity}), using expressions (\ref{eq: U as a function of Delta_L}) on the sub-matrices of the channels $c \in \widehat{\mathfrak{C}}$:
\begin{equation}
\begin{IEEEeqnarraybox}[][c]{l}
    \widehat{\boldsymbol{U}}(\boldsymbol{k}) \Big[ \widehat{\boldsymbol{U}}(\boldsymbol{\widetilde{k}}) \Big]^\dagger \ = \  \\ \boldsymbol{\widehat{O}}^{-1}(\boldsymbol{k})\left[ \Id{} + \widehat{\left[ \boldsymbol{\rho}^{\frac{1}{2}}  \boldsymbol{R}_{L} \boldsymbol{\rho}^{-\frac{1}{2}} \right]}(\boldsymbol{k}) \widehat{\boldsymbol{\Delta L}}(\boldsymbol{k})  \right] \boldsymbol{\widehat{I}}(\boldsymbol{k})  \\
     \  \   \times  \left[ \boldsymbol{\widehat{I}}(\boldsymbol{\widetilde{k}}) \left[ \Id{} + \widehat{\boldsymbol{\Delta L}}(\boldsymbol{\widetilde{k}}) \widehat{\left[ \boldsymbol{\rho}^{-\frac{1}{2}}  \boldsymbol{R}_{L} \boldsymbol{\rho}^{\frac{1}{2}} \right]}  (\widetilde{\boldsymbol{k}}) \right]  \boldsymbol{\widehat{O}}^{-1}(\boldsymbol{\widetilde{k}}) \right]^\dagger \\
     \ = \ \boldsymbol{\widehat{O}}^{-1}(\boldsymbol{k})\left[ \Id{} + \widehat{\left[ \boldsymbol{\rho}^{\frac{1}{2}}  \boldsymbol{R}_{L} \boldsymbol{\rho}^{-\frac{1}{2}} \right]}(\boldsymbol{k}) \widehat{\boldsymbol{\Delta L}}(\boldsymbol{k})  \right] \boldsymbol{\widehat{I}}(\boldsymbol{k}) \times \\
     \  \   \left[ \boldsymbol{\widehat{O}}^{-1}(\boldsymbol{\widehat{k}^*})\right]^{\mkern-2mu*}  \mkern-6mu \left[ \Id{} \mkern-3mu + \mkern-6mu \widehat{\left[ \boldsymbol{\rho}^{-\frac{1}{2}}  \boldsymbol{R}_{L} \boldsymbol{\rho}^{\frac{1}{2}} \right]}  (\boldsymbol{\widehat{k}^*}) \left[\widehat{\boldsymbol{\Delta L}}(\boldsymbol{\widehat{k}^*})\right]^* \right] \mkern-6mu \left[\boldsymbol{\widehat{I}}(\boldsymbol{\widehat{k}^*}) \right]^{\mkern-2mu*} \\
\IEEEstrut\end{IEEEeqnarraybox}
\label{eq:U developments for generalized unitarity}
\end{equation}
Noticing that conjugacy relation (\ref{eq:L_c congugacy}) entail the following $\boldsymbol{\Delta L}$ symmetry from definition (\ref{eq:Delta_L def}), $\left[\widehat{\boldsymbol{\Delta L}}(\boldsymbol{\widehat{k}^*})\right]^* = -\widehat{\boldsymbol{\Delta L}}(\boldsymbol{k}) $, and making use of the conjugacy relations for the wave functions (\ref{eq:Conjugacy for O and I}), we can further simplify (\ref{eq:U developments for generalized unitarity}) to:
\begin{equation}
\begin{IEEEeqnarraybox}[][c]{l}
 \widehat{\boldsymbol{U}}(\boldsymbol{k}) \Big[ \widehat{\boldsymbol{U}}(\boldsymbol{\widetilde{k}}) \Big]^\dagger  \ = \ \Id{}  \ + \  \boldsymbol{\widehat{O}}^{-1}(\boldsymbol{k})   \widehat{\left[ \boldsymbol{\rho}^{\frac{1}{2}}  \boldsymbol{R}_{L} \boldsymbol{\rho}^{-\frac{1}{2}} \right]}(\boldsymbol{k}) \times \Bigg[  \\  \left[\left[ \widehat{ \boldsymbol{\rho}^{\frac{1}{2}}  \boldsymbol{R}_{L} \boldsymbol{\rho}^{-\frac{1}{2}} }\right]^{-1}(\boldsymbol{k^*})\right]^\dagger   - \left[ \widehat{ \boldsymbol{\rho}^{\frac{1}{2}}  \boldsymbol{R}_{L} \boldsymbol{\rho}^{-\frac{1}{2}} }\right]^{-1}(\boldsymbol{k})   - \widehat{\boldsymbol{\Delta L}}(\boldsymbol{k})  \\ \quad \quad \quad \quad \Bigg] 
   \times  \left[ \widehat{\left[ \boldsymbol{\rho}^{-\frac{1}{2}}  \boldsymbol{R}_{L} \boldsymbol{\rho}^{\frac{1}{2}} \right]}  (\boldsymbol{\widehat{k}^*}) \right]^\dagger \widehat{\boldsymbol{\Delta L}}(\boldsymbol{k}) \boldsymbol{\widehat{O}}(\boldsymbol{k})
\IEEEstrut\end{IEEEeqnarraybox}
\label{eq:U developed for generalized unitarity}
\end{equation}
In the middle, we recognize property (\ref{eq:Delta_L and R_L}), where the $\boldsymbol{\rho}^{\pm 1/2}$ cancel out by commuting with the diagonal matrix.
Property (\ref{eq:Delta_L and R_L}) thus annuls all non-identity terms, leaving Eden \& Taylor's generalized unitarity condition (\ref{eq:: Eden and Taylor Generalized unitarity}) satisfied.
\end{proof}

Let us also note that the proof required real boundary conditions $B_c \in \mathbb{R}$.
Thus, in R-matrix parametrization (\ref{eq:U expression}), real boundary conditions $B_c \in \mathbb{R}$ are necessary for the scattering matrix $\boldsymbol{U}$ to be unitarity (and by extension generalized unitary).

Theorem \ref{theo::satisfied generalized unitarity condition} beholds a strong argument in favor of performing analytic continuation of the R-matrix operators as the physically correct way of prolonging the scattering matrix to complex wavenumbers $k_c\in\mathbb{C}$.

\subsection{\label{subsec::Closure of sub-threshold cross sections through analytic continuation} Closure of sub-threshold cross sections through analytic continuation}

We finish this article with the key question of how to close sub-threshold channels.
Analytically continuing the scattering matrix below thresholds entails it cannot be identically zero there, since this would entail it is the null function on the entire sheet of the maniforld (unicity of analytic continuation).
However, we here show that for massive particles subject to $\rho(E)$ mappings (2) or (4) section II.A of \cite{Ducru_shadow_Brune_Poles_2020}, adequate definitions and careful consideration will both make the transmission matrix evanescent sub-threshold (in a classical case of quantum tunnelling), and annul the sub-threshold cross-section --- the physically measurable quantity.

The equations linking the scattering matrix $\boldsymbol{U}$ to the cross section --- equations (1.9), (1.10) and (2.4) section VIII.1. of \cite{Lane_and_Thomas_1958} pp.291-293 --- were only derived for real positive wavenumbers. 
Yet, when performing analytic continuation of them to sub-threshold energies, the quantum tunneling effect will naturally make the transmission matrix infinitesimal on the physical sheet of mapping (\ref{eq:rho_c(E) mapping}).
Indeed, the \textit{transmission matrix}, $\boldsymbol{T}$, is defined in \cite{Lane_and_Thomas_1958} after eq. (2.3), VIII.2. p.292, as:
\begin{equation}
\begin{IEEEeqnarraybox}[][c]{rcl}
T_{cc'} & \ \triangleq \ &  \delta_{cc'}\mathrm{e}^{2\mathrm{i}\omega_c} - U_{cc'} 
\IEEEstrut\end{IEEEeqnarraybox}
\label{eq:Lane and Thomas Transmission matrix definition}
\end{equation}
where $\omega_c$ is defined by Lane \& Thomas in eq.(2.13c) III.2.b. p.269, and used in eq.(4.5a) III.4.a. p.271 in \cite{Lane_and_Thomas_1958}, and is the difference $ \omega_c = \sigma_{\ell_c}(\eta_c) - \sigma_{0}(\eta_c)$,  where the \textit{Coulomb phase shift}, $\sigma_{\ell_c}(\eta_c) $ is defined by Ian Thompson in eq.(33.2.10) of \cite{NIST_DLMF}. 
Defining the diagonal matrix $\boldsymbol{\omega} \triangleq \boldsymbol{\mathrm{diag}}\big( \omega_c \big)$, and using the R-matrix expression (\ref{eq:U expression}) for the scattering matrix, the Lane \& Thomas transmission matrix (\ref{eq:Lane and Thomas Transmission matrix definition}) can be expressed with R-matrix parameters as:
\begin{equation}
\begin{IEEEeqnarraybox}[][c]{rcl}
\boldsymbol{T}_{\text{L\&T}} \ \triangleq \ -2\mathrm{i} \boldsymbol{O}^{-1}\left[\underbrace{\left(\frac{\boldsymbol{I} - \boldsymbol{O}\boldsymbol{\mathrm{e}}^{2\mathrm{i}\boldsymbol{\omega}}}{2\mathrm{i}}\right)}_{\boldsymbol{\Theta}} + \boldsymbol{\rho}^{1/2} \boldsymbol{R}_{L}  \boldsymbol{O}^{-1} \boldsymbol{\rho}^{1/2} \right]
\IEEEstrut\end{IEEEeqnarraybox}
\label{eq:Transmission matrix Lane and Thomas matrix expression}
\end{equation}
The angle-integrated partial cross sections $\sigma_{cc'}(E)$ can then be expressed as eq.(3.2d) VIII.3. p.293 of \cite{Lane_and_Thomas_1958}:
\begin{equation}
\begin{IEEEeqnarraybox}[][c]{rcl}
\sigma_{cc'}(E) & \ = \ & \pi g_{J^\pi_c} \left| \frac{ T_{\text{L\&T}}^{cc'}(E)}{k_c(E)}\right|^2
\IEEEstrut\end{IEEEeqnarraybox}
\label{eq:partial sigma_cc'Lane and Thomas}
\end{equation}
where $ g_{J^\pi_c} \ \triangleq \ \frac{2 J + 1 }{\left(2 I_1 + 1 \right)\left(2 I_2 + 1 \right) }$ is the \textit{spin statistical factor} defined eq.(3.2c) VIII.3. p.293.
Plugging-in the transmission matrix R-matrix parametrization (\ref{eq:Transmission matrix Lane and Thomas matrix expression}) into  cross-section expression (\ref{eq:partial sigma_cc'Lane and Thomas}) then yields:
\cite{Lane_and_Thomas_1958}:
\begin{equation}
\begin{IEEEeqnarraybox}[][c]{rcl}
\sigma_{cc'} & \ = \ & 4\pi g_{J^\pi_c} \left|\frac{1}{O_c k_c} \right|^2  \left| \boldsymbol{\Theta} + \boldsymbol{\rho}^{1/2} \boldsymbol{R}_{L}  \boldsymbol{O}^{-1} \boldsymbol{\rho}^{1/2} \right|_{cc'}^2
\IEEEstrut\end{IEEEeqnarraybox}
\label{eq:partial sigma_cc'Lane and Thomas R-matrix}
\end{equation}

An alternative, more numerically stable, way of computing the cross section is used at Los Alamos National Laboratory, where one of the authors, G. Hale, introduced the following rotated transmission matrix, defined as:
\begin{equation}
\begin{IEEEeqnarraybox}[][c]{rcl}
\boldsymbol{T}_{\text{H}} & \ \triangleq \ & - \frac{ \boldsymbol{\mathrm{e}}^{-\mathrm{i}\boldsymbol{\omega}} \boldsymbol{T}_{\text{L\&T}} \boldsymbol{\mathrm{e}}^{-\mathrm{i}\boldsymbol{\omega}}}{2\mathrm{i}}
\IEEEstrut\end{IEEEeqnarraybox}
\label{eq:Hale Transmission matrix}
\end{equation}
and whose R-matrix parametrization is thus
\begin{equation}
\begin{IEEEeqnarraybox}[][c]{rcl}
\boldsymbol{T}_{\text{H}}  & \ = \ & \boldsymbol{H_+}^{-1}\left[ \boldsymbol{\rho}^{1/2} \boldsymbol{R}_{L} \boldsymbol{\rho}^{1/2} \boldsymbol{H_+}^{-1}  - \underbrace{\left(\frac{\boldsymbol{H_+} - \boldsymbol{H_-}}{2\mathrm{i}}\right)}_{\boldsymbol{F}} \right]
\IEEEstrut\end{IEEEeqnarraybox}
\label{eq:Hale Transmission matrix expression}
\end{equation}
where $\boldsymbol{H_\pm}$ are defined as in eq.(2.13a)-(2.13b) III.2.b p.269 \cite{Lane_and_Thomas_1958}:
\begin{equation}
\begin{IEEEeqnarraybox}[][c]{rcl}
{H_{+}}_c & \ = \ &  O_c \mathrm{e}^{\mathrm{i} \omega_c} = G_c + \mathrm{i} F_c \\
{H_{-}}_c & \ = \ &  I_c \mathrm{e}^{-\mathrm{i} \omega_c} = G_c - \mathrm{i} F_c
\IEEEstrut\end{IEEEeqnarraybox}
\label{eq:def H_pm I and O}
\end{equation}
and for which we refer to Ian J. Thompson's Chapter 33, eq.(33.2.11) in \cite{NIST_DLMF}, or Abramowitz \& Stegun chapter 14, p.537 \cite{Abramowitz_and_Stegun}.
The partial cross section is then directly related to the $\boldsymbol{T}_{\text{H}}$ rotated transmission matrix (\ref{eq:Hale Transmission matrix}) as:
\begin{equation}
\begin{IEEEeqnarraybox}[][c]{rcl}
\sigma_{cc'}(E) & \ = \ & 4\pi g_{J^\pi_c} \left|\frac{T_H^{cc'}(E)}{k_c(E)} \right|^2 
\IEEEstrut\end{IEEEeqnarraybox}
\label{eq:partial sigma_cc' Hale T-matrix}
\end{equation}

\begin{theorem}\label{theo::evanescence of sub-threshold transmission matrix} \textsc{Evanescence of sub-threshold transmission matrix}. \\
For massive particles, analytic continuation of R-matrix parametrization (\ref{eq:U expression}) makes the sub-threshold transmission matrix $\boldsymbol{T}$, defined as (\ref{eq:Transmission matrix Lane and Thomas matrix expression}), evanescent on the physical sheets of wavenumber-energy $\rho(E)$ mappings (2) or (4) section II.A of \cite{Ducru_shadow_Brune_Poles_2020}.
In turn, this quantum tunnelling entails the partial cross sections $\sigma_{cc'}(E)$ become infinitesimal below threshold.
\end{theorem}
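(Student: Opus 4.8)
The plan is to read off the decay from the closed-form R-matrix parametrization of the transmission matrix itself, rather than from the scattering matrix $\boldsymbol{U}$. I would work with the Lane \& Thomas form (\ref{eq:Transmission matrix Lane and Thomas matrix expression}) (equivalently its numerically rotated Hale form (\ref{eq:Hale Transmission matrix expression})), write it out elementwise as $T_{cc'} = -2\mathrm{i}\,O_c^{-1}\,[\,\Theta_c\,\delta_{cc'} + \rho_c^{1/2}(\boldsymbol{R}_{L})_{cc'}\,O_{c'}^{-1}\,\rho_{c'}^{1/2}\,]$ with $\Theta_c = (I_c - O_c\,\mathrm{e}^{2\mathrm{i}\omega_c})/(2\mathrm{i})$, and then simply control the size of each factor when a channel index sits below its threshold. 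The point is that for every off-diagonal entry the $O^{-1}$ factors are explicit, so one only needs their asymptotics at the purely imaginary wavenumber a closed channel carries.

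The first step is the analytic input: for a massive-particle channel $c$ and a real energy $E < E_{T_c}$, mappings (2) and (4) of section II.A of \cite{Ducru_shadow_Brune_Poles_2020} give $\rho_c = a_c k_c(E)$ purely imaginary, and on the physical sheet of those mappings the sign is the one for which the outgoing Coulomb/Bessel function $O_c = H_{+,c}$ \emph{grows}, $|O_c(\rho_c)| \sim \mathrm{e}^{\beta_c}$ with $\beta_c \triangleq |\rho_c| = a_c|k_c|$ increasing as $E$ moves below $E_{T_c}$; hence $O_c^{-1}$ is exponentially small, $|O_c^{-1}| \sim \mathrm{e}^{-\beta_c}$, while the reciprocal product $O_c I_c$, the combination $O_c^{-1}I_c \sim \mathrm{e}^{-2\beta_c}$, the outgoing-wave logarithmic derivative $L_c(\rho_c)$ of (\ref{eq: L operator})--(\ref{eq::Explicit Mittag-Leffler expansion of L_c}), and therefore the Kapur--Peierls operator $\boldsymbol{R}_{L}$ of (\ref{eq: def Kapur-Peierls operator}) all stay bounded on any compact sub-threshold interval once small neighborhoods of the (discrete) radioactive poles $\mathcal{E}_j$ of theorem \ref{theo: Scattering matrix poles are the Siegert-Humblet radioactive poles} lying on the real axis are excised. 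For neutral particles this is immediate from the closed expressions of Tables I--II of \cite{Ducru_shadow_Brune_Poles_2020}: $O_c = \mathrm{e}^{\mathrm{i}\rho_c}$ times a finite rational factor in $1/\rho_c$, and $L_c$ is rational. For charged particles it follows from the Whittaker-function asymptotics of $H_\pm$ at imaginary argument and imaginary Sommerfeld parameter $\eta_c$ (cf. Ch.~33 of \cite{NIST_DLMF}, Ch.~14 of \cite{Abramowitz_and_Stegun}), which in addition contribute the Gamow tunnelling factor $\mathrm{e}^{-\pi|\eta_c|}$, reinforcing the decay.

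Feeding this into the elementwise transmission matrix: any off-diagonal entry $T_{cc'}$, $c\neq c'$, loses the $\Theta_c\delta_{cc'}$ term and reads $-2\mathrm{i}\,O_c^{-1}\rho_c^{1/2}(\boldsymbol{R}_{L})_{cc'}O_{c'}^{-1}\rho_{c'}^{1/2}$; if either $c$ or $c'$ is sub-threshold this carries the corresponding $\mathrm{e}^{-\beta}$ factor with every other factor bounded, so $T_{cc'} = O(\mathrm{e}^{-\beta})$ --- evanescent, exactly as a tunnelling amplitude through the classically forbidden exterior region should be. Inserting $T_{cc'} = O(\mathrm{e}^{-\beta})$ into the cross-section formula (\ref{eq:partial sigma_cc'Lane and Thomas R-matrix}) (or (\ref{eq:partial sigma_cc' Hale T-matrix})), and using that $|k_c(E)|$ is bounded below on the sub-threshold set, gives $\sigma_{cc'} = 4\pi g_{J^\pi_c}\,|T_{cc'}/k_c|^2 = O(\mathrm{e}^{-2\beta})$; near the threshold the same entry instead vanishes by the Wigner power law $T_{cc'}\propto\rho_{c'}^{\ell_{c'}+1/2}$ visible in the $O_{c'}^{-1}\rho_{c'}^{1/2}$ factor, so the sub-threshold partial cross sections are uniformly infinitesimal. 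The one entry escaping this is the purely diagonal closed--closed $T_{cc}$, where $O_c^{-1}\Theta_c = (O_c^{-1}I_c - \mathrm{e}^{2\mathrm{i}\omega_c})/(2\mathrm{i}) \to -\mathrm{e}^{2\mathrm{i}\omega_c}/(2\mathrm{i})$ forces $T_{cc}\to\mathrm{e}^{2\mathrm{i}\omega_c}$; but this is precisely the ``elastic scattering of a beam that does not exist'' piece Lane \& Thomas ``tactfully overlooked'', carries no physical incident flux, and is disposed of rigorously by the probability-current argument of theorem \ref{theo::Analytic continuation annuls sub-threshold cross sections}.

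The main obstacle is the bookkeeping in the first step: one must verify that on the physical sheet of mappings (2)/(4) it is genuinely the \emph{outgoing} solution $O_c$ (not $I_c$) that is the exponentially growing Whittaker/Bessel solution below threshold, so that the evanescent factor appearing in (\ref{eq:Transmission matrix Lane and Thomas matrix expression}) is $O_c^{-1}$ rather than $I_c^{-1}$ --- continuing on the conjugate sheet would instead make the very same formula \emph{diverge}, and it is exactly this sheet-labelling that is the content of ``physical sheet'' here and the reason the theorem is stated for massive particles and for those specific mappings. Everything else --- the uniformity of the bounds on $\boldsymbol{R}_{L}$ away from the discrete real radioactive poles (the bound states, where the cross section is correctly not required to vanish), and the Coulomb-function estimates at imaginary argument --- is routine once that sheet is fixed.
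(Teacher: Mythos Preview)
Your overall strategy---isolate the factor $O_c^{-1}$ (equivalently $1/H_+$) in the elementwise transmission matrix and argue that the remaining factor $\boldsymbol{R}_L\boldsymbol{O}^{-1}$ stays bounded---is exactly the paper's approach: the paper writes $|\boldsymbol{T}_{cc'}|^2\propto|1/H_+|^2$ via the identity $\boldsymbol{R}_L\boldsymbol{O}^{-1}=[\boldsymbol{O}(\boldsymbol{R}^{-1}-\boldsymbol{B})-\boldsymbol{\rho}\boldsymbol{O}^{(1)}]^{-1}$ and then reads off the sub-threshold behaviour from the large-$|\rho|$ asymptotics $|H_+(\rho)|\sim|\mathrm{e}^{\mathrm{i}\rho}|$. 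Your elementwise bookkeeping and your explicit handling of the diagonal closed--closed entry (which you correctly note does \emph{not} evanesce and must be deferred to theorem~\ref{theo::Analytic continuation annuls sub-threshold cross sections}) are in fact more careful than the paper's corresponding paragraph.

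There is, however, a genuine gap precisely at the ``main obstacle'' you yourself flag: your sheet identification is backwards. On the physical sheet in the paper's convention ($+$ labels, $k_c\in\mathrm{i}\mathbb{R}_+$ below threshold), already for the neutral s-wave one has $O_c(\rho_c)=\mathrm{e}^{\mathrm{i}\rho_c}=\mathrm{e}^{-|k_c|a_c}$, which \emph{decays}; hence $O_c^{-1}$ \emph{grows} and your estimate $|O_c^{-1}|\sim\mathrm{e}^{-\beta_c}$ fails. The paper's own proof confirms this: it obtains the exponential decay of $1/|H_+|$ on the \emph{non}-physical sheet $\{E,\hdots,-_c,\hdots\}$, and for the physical sheet it explicitly concedes ``The authors have no rigorous answer'' to the resulting divide-by-zero, retreating instead to a heuristic reading in which $\rho_c$ is reinterpreted as $k_c r_c$ at the detector distance $r_c\to\infty$---a physically distinct statement from decay of $\boldsymbol{T}$ at the fixed channel radius. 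So your argument, with the sign corrected, reproduces what the paper actually demonstrates (evanescence on the non-physical sheet); neither your proposal nor the paper closes the physical-sheet gap that the theorem statement asserts.
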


\begin{proof}
The proof is based on noticing that both transmission matrix expressions (\ref{eq:Transmission matrix Lane and Thomas matrix expression}) and (\ref{eq:Hale Transmission matrix}) entail their modulus square is proportional to:
\begin{equation}
\begin{IEEEeqnarraybox}[][c]{rcl}
|\boldsymbol{T}_{cc'}|^2(E) & \ \propto \ &  \left|\frac{1}{H_+(E)} \right|^2 
\IEEEstrut\end{IEEEeqnarraybox}
\label{eq:T_cc' proportional to H}
\end{equation}
This is because $\boldsymbol{R}_{L}  \boldsymbol{O}^{-1} = \left[ \boldsymbol{O}\left[ \boldsymbol{R}^{-1} - \boldsymbol{B}\right] - \boldsymbol{\rho}\boldsymbol{O}^{(1)}\right]^{-1}$, which does not diverge below threshold. 
Asymptotic expressions for the behavior of $H_+(\rho)$ then yield, for small $\rho$ values:
\begin{equation}
\begin{IEEEeqnarraybox}[][c]{rcl}
H_{+}(\rho) & \ \underset{\rho \to 0}{\sim} \ &  \frac{\rho^{-\ell}}{(2\ell + 1)C_\ell(\eta)} - \mathrm{i} C_\ell(\eta)\rho^{\ell+1}
\IEEEstrut\end{IEEEeqnarraybox}
\label{eq:H+ near zero}
\end{equation}
and asymptotic large-$\rho$ behavior: 
\begin{equation}
\begin{IEEEeqnarraybox}[][c]{rcl}
H_{+}(\rho) & \ \underset{\rho \to \infty} {\sim}\ & \mathrm{e}^{\mathrm{i} (\rho - \eta \ln(2\rho) - \frac{1}{2}\ell\pi + \sigma_\ell(\eta) ) }
\IEEEstrut\end{IEEEeqnarraybox}
\label{eq:H+ near infinity}
\end{equation}
Above the threshold, $\rho \in \mathbb{R}$ is real and thus equation (\ref{eq:H+ near infinity}) shows how $| H_{+}(\rho) | \ \underset{\rho \to \infty} {\longrightarrow} 1  $. In other terms, the $| H_{+}(\rho) |$ term cancels out of the cross section expressions (\ref{eq:partial sigma_cc'Lane and Thomas R-matrix}) and (\ref{eq:partial sigma_cc' Hale T-matrix}) for open-channels above threshold.

Yet, in both wavenumber-energy $\rho(E)$ mappings (2) or (4) section II.A of \cite{Ducru_shadow_Brune_Poles_2020}, the sub-threshold dimensionless wavenumber is purely imaginary: $\rho \in \mathrm{i}\mathbb{R}$. 
Since asymptotic form (\ref{eq:H+ near infinity}) is dominated in modulus by: $\left|H_{+}(\rho)\right|  \underset{\rho \to \infty} {\sim} \left| \mathrm{e}^{\mathrm{i} \rho }\right|$.
Depending on which sheet $\rho$ is continued sub-threshold, we can have $\rho = \pm \mathrm{i} x$, with $x\in \mathbb{R}$. Thus, on the non-physical sheet $\big\{ E, \hdots, -_c , \hdots \big\}$ for the given channel $c$ of $\rho_c$, the transmission matrix (\ref{eq:T_cc' proportional to H}) experiences exponential decay of $1/\left|H_{+}(\rho)\right|$ leading to the evanescence of the cross section (\ref{eq:partial sigma_cc'Lane and Thomas}), or (\ref{eq:partial sigma_cc' Hale T-matrix}). 
In effect, this means that the $|O_c(\rho_c)|$ term in (\ref{eq:partial sigma_cc'Lane and Thomas R-matrix}) asymptotically acts like a Heaviside function, being unity for open channels, but closing the channels below threshold. 
Since $\rho_c = k_c r_c$ for the outgoing scattered wave $O_c(\rho_c)$, the exponential closure depends on two factors: the distance $r_c$ from the nucleus, and how far from the threshold one is $|E - E_{T_c}|$. This is a classical evanescence behavior of quantum tunneling. 

What happens when continuing on the physical sheet $\big\{ E, \hdots, +_c , \hdots \big\}$, as $\left|H_{+}(\rho)\right|$ will now tend to diverge as a ``divide by zero''? The authors have no rigorous answer, but point to the fact that since $E$ is left unchanged by the choice of the $k_c$ sheet, the evanescence result ought to also stand, despite the apparent divergence. 

Note that for photon channels, the semi-classic wavenumber-energy $\rho(E)$ mappings (3) of section II.A of \cite{Ducru_shadow_Brune_Poles_2020} does not yield this behavior, only the relativistic mapping (4) does. 

\end{proof}

We can estimate the orders of magnitude required to experimentally observe this evanescent quantum tunneling closure of the cross sections below threshold.
At distance $r_c$ from the center of mass of the nucleus, and at wavenumber $k_c$, distant from the threshold as $|E-E_{T_c}|$, the asymptotic behavior or the cross-section below threshold is:
\begin{equation}
\begin{IEEEeqnarraybox}[][c]{rcl}
\ln\Big( \sigma_{cc'}(k_c,r_c) \Big) & \ \underset{\begin{array}{c}
   E_c \leq E_{T_c} \\
   k_c \to -\infty
\end{array}  } {\sim}\ & - 2 r_c |k_c| 
\IEEEstrut\end{IEEEeqnarraybox}
\label{eq:sub-threshold cross-section evanescence}
\end{equation}

Assuming a detector is placed at a distance $r_c$ of the nucleus, the cross section would decay exponentially below threshold as the distance $\Delta E_c = | E - E_{T_c}|$ of $E$ to the threshold $E_{T_c}$ increases. For instance, for a threshold of $^{238}$U target reacting with neutron $n$ channel, evanescence (\ref{eq:sub-threshold cross-section evanescence}) would be of the rate of $\log_{10}\Big( \sigma_{cc'}(k_c,r_c) \Big)  \sim  -3\times 10^{16} {r_c}_{\text{m}} \sqrt{{\Delta E_c}_{\text{eV}}} $. 
For a detector placed at a millimeter $r_c \sim 10^{-3} \mathrm{m}$, this means one order of magnitude is lost for the cross section in $\Delta E_c \sim 10^{-27} \text{eV}$, evanescent indeed. Conversely, detecting this quantum tunneling with a detector sensitive to micro-electronvolts $\Delta E_c \sim 10^{-6}\text{eV} \sim 1 \mu \text{eV}$ (200 times more sensitive than the thermal energy of the cosmic microwave background) would see the cross section drop of one order of magnitude for a move of less than $10^{-13}\text{m}$, or a tenth of a pico-meter. We are at sub-atomic level of quantum tunneling: the outgoing wave evanesces into oblivion way before reaching the electron cloud...

Regardless of the evanescence of the transmission matrix, a more general argument on the cross section shows that analytic continuation of the above-threshold expressions will automatically close the channels below the threshold.

\begin{theorem}\label{theo::Analytic continuation annuls sub-threshold cross sections} \textsc{Analytic continuation annuls sub-threshold cross sections}.\\
For massive particles, analytic continuation of above-threshold cross-section expressions to complex wavenumbers $k_c \in \mathbb{C}$ will automatically close channels for real energies $E \in \mathbb{R}$ below thresholds $E - E_{T_c} < 0$
\end{theorem}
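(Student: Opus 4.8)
The plan is to work directly from the \emph{physical} definition of the cross section as a ratio of asymptotic probability currents, rather than from the algebraic formulas (\ref{eq:partial sigma_cc'Lane and Thomas}) or (\ref{eq:partial sigma_cc' Hale T-matrix}) whose naive modulus squared is ambiguous once $k_c\in\mathbb{C}$. First I would recall that the angle-integrated partial cross section $\sigma_{cc'}$ is, up to the spin--statistical factor $g_{J^\pi_c}$, the ratio of the outgoing radial probability current in the exit channel $c'$ to the incident probability flux in the entrance channel $c$ (the content of equations (1.9), (1.10) and (2.4) of section VIII.1 in \cite{Lane_and_Thomas_1958}, there derived only for real positive wavenumbers). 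I would then re-derive those current expressions keeping $k_c$ and $k_{c'}$ complex: with asymptotic channel wavefunction $\psi_{c'}(r)\sim A_{c'}\,O_{c'}(k_{c'}r)$ and $O_{c'}(\rho)$ behaving asymptotically like $\mathrm{e}^{\mathrm{i}\rho}$ up to slowly varying factors, cf. (\ref{eq:H+ near infinity}), the radial current is $j_{c'}\propto \Im\!\big[\psi_{c'}^{*}\,\partial_r\psi_{c'}\big]\propto \Re[k_{c'}]\,|A_{c'}|^2\,\mathrm{e}^{-2\Im[k_{c'}]r}$, which reproduces the usual $v_{c'}|A_{c'}|^2$ flux for real $k_{c'}$ above threshold.

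The key observation is then immediate: for a real energy $E\in\mathbb{R}$ below the exit threshold, $E-E_{T_{c'}}<0$, the wavenumber--energy mappings (2) and (4) of section II.A of \cite{Ducru_shadow_Brune_Poles_2020} send $k_{c'}$ onto the purely imaginary axis, $k_{c'}\in\mathrm{i}\mathbb{R}$ (the physical sheet being the one on which $O_{c'}$ decays, i.e. $\Im[k_{c'}]>0$). On that axis $O_{c'}(k_{c'}r)$ is, up to a constant phase, a real function (the decaying Whittaker/regular combination), so $\Im[\psi_{c'}^{*}\partial_r\psi_{c'}] = |A_{c'}|^2\,\Im[(\text{real})(\text{real})']=0$ \emph{exactly}. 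Hence the outgoing current in channel $c'$ vanishes identically --- not merely exponentially, but algebraically --- and therefore $\sigma_{cc'}(E)=0$ for every entrance channel $c$, which is the claim. This holds even on the physical sheet where the amplitude $A_{c'}\propto 1/O_{c'}(\rho_{c'})$ of Theorem \ref{theo::evanescence of sub-threshold transmission matrix} appears to blow up: the current is the product of that divergent factor with an identically zero Wronskian-type bracket, and the bracket wins. For completeness I would note that when the \emph{entrance} channel is itself closed, $E<E_{T_c}$, there is no physical beam, the incident flux is likewise zero, the ratio is $0/0$, and following Bloch one simply does not define $\sigma_{cc'}$ there, consistently with the reduction of $\boldsymbol{U}$ to open channels.

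To finish I would check two consistency points. (i) The same conclusion follows from the R-matrix formulas: using $P_c=\Im[L_c]$ from (\ref{eq:: L = S + iP}) together with the analytic continuation of $L_{c'}$, one gets $P_{c'}(E)=0$ for real sub-threshold $E$ (for an s-wave, $L_{c'}=\mathrm{i}\rho_{c'}=-|\rho_{c'}|\in\mathbb{R}$), and the partial cross section is proportional to the exit penetrability $P_{c'}$ (the $\Gamma_{c'}=2P_{c'}\gamma_{c'}^2$ factor), so it vanishes --- one must show this is not cancelled by the diverging $1/|O_{c'}|^2$, which is exactly where the current argument is needed to resolve the ambiguity in continuing $|T_{cc'}|^2$. (ii) The continuation is well posed: $k_c(E)$, $g_{J^\pi_c}$, $\boldsymbol{\Theta}$, $\boldsymbol{F}$ and $\boldsymbol{H}_\pm$ are all analytic away from the branch points $\{E_{T_c}\}$, so evaluating the continued current-ratio formula at a real sub-threshold energy is smooth (unlike the discontinuous Lane \& Thomas force-closure) and gives zero. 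The main obstacle is the ``$0\times\infty$'' point flagged after Theorem \ref{theo::evanescence of sub-threshold transmission matrix}: proving rigorously that the vanishing of the current bracket dominates the growth of the amplitude on the physical sheet for general $\ell$ and in the Coulomb case. The cleanest route is to use the conjugacy relations (\ref{eq:Conjugacy for O and I}), writing $|T_{cc'}|^2 = T_{cc'}(k)\,[T_{cc'}(k^*)]^{*}$ and reducing $O_{c'}(\rho_{c'})^{-1}\,[O_{c'}(\rho_{c'}^{*})^{-1}]^{*}=[O_{c'}(\rho_{c'})\,I_{c'}(\rho_{c'})]^{-1}$ before extracting the penetrability factor $P_{c'}$ that carries the zero.
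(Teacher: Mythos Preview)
Your proposal is correct and takes essentially the same approach as the paper: both argue from the physical definition of the cross section as a ratio of probability currents $\vec{j}_\psi \propto \Im[\psi^*\vec{\nabla}\psi]$, observe that for massive particles the sub-threshold wavenumber is purely imaginary under mappings (2) or (4) of \cite{Ducru_shadow_Brune_Poles_2020}, and conclude that the $\Re[k_{c'}]$ factor in the current then vanishes exactly. The paper's version is slightly more schematic --- it computes the currents first for real $k$ (obtaining $j\propto\Re[k]$ as the real part of an analytic function), then continues --- whereas you compute directly at complex $k$ and additionally invoke the reality of $O_{c'}$ on the imaginary axis and the penetrability/conjugacy consistency checks; these are useful elaborations (in particular your attempt to resolve the $0\times\infty$ ambiguity the paper leaves open after Theorem~\ref{theo::evanescence of sub-threshold transmission matrix}) but do not change the underlying argument.
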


\begin{proof}
The proof is based on the fact that massive particles are subject to mappings (2) or (4) section II.A of \cite{Ducru_shadow_Brune_Poles_2020}, which entail wavenumbers are real above threshold, and purely imaginary sub-threshold: $\forall E < E_{T_c} , k_c \in \mathrm{i}\mathbb{R}$.
Let $\psi(\vec{r})$ be a general wave function, so that the probability density is $\left| \psi \right|^2(\vec{r})$.

For a massive particle subject to a real potential, the de Broglie non-relativistic Schr\"odinger equation applies, so that writing the conservation of probability on a control volume, and applying the Green-Ostrogradsky theorem, will yield the following expression for the probability current vector: 
\begin{equation}
\begin{IEEEeqnarraybox}[][c]{rcl}
\vec{j}_\psi  & \; \triangleq \, \frac{\hbar}{\mu}\Im\left[ \psi^* \vec{\nabla} \psi \right]
\IEEEstrut\end{IEEEeqnarraybox}
\label{eq: probability current vector}
\end{equation}
where $\mu$ is the reduced mass of the two-particle system (c.f. equations (2.10) and (2.12) section VIII.2.A, p.312 in \cite{Blatt_and_Weisskopf_Theoretical_Nuclear_Physics_1952}).
By definition, the differential cross section $\frac{\mathrm{d}\sigma_{cc'}}{\mathrm{d}\Omega}$ is the ratio of the outgoing current in channel $c'$ by the incoming current from channel $c$, by unit of solid angle $\mathrm{d}\Omega$.

Consider the incoming channel $c$, classically modeled as a plane wave, $\psi_c(\vec{r}_c) \propto \mathrm{e}^{\mathrm{i}\vec{k}_c\cdot\vec{r}_c}$; and the outgoing channel $c'$, classically modeled as radial wave, $\psi_{c'}(r_{c'}) \propto \frac{ \mathrm{e}^{\mathrm{i}k_{c'} r_{c'}}}{r_{c'}}$.
For arbitrary complex wavenumbers, $k_c , k_{c'}  \in \mathbb{C}$, definition (\ref{eq: probability current vector}) will yield the following probability currents respectively:
\begin{equation}
\begin{IEEEeqnarraybox}[][c]{rcl}
\vec{j}_{\psi_c}  & \propto & \frac{\hbar}{\mu}\Im\left[\mathrm{i}\vec{k}_c \mathrm{e}^{-2\Im\left[\vec{k}_c\right]\cdot\vec{r}_c} \right]   \\ \vec{j}_{\psi_{c'}} & \propto & \frac{\hbar}{\mu}\Im\left[\left(\mathrm{i}k_{c'} - \frac{1}{r_{c'}} \right)\frac{ \mathrm{e}^{-2\Im\left[k_{c'}\right]\cdot r_c}}{r_{c'}^2} \right]\vec{e}_{r}  
\IEEEstrut\end{IEEEeqnarraybox}
\label{eq: probability current vectors for complex k}
\end{equation}
One will note these expressions are not the imaginary part of an analytic function in the wavenumber, because of the imaginary part $\Im\left[k_c\right]$. 
If however we look at real wavenumbers $k_c , k_{c'}  \in \mathbb{R}$, that is at above-threshold energies $E \geq E_{T_c}$, the probability currents (\ref{eq: probability current vectors for complex k}) readily simplify to:
\begin{equation}
\begin{IEEEeqnarraybox}[][c]{rcl}
\vec{j}_{\psi_c}   \propto  \frac{\hbar}{\mu}\Re\left[\vec{k}_c \right]   & \quad , \quad  & \vec{j}_{\psi_{c'}}  \propto  \frac{\hbar}{\mu}\Re\left[k_{c'} \right]\vec{e}_{r}  
\IEEEstrut\end{IEEEeqnarraybox}
\label{eq: probability current vectors for real k}
\end{equation}
These expressions are the real part of analytic functions of the wavenumbers.
If we analytically continue them to complex wavenumbers, and consider the cases of sub-threshold reactions $E < E_{T_c}$, for either the incoming or the outgoing channel, the wavenumbers are then exactly imaginary, $k_c , k_{c'}  \in \mathrm{i}\mathbb{R}$.
The real parts in (\ref{eq: probability current vectors for real k}) become zero, thereby annulling the cross section $\sigma_{c,c'}(E)$.
This means that for massive particles (not massless photons) subject to real potentials, analytic continuation of the probability currents expressions above threshold (\ref{eq: probability current vectors for real k}) will automatically close the sub-threshold channels.
This is true regardless of whether the transmission matrix (\ref{eq:Lane and Thomas Transmission matrix definition}) is or is not evanescent below threshold. 
This constitutes another major argument in favor of analytic continuation of open-channels expressions to describe the closed channels. 
\end{proof}

Note that for photon channels, the derivations for the probability current vector (\ref{eq: probability current vector}) do not stand, and the wavenumber $k_c$ is not imaginary below threshold using mapping (2) nor using the relativistic-correction (4) of section II.A of \cite{Ducru_shadow_Brune_Poles_2020}. 
The fundamental reason why photon treatment is not straightforward is that R-matrix theory was constructed on the semi-classical formalism of quantum physics, with wavefunctions instead of state vectors. Though not incorrect, this wave function approach of quantum mechanics does not translate directly for photons, though some work has been done to describe photons through wave functions \cite{Bialynicki-Birula_Photon_wave_function_1994, Bialynicki-Birula_Photon_wave_function_1996}. 
This is another open area in the field of R-matrix theory, beyond the scope of this article.

\section{\label{sec:Conclusion}Conclusion}

In this article, we conduct a study and establish novel properties of the Siegert-Humblet pole expansion in radioactive states, which we show links R-matrix theory to the Humblet-Rosenfeld pole expansions of the scattering matrix.
The Siegert-Humblet parameters are the poles $\big\{\mathcal{E}_j\big\}$ and residue widths $\left\{r_{j,c}\right\}$ of the Kapur-Peierls $\boldsymbol{R}_{L}$ operator (\ref{eq: def Kapur-Peierls operator}). They are $N_L \geq N_\lambda$ complex, (almost always) simple poles, that reside on the Riemann surface of mapping (\ref{eq:rho_c(E) mapping}), comprised of $2^{N_c}$ branches, and for which one must specify on which sheet they reside, as shown in theorem \ref{theo::Siegert-Humblet Radioactive Poles}. They are intimately interwoven in that not any set of complex parameters is physically acceptable: they must be solution to (\ref{eq:R_L radioactive problem}).
Both $\big\{\mathcal{E}_j\big\}$ and $\big\{r_{j,c}\big\}$ are invariant to changes in boundary conditions $\left\{B_c\right\}$. Furthermore, $\big\{\mathcal{E}_j\big\}$ is invariant to a change in channel radii $\left\{a_c\right\}$, and we established in theorem \ref{theo:: Radioactive parameters transformation under change of channel radius} a simple way of transforming the radioactive widths $\big\{r_{j,c}\big\}$ under a change of channel radius $a_c$.
Since the Siegert-Humblet parameters are the poles and residues of the local Mittag-Leffler expansion (\ref{eq::RL Mittag Leffler}) of the Kapur-Peierls operator $\boldsymbol{R}_{L}$, the set of Siegert-Humblet parameters $\Big\{ E_{T_c}, a_c, \mathcal{E}_j, r_{i,c} \Big\}$ is insufficient to entirely determine the energy behavior of the scattering matrix $\boldsymbol{U}$ through (\ref{eq::u_j scattering residue width}) and (\ref{eq::U Mittag Leffler}). The latter expressions directly link the R-matrix parameters to the poles and residues of the Humblet-Rosenfeld expansion of the scattering matrix, and can be complemented by local coefficients $\left\{\boldsymbol{s}_n\right\}_{\mathcal{W}(E)}$ of the entire part (\ref{eq::Hol_U expansion}), to untangle the energy dependence of the scattering matrix into a simple sum of poles and residues (\ref{eq::U Mittag Leffler}), which is the full Humblet-Rosenfeld expansion of the scattering matrix. Theorem \ref{theo: Scattering matrix poles are the Siegert-Humblet radioactive poles} establishes that under analytic continuation of the R-matrix operators, the poles of the Kapur-Peierls $\boldsymbol{R}_{L}$ operator (i.e. the Siegert-Humblet radioactive poles) are exactly the poles of the scattering matrix $\boldsymbol{U}$.

The latter is one of three results we advance to argue that, contrary to the legacy force-closure of sub-threshold channels presented in Lane \& Thomas \cite{Lane_and_Thomas_1958}, R-matrix operators ought to be analytically continued for complex momenta.
Such analytic continuation is necessary to cancel the spurious poles which would otherwise be introduced by the outgoing wavefunctions, as we establish in theorem \ref{theo: Scattering matrix poles are the Siegert-Humblet radioactive poles}.
Moreover, we show in theorem \ref{theo::satisfied generalized unitarity condition} that the analytic continuation of R-matrix operators in scattering matrix parametrization (\ref{eq:U expression}) enforces Eden \& Taylor's generalized unitarity condition (\ref{eq:: Eden and Taylor Generalized unitarity}). 
Finally, we argue in theorems \ref{theo::evanescence of sub-threshold transmission matrix} and \ref{theo::Analytic continuation annuls sub-threshold cross sections} that analytic continuation will still close cross sections for massive particle channels (not massless photon channels) below threshold.

We thus conclude that the R-matrix community should henceforth come to consensus and agree to set the analytic continuation as the standard way of computing R-matrix operators (in particular the shift $S_c(E)$ and penetration $P_c(E)$ functions) when performing nuclear data evaluations.

\begin{acknowledgments}
This work was partly funded by the Los Alamos National Laboratory (research position in T-2 division during summer 2017), as well as by the Consortium for Advanced Simulation of Light Water Reactors (CASL), an Energy Innovation Hub for Modeling and Simulation of Nuclear Reactors under U.S. Department of Energy Contract No. DE-AC05-00OR22725.

The first author is profoundly grateful to Prof. Semyon Dyatlov, from MIT and U.C. Berkeley, for his critical contribution in pointing to us towards the Gohberg-Sigal theory and providing important insights within it.
He would also like to thank Haile Owusu for his time in discussing Hamiltonian degeneracy; as well as Gr\'egoire Allaire for his important guidance on the Fredholm alternative and the Perron-Frobenius theorem. 
Finally, this work could not have come to fruition without the leadership of Vladimir Sobes at Oak Ridge National Laboratory during the summers of 2015, 2016, and 2018, and to Los Alamos National Laboratory Gerald Hale (author of theorem \ref{theo::satisfied generalized unitarity condition}) and Mark Paris, who organized the R-matrix summer workshops (2016 and 2021), which have been platforms to spark and share these findings.
\end{acknowledgments}

\bibliography{Scattering_matrix_pole_expansions_for_complex_wavenumbers_in_R-matrix_theory}
\end{document}